\titlespacing*{\section}{0pt}{*2}{*1}
\titlespacing*{\subsection}{0pt}{*2}{*1}
\newtheorem{thm}{Theorem}
\newtheorem{lem}{Lemma}
\newtheorem{prop}{Proposition}
\DeclareMathOperator*\argmin{\arg\min}
\DeclareMathOperator\sgn{sgn}
\DeclareMathOperator\supp{supp}
\DeclareMathOperator\CV{CV}
\def\br{\mathbf{r}}
\def\bw{\mathbf{w}}
\def\bx{\mathbf{x}}
\def\by{\mathbf{y}}
\def\bz{\mathbf{z}}
\def\bA{\mathbf{A}}
\def\bC{\mathbf{C}}
\def\bE{\mathbf{E}}
\def\bX{\mathbf{X}}
\def\bZ{\mathbf{Z}}
\def\balpha{\boldsymbol{\alpha}}
\def\bbeta{\boldsymbol{\beta}}
\def\bgamma{\boldsymbol{\gamma}}
\def\bdelta{\boldsymbol{\delta}}
\def\bfeta{\boldsymbol{\eta}}
\def\ve{\varepsilon}
\def\bve{\boldsymbol{\varepsilon}}
\def\btheta{\boldsymbol{\theta}}
\def\bGamma{\boldsymbol{\Gamma}}
\def\bSigma{\boldsymbol{\Sigma}}
\def\bzero{\mathbf{0}}
\def\cB{\mathcal{B}}
\def\TP{\mathrm{TP}}
\def\TN{\mathrm{TN}}
\def\FP{\mathrm{FP}}
\def\FN{\mathrm{FN}}
\let\what\widehat
\def\mid{\,|\,}
\def\pheq{\mathrel{\phantom{=}}}
\begin{document}
{\setstretch{1.17}
\title{Regularization Methods for High-Dimensional Instrumental Variables Regression With an Application to Genetical Genomics}
\author{Wei Lin, Rui Feng, and Hongzhe Li}
\date{}
\maketitle
\footnotetext{Wei Lin is Postdoctoral Researcher (E-mail: \emph{weilin1@mail.med.upenn.edu}), Rui Feng is Assistant Professor (E-mail: \emph{ruifeng@mail.med.upenn.edu}), and Hongzhe Li (E-mail: \emph{hongzhe@upenn.edu}) is Professor, Department of Biostatistics and Epidemiology, Perelman School of Medicine, University of Pennsylvania, Philadelphia, PA 19104. This research was supported in part by NIH grants CA127334, GM097505, and GM088566. The authors thank the Co-Editor, an Associate Editor, and three referees for helpful comments that have led to substantial improvement of the article.}

\begin{abstract}
In genetical genomics studies, it is important to jointly analyze gene expression data and genetic variants in exploring their associations with complex traits, where the dimensionality of gene expressions and genetic variants can both be much larger than the sample size. Motivated by such modern applications, we consider the problem of variable selection and estimation in high-dimensional sparse instrumental variables models. To overcome the difficulty of high dimensionality and unknown optimal instruments, we propose a two-stage regularization framework for identifying and estimating important covariate effects while selecting and estimating optimal instruments. The methodology extends the classical two-stage least squares estimator to high dimensions by exploiting sparsity using sparsity-inducing penalty functions in both stages. The resulting procedure is efficiently implemented by coordinate descent optimization. For the representative $L_1$ regularization and a class of concave regularization methods, we establish estimation, prediction, and model selection properties of the two-stage regularized estimators in the high-dimensional setting where the dimensionality of covariates and instruments are both allowed to grow exponentially with the sample size. The practical performance of the proposed method is evaluated by simulation studies and its usefulness is illustrated by an analysis of mouse obesity data. Supplementary materials for this article are available online.\medskip

\noindent\emph{Running title:} Regularization Methods for High-Dimensional Instrumental Variables Regression\medskip

\noindent\emph{Key words:} Causal inference; Confounding; Endogeneity; Sparse regression; Two-stage least squares; Variable selection.
\end{abstract}}\clearpage

\section{Introduction}
Genome-wide studies have been widely conducted to search tens of thousands of gene expressions or hundreds of thousands of single nucleotide polymorphisms (SNPs) to detect associations with complex traits. By measuring and analyzing gene expressions and genetic variants on the same subjects, genetical genomics studies provide an integrative and powerful approach to addressing fundamental questions in genetics and genomics at the functional level. In these studies, gene expression levels are viewed as quantitative traits that are subject to genetic analysis for identifying expression quantitative trait loci (eQTLs), in order to understand the genetic architecture of gene expression variation. The increasing availability of high-throughput genetical genomics data sets opens up the possibility of jointly analyzing gene expression data and genetic variants in exploring their associations with complex traits, with the goal of identifying key genes and genetic markers that are potentially causal for complex human diseases such as obesity, heart disease, and cancer \citep{Emil:Thor:Gudm:Zhan:Leon:gene:2008}.

Although in the past decade gene expression profiling has led to the discovery of many gene signatures that are highly predictive for clinical outcomes, the effort of using these findings to dissect the genetics of complex traits and diseases is often compromised by the critical issue of confounding. It is well known that many factors, such as unmeasured variables, experimental conditions, and environmental perturbations, may exert pronounced influences on gene expression levels, which may in turn induce spurious associations and/or distort true associations of gene expressions with the response of interest \citep{Leek:Stor:capt:2007,Fusi:Steg:Lawr:join:2012}. Moreover, due to the difficulty of high dimensionality, empirical studies are mostly based on marginal models, which are especially prone to variability caused by pleiotropic effects and dependence among genes. Ignoring these confounding issues tends to produce results that are both biologically less interpretable and less reproducible across independent studies.

Instrumental variables (IV) methods provide a practical and promising approach to control for confounding in genetical genomics studies, with genetic variants playing the role of instruments. This approach exploits the reasonable assumption that the genotype is assigned randomly, given the parents' genes, at meiosis and independently of possible confounding factors, and affects a clinical phenotype only indirectly through some intermediate phenotypes. In observational epidemiology, Mendelian randomization has been proposed as a class of methods for using genetic variants as instruments to assess the causal effect of a modifiable phenotype or exposure on a disease outcome; see, for example, \citet{Lawl:Harb:Ster:Timp:Dave:mend:2008} for a review. The primary scenario considered in this context, however, involves only one exposure variable and requires the existence of a genetic variant whose relationship with the exposure has been well established. Thus, the methodology intended for Mendelian randomization is typically not applicable to genetical genomics studies, where the number of expression phenotypes is exceedingly large and the genetic architecture of each phenotype may be complex and unknown.

IV models and methods have been extensively studied in the econometrics literature, where the problem is often cast in the framework of simultaneous equation models \citep{Heck:dumm:1978}. It has been shown that classical IV estimators such as the two-stage least squares (2SLS) estimator and the limited information maximum likelihood (LIML) estimator are consistent only when the number of instruments increases slowly with the sample size \citep{Chao:Swan:cons:2005,Hans:Haus:Newe:esti:2008}. Recent developments have introduced regularization methods to mitigate the overfitting problem in high-dimensional feature space by exploiting the sparsity of important covariates, thereby improving the performance of IV estimators substantially. \citet{Cane:lass:2009} considered penalized generalized method of moments (GMM) with the bridge penalty for variable selection and estimation in the classical setting of fixed dimensionality. \citet{Gaut:Tsyb:high:2011} developed a Dantzig selector--type procedure to select important covariates and estimate the noise level simultaneously in high-dimensional IV models where the dimensionality may be much larger than the sample size. Under the assumption that the important covariates are uncorrelated with the regression error, \citet{Fan:Liao:endo:2012} proposed a penalized focused GMM method based on a nonsmooth loss function to perform variable selection and achieve oracle properties in high dimensions. All the aforementioned methods, however, do not exploit the sparsity of instruments and hence are still facing the dimensionality curse of many instruments. Another active line of research in the econometrics literature has been concerned with the use of regularization and shrinkage methods for estimating optimal instruments in the context of estimating a low-dimensional parameter; see, for example, \citet{Okui:inst:2011} and \citet{Carr:regu:2012}. Of particular interest is the recent work of \citet{Bell:Chen:Cher:Hans:spar:2012}, where Lasso-based methods were applied to form first-stage predictions and estimate optimal instruments in an IV model with many instruments but itself of fixed dimensionality.

In this article, we focus on the application of high-dimensional sparse IV models to genetical genomics, where we are interested in associating gene expression data with a complex trait to identify potentially causal genes by using genetic variants as instruments. Motivated by this important application, we propose a two-stage regularization (2SR) methodology for identifying and estimating important covariate effects while selecting and estimating optimal instruments. Our approach merges the two independent lines of research mentioned above and provides a regularization framework for IV models that accommodate covariates and instruments both of high dimensionality. Specifically, the proposed procedure consists of two stages: In the first stage the covariates are regressed on the instruments in a regularized multivariate regression model and predictions are obtained, and in the second stage the response of interest is regressed on the first-stage predictions in a regularized regression model to perform final variable selection and estimation. In each stage, a sparsity-inducing penalty function is employed to yield desirable statistical properties and practical performance. The methodology can be viewed a high-dimensional extension of the 2SLS method, allowing the use of regularization methods to address the high-dimensional challenge in both stages.

Several key features make the proposed methodology especially appealing for the kind of applications we consider in this article. First, unlike marginal regression models commonly used in empirical studies that analyze a few variables at a time, our method allows for the joint modeling and inference of high-dimensional genetical genomics data. In view of the fact that many genes interact with each other and contribute together to a complex trait or disease, joint modeling is crucial for correcting bias and controlling false positives due to possible confounding. Second, our method requires neither a specification of a small set of important instruments nor an importance ranking among the instruments; instead, we consider the estimation of optimal instruments as a variable selection problem and allow the procedure to choose important instruments based on the data. Third, the proposed implementation by coordinate descent optimization is computationally very efficient and has provable convergence properties, therefore bypassing the computational obstacles faced by traditional model selection methods. Finally, we rigorously justify our method for the representative $L_1$ regularization and a class of concave regularization methods in the high-dimensional setting where the dimensionality of covariates and instruments are both allowed to grow exponentially with the sample size. Through the theoretical analysis, we explicate the impact of dimensionality and the role of regularization, and provide strong performance guarantees for the proposed method.

The remainder of this article is organized as follows. Section 2 introduces the high-dimensional sparse IV model. The 2SR methodology and implementation are presented in Section 3. Theoretical properties of the regularized estimators are investigated in Section 4. We illustrate our method by simulation studies in Section 5 and an analysis of mouse obesity data in Section 6. We conclude with some discussion in Section 7. Proofs are relegated to the Appendix and Supplementary Material.

\section{Sparse Instrumental Variables Models}
Suppose we have a quantitative trait or clinical phenotype $y$, a $p$-vector of gene expression levels $\bx$, and a $q$-vector of numerically coded genotypes $\bz$. In reality, there may be a sufficient set of unobserved confounding phenotypes $\bw$ that act as proxies for the long-term effects of environmental exposures and/or the state of the microenvironment of the cells or tissues within which the biological processes occur. These phenotypes are likely to have strong influences on gene expression levels while contributing substantially to the clinical phenotype. Figure \ref{fig:diagram}(a) illustrates the confounding between $\bx$ and $y$ with an example of six variables. If an ordinary regression analysis is to be applied, the effects of $x_1$ and $x_2$ on $y$ would be seriously confounded by $w$, resulting in a spurious association or effect modification.

\begin{figure}\centering
\subfloat[]{\includegraphics[width=.32\textwidth]{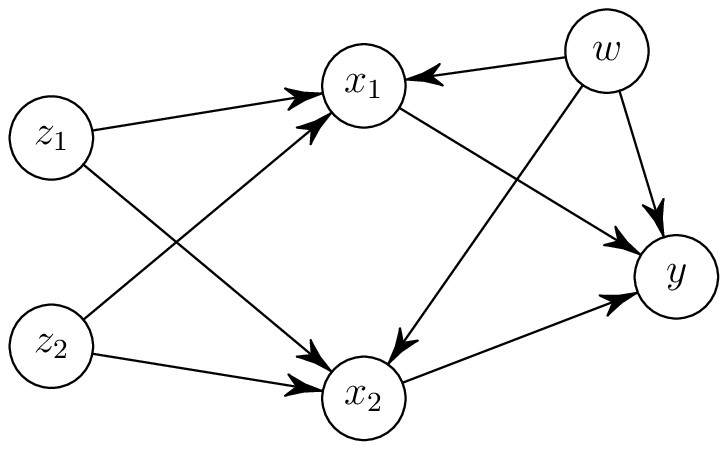}}\hfill
\subfloat[]{\includegraphics[width=.32\textwidth]{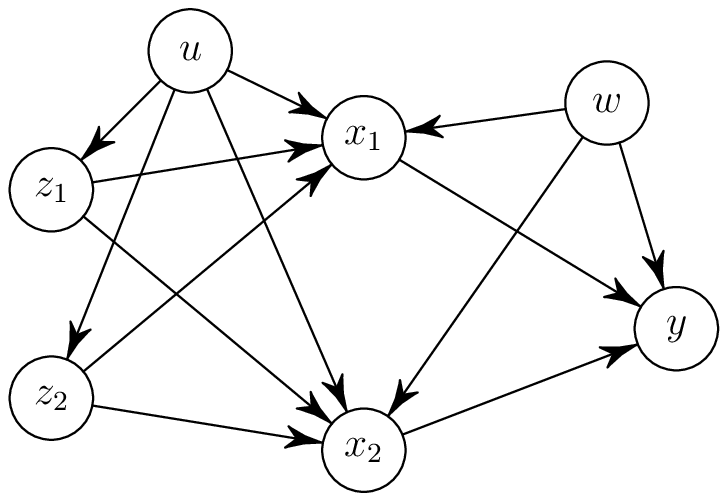}}\hfill
\subfloat[]{\includegraphics[width=.32\textwidth]{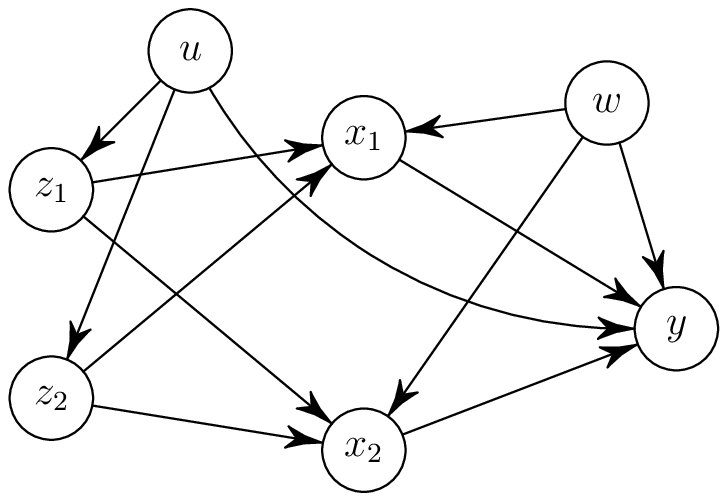}}
\caption{Causal diagrams showing the relationships between two genotypes $z_1$ and $z_2$, two gene expression levels $x_1$ and $x_2$, a clinical phenotype $y$, an unobserved phenotype $w$ that confounds the associations between gene expression levels and the clinical phenotype, and an unobserved variable $u$ representing possibly present population substructure. The population substructure (a) is not present, (b) affects genotypes and gene expression levels, or (c) affects genotypes and the clinical phenotype.}\label{fig:diagram}
\end{figure}

One way of controlling for the confounding due to $\bw$ is through the use of the genotype $\bz$ as instruments. In order for $\bz$ to be valid instruments, the following conditions must be satisfied \citep{Dide:Meng:Shee:assu:2010}:
\begin{compactenum}[\indent 1.]
\item The genotype $\bz$ is (marginally) independent of the confounding phenotype $\bw$ between $\bx$ and $y$;
\item The genotype $\bz$ is not (marginally) independent of the intermediate phenotype $\bx$;
\item Conditionally on $\bx$ and $\bw$, the genotype $\bz$ and the clinical phenotype $y$ are independent.
\end{compactenum}

The above conditions are not easily testable from the observed data, but can often be justified on the basis of plausible biological assumptions. Condition 1 is ensured by the usual assumption that the genotype is assigned at meiosis randomly, given the parents' genes, and independently of any confounding phenotype. Condition 2 requires that the genetic variants be reliably associated with the gene expression levels, which is often demonstrated by \emph{cis}-eQTLs with strong regulatory signals. Condition 3 requires that the genetic variants have no direct effects on the clinical phenotype and can affect the latter only indirectly through the gene expression levels. Owing to the large pool of gene expressions included in genetical genomics studies, the possibility of a strong indirect effect is greatly reduced and hence this condition is also mild and tends to be satisfied in practice.

We discuss here more on these assumptions for genetical genomics data and possible biological complications.  Population stratification is a major concern in genome-wide association studies, where the presence of subpopulations with different allele frequencies and different distributions of quantitative traits or risks of disease can lead to spurious associations \citep{Lin:Zeng:corr:2011}. Two typical scenarios for the impact of population stratification are illustrated in Figure \ref{fig:diagram}(b) and (c). In Figure \ref{fig:diagram}(b), all three conditions for valid IVs are still satisfied, although the population substructure, represented by an unobserved variable $u$, may strengthen or weaken the associations between the genotype $\bz$ and gene expression levels $\bx$ required by Condition 2. In Figure \ref{fig:diagram}(c), Condition 3 is violated because conditioning on $\bx$ and $w$ alone is insufficient to guarantee the independence of the genotype $\bz$ and the clinical phenotype $y$. To deal with possible population stratification, one can regress out the principal components calculated from the genotype data in clinical phenotype regression and gene expression regressions. We also require that the tissue where the gene expressions are measured be relevant to the clinical phenotype. Condition 3 assumes that the genetic variants have no direct effects on the clinical phenotype but manifest their effects through expressions in the relevant tissue. Using a phenotype-irrelevant tissue can potentially lead to violation of Condition 3. It is important, however, to note that strong instruments, a majority of which are most likely \emph{cis}-eQTLs, play a predominant role in our methodology. Recent studies have revealed that these \emph{cis}-eQTLs and their effect sizes are highly conserved across human tissues and populations \citep{Gori:tiss:2012,Stra:Mont:Dima:Part:Steg:patt:2012}. This fact helps to lessen the risks of potential assumption violations, although great care should be exercised in justifying the assumptions on a case-by-case basis. See, for example, \citet{Dide:Shee:mend:2007} and \citet{Lawl:Harb:Ster:Timp:Dave:mend:2008} for more discussion on the complications in Mendelian randomization studies.

Suppose we have $n$ independent observations of $(y,\bx,\bz)$. Denote by $\by$, $\bX$, and $\bZ$, respectively, the $n\times1$ response vector, the $n\times p$ covariate matrix, and the $n\times q$ genotype matrix. Using the genotypes as instruments, we consider the following linear IV model for the joint modeling of the data $(\by,\bX,\bZ)$:
\begin{align}
\begin{split}\label{eq:ivmodel}
\by&=\bX\bbeta_0+\bfeta,\\
\bX&=\bZ\bGamma_0+\bE,
\end{split}
\end{align}
where $\bbeta_0$ and $\bGamma_0$ are a $p\times1$ vector and a $q\times p$ matrix, respectively, of regression coefficients, and $\bfeta=(\eta_1,\dots,\eta_n)^T$ and $\bE=(\bve_1,\dots,\bve_n)^T$ are an $n\times1$ vector and an $n\times p$ matrix, respectively, of random errors such that the $(p+1)$-vector $(\bve_i^T,\eta_i)$ is multivariate normal conditional on $\bZ$ with mean zero and covariance matrix $\bSigma=(\sigma_{jk})$. We write $\sigma_{jj}=\sigma_j^2$. Without loss of generality, we assume that each variable is centered about zero so that no intercept terms appear in \eqref{eq:ivmodel}, and that each column of $\bZ$ is standardized to have $L_2$ norm $\sqrt{n}$. We emphasize that $\bve_i$ and $\eta_i$ may be correlated because of the arbitrary covariance structure. In contrast to the usual linear model regressing $\by$ on $\bX$, model \eqref{eq:ivmodel} does not require that the covariate $\bX$ and the error $\bfeta$ be uncorrelated, thus substantially relaxing the assumptions of ordinary regression models and being more appealing in data analysis.

We are interested in making inference for the IV model \eqref{eq:ivmodel} in the high-dimensional setting where the dimensions $p$ and $q$ can both be much larger than the sample size $n$. In addition to selecting and estimating important covariate effects, since the identities of optimal instruments are unknown, we also regard the identification and estimation of optimal instruments as a variable selection and estimation problem. As is typical in high-dimensional sparse modeling, we assume that model \eqref{eq:ivmodel} is sparse in the sense that only a small subset of the regression coefficients in $\bbeta_0$ and $\bGamma_0$ are nonzero. Our goal is, therefore, to identify and estimate the nonzero coefficients in both $\bbeta_0$ and $\bGamma_0$.

\section{Regularization Methods and Implementation}
In this section, we first study the suboptimality of penalized least squares (PLS) estimators for the causal parameter $\bbeta_0$. We then propose the 2SR methodology and present an efficient coordinate descent algorithm for implementation. Finally, strategies for tuning parameter selection are discussed.

\subsection{Suboptimality of Penalized Least Squares}\label{sec:subopt}
In the classical setting where no regularization is needed, it is well known that the ordinary least squares estimator is inconsistent in the presence of endogeneity, that is, when some of the covariates are correlated with the error term. In high dimensions, without using the instruments, a direct application of one-stage regularization leads to the PLS estimator
\[
\what\bbeta^*=\argmin_{\bbeta\in\mathbb{R}^p}\left\{\frac{1}{2n}\|\by-\bX\bbeta\|_2^2+\sum_{j=1}^p p_{\mu}(|\beta_j|)\right\},
\]
where $\beta_j$ is the $j$th component of $\bbeta$ and $p_{\mu}(\cdot)$ is a penalty function that depends on a tuning parameter $\mu>0$. With appropriately chosen penalty functions, the PLS estimator has been shown to enjoy superior performance and theoretical properties; see, for example, \citet{Fan:Lv:sele:2010} for a review. When the data are generated from the linear IV model \eqref{eq:ivmodel}, however, the usual linear model that assumes the covariates to be uncorrelated with the error term is misspecified, and the PLS estimator $\what\bbeta^*$ is no longer a reasonable estimator of $\bbeta_0$. In fact, theoretical results in \citet{Lu:Gold:Fine:on:2012} and \citet{Lv:Liu:mode:2013} on misspecified generalized linear models imply that, under some regularity conditions, the PLS estimator $\what\bbeta^*$ is consistent for the least false parameter $\bbeta^*$ that minimizes the Kullback--Leibler divergence from the true model, which satisfies the equation
\begin{equation}\label{eq:bet_star}
\bX^T(\bX\bbeta_0+\bfeta_0-\bX\bbeta^*)=\bzero,
\end{equation}
where $\bfeta_0=E(\bfeta\mid\bX)$. The following proposition shows that there is a nonnegligible gap between $\bbeta^*$ and the true parameter $\bbeta_0$.

\begin{prop}[Gap between $\bbeta^*$ and $\bbeta_0$]\label{prop:gap}
If $\|n^{-1}\bX^T\bfeta_0\|_{\infty}\ne o_P(1)$ and $\max_{1\le j\le p}\|\bx_j\|_2=O_P(\sqrt{n})$, where $\bx_j$ is the $j$th column of $\bX$, then $\|\bbeta^*-\bbeta_0\|_1\ne o_P(1)$.
\end{prop}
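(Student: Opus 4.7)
The plan is to start from the normal equation \eqref{eq:bet_star} characterizing $\bbeta^*$ and rearrange it to isolate $\bbeta^*-\bbeta_0$ on one side, then pass to $L_\infty$/$L_1$ norms to get the contradiction.

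First I would rewrite \eqref{eq:bet_star} as $\bX^T\bX(\bbeta^*-\bbeta_0)=\bX^T\bfeta_0$, divide by $n$, and take the $L_\infty$ norm:
\[
\left\|\frac{1}{n}\bX^T\bX(\bbeta^*-\bbeta_0)\right\|_\infty=\left\|\frac{1}{n}\bX^T\bfeta_0\right\|_\infty.
\]
The right-hand side is, by hypothesis, not $o_P(1)$, so I only need to bound the left-hand side in terms of $\|\bbeta^*-\bbeta_0\|_1$.

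Next I would apply Hölder's inequality row by row: writing $\bx_j$ for the $j$th column of $\bX$,
\[
\left|\frac{1}{n}\bx_j^T\bX(\bbeta^*-\bbeta_0)\right|\le\max_{1\le k\le p}\left|\frac{1}{n}\bx_j^T\bx_k\right|\cdot\|\bbeta^*-\bbeta_0\|_1,
\]
and then control the pairwise inner products by Cauchy--Schwarz:
\[
\max_{1\le j,k\le p}\left|\frac{1}{n}\bx_j^T\bx_k\right|\le\frac{1}{n}\Bigl(\max_{1\le j\le p}\|\bx_j\|_2\Bigr)^{\!2}=O_P(1),
\]
using the second assumption $\max_j\|\bx_j\|_2=O_P(\sqrt{n})$. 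Taking the maximum over $j$ gives $\|n^{-1}\bX^T\bX(\bbeta^*-\bbeta_0)\|_\infty\le O_P(1)\cdot\|\bbeta^*-\bbeta_0\|_1$.

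Combining the two displays yields $O_P(1)\cdot\|\bbeta^*-\bbeta_0\|_1\ge\|n^{-1}\bX^T\bfeta_0\|_\infty$, and since the right-hand side is not $o_P(1)$, neither is $\|\bbeta^*-\bbeta_0\|_1$. The only subtlety — and what I'd take care to write out precisely — is the contrapositive $o_P(1)$ bookkeeping: if $\|\bbeta^*-\bbeta_0\|_1=o_P(1)$, then multiplied by an $O_P(1)$ term it remains $o_P(1)$, forcing $\|n^{-1}\bX^T\bfeta_0\|_\infty=o_P(1)$, contradicting the first hypothesis. There is no serious technical obstacle here; the argument is essentially a one-line manipulation of the normal equations combined with a uniform bound on the Gram matrix entries.
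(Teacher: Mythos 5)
Your argument is correct and is essentially identical to the paper's proof: both start from the normal equation $\bX^T\bfeta_0=\bX^T\bX(\bbeta^*-\bbeta_0)$, bound $\|n^{-1}\bX^T\bfeta_0\|_\infty$ by $\max_{i,j}n^{-1}\|\bx_i\|_2\|\bx_j\|_2\cdot\|\bbeta^*-\bbeta_0\|_1$ via H\"older and Cauchy--Schwarz, and conclude by contradiction using the $O_P(1)\cdot o_P(1)=o_P(1)$ bookkeeping. No gaps.
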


It is interesting to compare the gap $\|\bbeta^*-\bbeta_0\|_1\ne o_P(1)$ with the minimax optimal rate $O(s\sqrt{(\log p)/n})$ for high-dimensional linear regression in $L_1$ loss over the $L_0$ ball $\cB_0(s)=\{\bbeta\in\mathbb{R}^p\colon\|\bbeta\|_0\le s\}$ \citep{Ye:Zhan:rate:2010,Rask:Wain:Yu:mini:2011}: the former dominates the latter if $s^2\log p=o(n)$. Thus, Proposition \ref{prop:gap} entails that, in the presence of endogeneity, any optimal procedure for estimating $\bbeta^*$, such as the PLS estimator with $L_1$ or other sparsity-inducing penalties, is suboptimal for estimating $\bbeta_0$ as long as $s^2\log p=o(n)$. Moreover, since by definition $\bX\bbeta^*$ is the orthogonal projection of $E(\by\mid\bX)$ onto the column space of $\bX$, the component $\bX\bbeta^*$ is generally closer to the expected response than $\bX\bbeta_0$. This will likely lead to a larger proportion of variance explained for the PLS method. Hence, to assess how well the fitted model predicts the response in model \eqref{eq:ivmodel}, it is more meaningful to compare the predicted values $\bX\what\bbeta$ to the causal component $\bX\bbeta_0$.

\subsection{Two-Stage Regularization}
One standard way of eliminating endogeneity is to replace the covariates by their expectations conditional on the instruments. This idea leads to the classical two-stage least squares (2SLS) method \citep{Ande:orig:2005}, in which the covariates are first regressed on the instruments and the response is then regressed on the first-stage predictions of the covariates. The performance of the 2SLS method deteriorates drastically or become inapplicable, however, as the dimensionality of covariates and instruments increase. We thus propose to apply regularization methods to cope with the high dimensionality in both stages of the 2SLS method, resulting in the following 2SR methodology.

\begin{asparaenum}[\itshape{Stage} 1.]
\item The goal of the first stage is to identify and estimate the nonzero effects of the instruments and obtain the predicted values of the covariates. Let $\|\cdot\|_F$ denote the Frobenius norm of a matrix. The first-stage regularized estimator is defined as
    \begin{equation}\label{eq:opt1}
    \what\bGamma=\argmin_{\bGamma\in\mathbb{R}^{q\times p}}\left\{\frac{1}{2n}\|\bX-\bZ\bGamma\|_F^2+\sum_{i=1}^q\sum_{j=1}^p p_{\lambda_j}(|\gamma_{ij}|)\right\},
    \end{equation}
    where $\gamma_{ij}$ is the $(i,j)$th entry of the matrix $\bGamma$, $p_{\lambda_j}(\cdot)$ is a sparsity-inducing penalty function to be discussed later, and $\lambda_j>0$ are tuning parameters that control the strength of the first-stage regularization. After the estimate $\what\bGamma$ is obtained, the predicted value of $\bX$ is formed by $\what\bX=\bZ\what\bGamma$.
\item Substituting the first-stage prediction $\what\bX$ for $\bX$, we proceed to identify and estimate the nonzero effects of the covariates. The second-stage regularized estimator is defined as
    \begin{equation}\label{eq:opt2}
    \what\bbeta=\argmin_{\bbeta\in\mathbb{R}^p}\left\{\frac{1}{2n}\|\by-\what\bX\bbeta\|_2^2+\sum_{j=1}^p p_{\mu}(|\beta_j|)\right\},
    \end{equation}
    where $\beta_j$ is the $j$th component of $\bbeta$, $p_{\mu}(\cdot)$ is a sparsity-inducing penalty function as before, and $\mu>0$ is a tuning parameter that controls the strength of the second-stage regularization. We thus obtain the pair $(\what\bbeta,\what\bGamma)$ as our final estimator for the regression parameter $(\bbeta_0,\bGamma_0)$ in model \eqref{eq:ivmodel}.
\end{asparaenum}

We consider the following three choices of the penalty function $p_{\lambda}(t)$ for $t\ge0$: (a) the $L_1$ penalty or Lasso \citep{Tibs:regr:1996}, $p_{\lambda}(t)=\lambda t$; (b) the smoothly clipped absolute deviation (SCAD) penalty \citep{Fan:Li:vari:2001},
\[
p_{\lambda}(t)=\lambda\int_0^t\left\{I(\theta\le\lambda)+\frac{(a\lambda-\theta)_+}{(a-1)\lambda}I(\theta>\lambda)\right\}\,d\theta,\qquad a>2;
\]
and (c) the minimax concave penalty (MCP) \citep{Zhan:near:2010},
\[
p_{\lambda}(t)=\int_0^t\frac{(a\lambda-\theta)_+}{a}\,d\theta,\qquad a>1.
\]
The SCAD and MCP penalties have an additional tuning parameter $a$ to control the shape of the function. These penalty functions have been widely used in high-dimensional sparse modeling and their properties are well understood in ordinary regression models \citep[e.g.,][]{Fan:Lv:sele:2010}. Moreover, the fact that these penalties belong to the class of quadratic spline functions on $[0,\infty)$ allows for a closed-form solution to the corresponding penalized least squares problem in each coordinate, leading to very efficient implementation via coordinate descent \citep[e.g.,][]{Mazu:Frie:Hast:spar:2011}.

\subsection{Implementation}
We now present an efficient coordinate descent algorithm for solving the optimization problems \eqref{eq:opt1} and \eqref{eq:opt2} with the Lasso, SCAD, and MCP penalties. We first note that the matrix optimization problem \eqref{eq:opt1} can be decomposed into $p$ penalized least squares problems,
\begin{equation}\label{eq:opt1'}
\what\bgamma_j=\argmin_{\bgamma_j\in\mathbb{R}^q}\left\{\frac{1}{2n}\|\bx_j-\bZ\bgamma_j\|_2^2+\sum_{i=1}^qp_{\lambda_j}(|\gamma_{ij}|)\right\},
\end{equation}
where $\bx_j$ is the $j$th column of the covariate matrix $\bX$ and $\bgamma_j=(\gamma_{1j},\dots,\gamma_{qj})^T$. The univariate solution to the unpenalized least squares problem \eqref{eq:opt1'} is given by $\tilde\gamma_{ij}=n^{-1}(\bx_j-\sum_{k\ne i}\gamma_{kj}\bz_k)^T\bz_i=n^{-1}\br_j^T\bz_i+\gamma_{ij}$, where $\bz_j$ is the $j$th column of the instrument matrix $\bZ$, $\br_j=\bx_j-\sum_{k=1}^q\gamma_{kj}\bz_k$ is the current residual, and we have used the fact $n^{-1}\bz_j^T\bz_j=1$ due to standardization. The penalized univariate solution, then, can be obtained by $\gamma_{ij}=S(\tilde\gamma_{ij};\lambda)$, where $S(\cdot;\lambda)$ is a thresholding operator defined for Lasso, SCAD, and MCP, respectively, as $S_{\mathrm{Lasso}}(t;\lambda)=\sgn(t)(|t|-\lambda)_+$,
\begin{align*}
S_{\mathrm{SCAD}}(t;\lambda)&=
\left\{\!\!\!\rule{0pt}{1.5\baselineskip}\right.\begin{array}{ll}
\sgn(t)(|t|-\lambda)_+,&\text{if }|t|\le2\lambda,\\
\displaystyle\sgn(t)\frac{|t|-\lambda a/(a-1)}{1-1/(a-1)},&\text{if }2\lambda<|t|\le a\lambda,\\
t,&\text{if }|t|>a\lambda,
\end{array}\\
\intertext{and}
S_{\mathrm{MCP}}(t;\lambda)&=
\left\{\!\!\!\rule{0pt}{1.1\baselineskip}\right.\begin{array}{ll}
\displaystyle\sgn(t)\frac{(|t|-\lambda)_+}{1-1/a},&\text{if }|t|\le a\lambda,\\
t,&\text{if }|t|>a\lambda.
\end{array}
\end{align*}
Similarly, if the $j$th column $\what\bx_j$ of the first-stage prediction matrix $\what\bX$ is standardized to have $L_2$ norm $\sqrt{n}$, the penalized univariate solution for the optimization problem \eqref{eq:opt2} is given by $\beta_j=S(\tilde\beta_j;\mu)$, where $\tilde\beta_j=n^{-1}\br^T\what\bx_j+\beta_j$ is the unpenalized univariate solution and $\br=\by-\sum_{k=1}^p\beta_k\what\bx_k$ is the current residual. We summarize the coordinate descent algorithm for computing the 2SR estimator $(\what\bbeta,\what\bGamma)$ in Algorithm \ref{alg:cd}.

\begin{algorithm}
\caption{Coordinate descent for the 2SR estimator}\label{alg:cd}
\begin{algorithmic}
\STATE Initialize: $\bbeta,\bGamma\gets\bzero$ or warm starts, $\lambda_1,\dots,\lambda_p,\mu>0$
\FOR{$j=1,\dots,p$}
\WHILE{$\bgamma_j$ not convergent}
\FOR{$i=1,\dots,q$}
\STATE $\gamma_{ij}\gets S(\tilde\gamma_{ij};\lambda_j)$
\ENDFOR
\ENDWHILE
\ENDFOR
\STATE $\what\bGamma\gets(\bgamma_1,\dots,\bgamma_p)$, $\what\bX\gets\bZ\what\bGamma$
\WHILE{$\bbeta$ not convergent}
\FOR{$j=1,\dots,p$}
\STATE $\beta_j\gets S(\tilde\beta_j,\mu)$
\ENDFOR
\ENDWHILE
\STATE $\what\bbeta\gets\bbeta$
\end{algorithmic}
\end{algorithm}

The convergence of Algorithm \ref{alg:cd} to a local minimum for $\what\bbeta$ and $\what\bGamma$ follows from the convergence properties of coordinate descent algorithms for penalized least squares; see, for example, \citet{Lin:Lv:high:2013}. Since the SCAD and MCP penalties are nonconvex, convergence to a global minimum is not guaranteed in general. In practice, coordinate descent algorithms are often used to produce a solution path over a grid of regularization parameter values, with warm starts from nearby solutions. In this case, the algorithm tends to find a sparse local solution with superior performance.

\subsection{Tuning parameter selection}
The 2SR method has $p+1$ regularization parameters $\lambda_1,\dots,\lambda_p$ and $\mu$ to be tuned. We propose to choose the optimal tuning parameters by $K$-fold cross-validation. Specifically, we define the cross-validation error for $\lambda_j$ and $\mu$ by
\begin{equation}\label{eq:cv_gam}
\CV(\lambda_j)=\frac{1}{K}\sum_{k=1}^K\|\bx_j^{(k)}-\bZ^{(k)}\what\bgamma_j^{(-k)}(\lambda_j)\|_2^2
\end{equation}
and
\begin{equation}\label{eq:cv_lam}
\CV(\mu)=\frac{1}{K}\sum_{k=1}^K\|\by^{(k)}-\what\bX^{(k)}\what\bbeta^{(-k)}(\mu)\|_2^2,
\end{equation}
respectively, where $\bx_j^{(k)}$, $\bZ^{(k)}$, $\by^{(k)}$, and $\what\bX^{(k)}$ are vectors/matrices for the $k$th part of the sample, and $\what\bGamma^{(-k)}(\lambda_j)$ and $\what\bbeta^{(-k)}(\mu)$ are the estimates obtained with the $k$th part removed. In view of the fact that in typical genetical genomics studies, both $p$ and $q$ can be in the thousands, it is necessary to reduce the search dimension of tuning parameters. To this end, we propose to first determine the optimal $\lambda_j$ that minimizes the criterion \eqref{eq:cv_gam}, for $j=1,\dots,p$, and then, with $\lambda_1,\dots,\lambda_p$ fixed, find the optimal $\mu$ that minimizes the criterion \eqref{eq:cv_lam}. The practical performance of this search strategy proves to be very satisfactory.

\section{Theoretical Properties}
In this section, we investigate the theoretical properties of the 2SR estimators. Through our theoretical analysis, we wish to understand (a) the impact of the dimensionality of covariates and instruments as well as other factors on the quality of the regularized estimators, and (b) the role of the two-stage regularization in providing performance guarantees for the regularized estimators, especially for the second-stage estimators. To address (a), we adopt a \emph{nonasymptotic} framework that allows the dimensionality of covariates and instruments to vary freely and thus can both be much larger than the sample size; to address (b), we impose conditions only on the instrument matrix $\bZ$, and treat the covariate matrix $\bX$ and the first-stage prediction $\what\bX$ as \emph{nondeterministic}. The major challenge arises in the characterization of the second-stage estimation, where the ``design matrix'' $\what\bX$ is neither fixed nor a random design sampled from a known distribution. Therefore, existing formulations for the high-dimensional analysis of ordinary regression models are inapplicable to our setting. We also stress that our theoretical analysis is essentially different from the recent developments in sparse IV models. The methods and results developed by \citet{Gaut:Tsyb:high:2011} and \citet{Fan:Liao:endo:2012} involve only one-stage estimation and regularization. The second-stage estimation considered by \citet{Bell:Chen:Cher:Hans:spar:2012} is of fixed dimensionality, which allows them to focus on estimation efficiency based on standard asymptotic analysis. Owing to the complications involved in the analysis of a general penalty, we first consider the representative case of $L_1$ regularization in Section \ref{sec:l1}, which allows us to obtain clean conditions providing important insights. We then present in Section \ref{sec:gen} a generalization of the theory, which is applicable to a much broader class of regularization methods.

\subsection{$L_1$ Regularization}\label{sec:l1}
We begin by introducing some notation. Let $\|\cdot\|_1$ and $\|\cdot\|_{\infty}$ denote the matrix 1-norm and $\infty$-norm, respectively, that is, $\|\bA\|_1=\max_j\sum_i|a_{ij}|$ and $\|\bA\|_{\infty}=\max_i\sum_j|a_{ij}|$ for any matrix $\bA=(a_{ij})$. For any vector $\balpha$, matrix $\bA$, and index sets $I$ and $J$, let $\balpha_J$ denote the subvector formed by the $j$th components of $\balpha$ with $j\in J$, and $\bA_{IJ}$ the submatrix formed with the $(i,j)$th entries of $\bA$ with $i\in I$ and $j\in J$. Also, denote by $J^c$ the complement of $J$ and $|J|$ the number of elements in $J$. Following \citet{Bick:Rito:Tsyb:simu:2009}, define the restricted eigenvalue for an $n\times m$ matrix $\bA$ and $1\le s\le m$ by
\[
\kappa(\bA,s)=\min_{|J|\le s}\min_{\substack{\bdelta\ne\bzero\\\|\bdelta_{J^c}\|_1\le3\|\bdelta_J\|_1}}\frac{\|\bA\bdelta\|_2}{\sqrt{n}\|\bdelta_J\|_2}.
\]
Let $\supp(\balpha)$ denote the support of a vector $\balpha=(\alpha_j)$, that is, $\supp(\balpha)=\{j\colon\alpha_j\ne0\}$. Define the sparsity levels $r=\max_{1\le j\le p}|{\supp(\bgamma_{0j})}|$ and $s=|{\supp(\bbeta_0)}|$, and the first-stage noise level $\sigma_{\max}=\max_{1\le j\le p}\sigma_j$, where $\bgamma_{0j}$ is the $j$th column of $\bGamma_0$. We consider the parameter space with $\|\bGamma_0\|_1\le L$ and $\|\bbeta_0\|_1\le M$ for some constants $L,M>0$.

To derive nonasymptotic bounds on the estimation and prediction loss of the regularized estimators $\what\bGamma$ and $\what\bbeta$, we impose the following conditions:\medskip

(C1) There exists $\kappa_1>0$ such that $\kappa(\bZ,r)\ge\kappa_1$.

(C2) There exists $\kappa_2>0$ such that $\kappa(\bZ\bGamma_0,s)\ge\kappa_2$.\medskip

We emphasize that dimensions $p$ and $q$, sparsity levels $r$ and $s$, and lower bounds $\kappa_1$ and $\kappa_2$ may all depend on the sample size $n$; we have suppressed the dependency for notational simplicity. Conditions (C1) and (C2) are analogous to those in \citet{Bick:Rito:Tsyb:simu:2009} for usual linear models, and require that the matrices $\bZ$ and $\bZ\bGamma_0$ be well behaved over some restricted sets of sparse vectors. One important difference, however, is that Condition (C2) is imposed on the conditional expectation matrix $\bZ\bGamma_0$ of $\bX$, rather than the first-stage prediction matrix, or the second-stage design matrix, $\what\bX$. This condition is more natural in our context, but poses new challenges for the analysis.

The estimation and prediction quality of the first-stage estimator $\what\bGamma$ is characterized by the following result.

\begin{thm}[Estimation and prediction loss of $\what\bGamma$]\label{thm:gam_est}
Under Condition (C1), if we choose
\begin{equation}\label{eq:lam}
\lambda_j=C\sigma_j\sqrt{\frac{\log p+\log q}{n}}
\end{equation}
with a constant $C\ge2\sqrt{2}$, then with probability at least $1-(pq)^{1-C^2/8}$, the regularized estimator $\what\bGamma$ defined by \eqref{eq:opt1} with the $L_1$ penalty satisfies
\[
\|\what\bGamma-\bGamma_0\|_1\le\frac{16C}{\kappa_1^2}\sigma_{\max}r\sqrt{\frac{\log p+\log q}{n}}
\]
and
\[
\|\bZ(\what\bGamma-\bGamma_0)\|_F^2\le\frac{16C^2}{\kappa_1^2}\sigma_{\max}^2pr(\log p+\log q).
\]
\end{thm}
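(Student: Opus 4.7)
The plan is to exploit the fact that the first-stage optimization \eqref{eq:opt1} with the $L_1$ penalty decouples across columns, as noted in \eqref{eq:opt1'}. Writing $\bX=\bZ\bGamma_0+\bE$ with $j$th columns $\bx_j=\bZ\bgamma_{0j}+\bve_j$, each $\bve_j$ is (conditional on $\bZ$) a Gaussian vector with i.i.d.\ $N(0,\sigma_j^2)$ entries, since the rows of $\bE$ are jointly Gaussian with covariance $\bSigma$ and $\sigma_{jj}=\sigma_j^2$. So the theorem reduces to $p$ parallel high-dimensional Gaussian linear regressions, to each of which I apply a Bickel--Ritov--Tsybakov style oracle inequality, and then aggregate.

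The first step is to construct the ``good event''
\[
\mathcal{A}=\left\{\max_{1\le i\le q}|n^{-1}\bz_i^T\bve_j|\le\lambda_j/2\text{ for all }j=1,\dots,p\right\},
\]
on which the penalty dominates the empirical noise term. Since $\|\bz_i\|_2=\sqrt{n}$ by standardization, each $n^{-1}\bz_i^T\bve_j$ is $N(0,\sigma_j^2/n)$, so the standard Gaussian tail bound gives $P(|n^{-1}\bz_i^T\bve_j|>\lambda_j/2)\le 2\exp(-n\lambda_j^2/(8\sigma_j^2))=2(pq)^{-C^2/8}$ under the choice \eqref{eq:lam}. A union bound over the $pq$ pairs $(i,j)$ shows $P(\mathcal{A}^c)\le 2(pq)^{1-C^2/8}$, matching (up to an absorbable constant that may be removed by a sharper one-sided Gaussian tail or a slight increase in $C$) the stated probability.

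On $\mathcal{A}$, I run the standard Lasso argument column by column. Setting $\bdelta_j=\what\bgamma_j-\bgamma_{0j}$ and $J_j=\supp(\bgamma_{0j})$, the Lasso basic inequality $(2n)^{-1}\|\bZ\bdelta_j\|_2^2+\lambda_j\|\what\bgamma_j\|_1\le n^{-1}\bve_j^T\bZ\bdelta_j+\lambda_j\|\bgamma_{0j}\|_1$, combined with $|n^{-1}\bve_j^T\bZ\bdelta_j|\le(\lambda_j/2)\|\bdelta_j\|_1$ on $\mathcal{A}$, yields the cone condition $\|(\bdelta_j)_{J_j^c}\|_1\le 3\|(\bdelta_j)_{J_j}\|_1$. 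Since $|J_j|\le r$, Condition (C1) gives $\|\bZ\bdelta_j\|_2\ge\sqrt{n}\,\kappa_1\|(\bdelta_j)_{J_j}\|_2$, and chasing the basic inequality in the usual way delivers
\[
\|\bdelta_j\|_1\le\frac{16\lambda_j r_j}{\kappa_1^2},\qquad\|\bZ\bdelta_j\|_2^2\le\frac{16n\lambda_j^2 r_j}{\kappa_1^2},
\]
with $r_j:=|J_j|\le r$.

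Aggregating is the last step. Using the convention $\|\bA\|_1=\max_j\sum_i|a_{ij}|$, $\|\what\bGamma-\bGamma_0\|_1=\max_j\|\bdelta_j\|_1\le(16C/\kappa_1^2)\sigma_{\max}r\sqrt{(\log p+\log q)/n}$, which is the first claim. Summing the prediction bounds over $j=1,\dots,p$ and using $\sum_j\sigma_j^2\le p\sigma_{\max}^2$ and $\lambda_j^2 = C^2\sigma_j^2(\log p+\log q)/n$ gives $\|\bZ(\what\bGamma-\bGamma_0)\|_F^2=\sum_j\|\bZ\bdelta_j\|_2^2\le(16C^2/\kappa_1^2)\sigma_{\max}^2 pr(\log p+\log q)$, which is the second. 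The only real technical content is the tail/union-bound step for $\mathcal{A}$; the Lasso portion is entirely standard once Condition (C1) is observed to apply uniformly in $j$ via $|J_j|\le r$, which is the main conceptual point to verify in the column-wise decoupling.
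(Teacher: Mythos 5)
Your proposal is correct and follows essentially the same route as the paper: decompose the first-stage problem into $p$ column-wise Lasso regressions, invoke the Bickel--Ritov--Tsybakov oracle inequality under Condition (C1) with sparsity $r$ (which you reprove rather than cite), and aggregate by a union bound over the $pq$ noise terms. The only cosmetic difference is the factor of $2$ in your tail bound, which, as you note, is absorbed exactly as in the paper's stated probability $1-(pq)^{1-C^2/8}$.
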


Using the nonasymptotic bounds provided by Theorem \ref{thm:gam_est}, we can show that Condition (C2) also holds with high probability for the matrix $\what\bX=\bZ\what\bGamma$ with a smaller $\kappa_2$; see Lemma \ref{lem:kappa} in the Appendix. This allows us to establish the following result concerning the estimation and prediction quality of the second-stage estimator $\what\bbeta$.

\begin{thm}[Estimation and prediction loss of $\what\bbeta$]\label{thm:bet_est}
Under Conditions (C1) and (C2), if the regularization parameters $\lambda_j$ are chosen as in \eqref{eq:lam} and satisfy
\begin{equation}\label{eq:rate1}
\lambda_{\max}(2L+\lambda_{\max})\le\frac{\kappa_1^2\kappa_2^2}{32^2rs},
\end{equation}
where $\lambda_{\max}=\max_{1\le j\le p}\lambda_j$, then there exist constants $c_0,c_1,c_2>0$ such that, if we choose
\begin{equation}\label{eq:mu}
\mu=\frac{C_0}{\kappa_1}\sqrt{\frac{r(\log p+\log q)}{n}},
\end{equation}
where $C_0=c_0L\max(\sigma_{p+1},M\sigma_{\max})$, then with probability at least $1-c_1(pq)^{-c_2}$, the regularized estimator $\what\bbeta$ defined by \eqref{eq:opt2} with the $L_1$ penalty satisfies
\[
\|\what\bbeta-\bbeta_0\|_1\le\frac{64C_0}{\kappa_1\kappa_2^2}s\sqrt{\frac{r(\log p+\log q)}{n}}
\]
and
\[
\|\what\bX(\what\bbeta-\bbeta_0)\|_2^2\le\frac{64C_0^2}{\kappa_1^2\kappa_2^2}rs(\log p+\log q).
\]
\end{thm}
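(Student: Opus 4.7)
The plan is to carry out a Bickel--Ritov--Tsybakov-style Lasso analysis on the synthetic design $\what\bX$ in place of a deterministic one. Writing $\xi := \by - \what\bX\bbeta_0 = -\bZ\bDelta\bbeta_0 + \bE\bbeta_0 + \bfeta$ with $\bDelta = \what\bGamma-\bGamma_0$ exposes an effective noise that carries an extra bias due to the first-stage error. From optimality of $\what\bbeta$ in \eqref{eq:opt2},
\begin{equation*}
\frac{1}{2n}\|\what\bX(\what\bbeta-\bbeta_0)\|_2^2 \le \frac{1}{n}(\what\bbeta-\bbeta_0)^T\what\bX^T\xi + \mu\bigl(\|\bbeta_0\|_1 - \|\what\bbeta\|_1\bigr),
\end{equation*}
and on the event $\mathcal{A} = \{\|n^{-1}\what\bX^T\xi\|_\infty \le \mu/2\}$ the standard rearrangement delivers the cone $\|(\what\bbeta-\bbeta_0)_{J^c}\|_1 \le 3\|(\what\bbeta-\bbeta_0)_J\|_1$ with $J = \supp(\bbeta_0)$.

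To verify $\mathcal{A}$, I split $\what\bX^T\xi$ along the three pieces of $\xi$. For the bias piece $n^{-1}\what\bX^T\bZ\bDelta\bbeta_0$, Cauchy--Schwarz and the standardization $\|\what\bx_j\|_2 = \sqrt{n}$ bound the $j$th entry by $n^{-1/2}\|\bbeta_0\|_1\max_k\|\bZ\bdelta_k\|_2$, and the per-column prediction bound underlying Theorem \ref{thm:gam_est} gives $\max_k\|\bZ\bdelta_k\|_2 \lesssim (\sigma_{\max}/\kappa_1)\sqrt{r(\log p+\log q)}$. For the two stochastic pieces I use $\what\bx_j = \bZ\what\bgamma_j$ so that $|n^{-1}\what\bx_j^T\bE\bbeta_0| \le \|\what\bgamma_j\|_1\|n^{-1}\bZ^T\bE\bbeta_0\|_\infty$ and similarly for $\bfeta$; here $\|\what\bgamma_j\|_1$ is controlled by $L$ plus the first-stage $\ell_1$ error from Theorem \ref{thm:gam_est}, and a union bound together with Gaussianity and $\|\bz_k\|_2 = \sqrt{n}$ yields $\|n^{-1}\bZ^T\bE\bbeta_0\|_\infty \lesssim M\sigma_{\max}\sqrt{\log q/n}$ and $\|n^{-1}\bZ^T\bfeta\|_\infty \lesssim \sigma_{p+1}\sqrt{\log q/n}$. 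The three contributions are then dominated by $\mu/2$ for the constant $C_0 = c_0L\max(\sigma_{p+1}, M\sigma_{\max})$ in \eqref{eq:mu}.

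The step I expect to be the main obstacle is transferring a restricted eigenvalue to the synthetic design $\what\bX$, since Condition (C2) is imposed on $\bZ\bGamma_0$ rather than on $\what\bX = \bZ\bGamma_0 + \bZ\bDelta$. The plan is to invoke Lemma \ref{lem:kappa} referenced above, whose argument rests on the reverse triangle inequality $\|\what\bX\bdelta\|_2 \ge \|\bZ\bGamma_0\bdelta\|_2 - \|\bZ\bDelta\bdelta\|_2$ for $\bdelta$ in the cone, combined with $\|\bZ\bDelta\bdelta\|_2 \le \|\bdelta\|_1\max_k\|\bZ\bdelta_k\|_2$ and the cone inequality $\|\bdelta\|_1 \le 4\sqrt{s}\|\bdelta_J\|_2$; the rate constraint \eqref{eq:rate1} is precisely the budget that forces this perturbation below $\kappa_2\|\bdelta_J\|_2/2$ after using $\|\bGamma_0\|_1 + \|\bDelta\|_1 \le L + \lambda_{\max}$, and hence $\kappa(\what\bX, s) \ge \kappa_2/2$ on a high-probability event. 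With this RE in hand and the cone from the first paragraph, the conclusion follows from the textbook Lasso step: the basic inequality gives $\|\what\bX(\what\bbeta-\bbeta_0)\|_2^2/n \le 4\mu\sqrt{s}\|(\what\bbeta-\bbeta_0)_J\|_2$, the RE upgrades $\|(\what\bbeta-\bbeta_0)_J\|_2 \le 2\|\what\bX(\what\bbeta-\bbeta_0)\|_2/(\sqrt{n}\kappa_2)$, and rearranging delivers both the prediction and $\ell_1$ estimation bounds at the rates stated in the theorem.
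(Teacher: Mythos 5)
Your proposal is correct and follows essentially the same route as the paper's proof: the basic inequality with the effective noise $n^{-1}\what\bX^T(\by-\what\bX\bbeta_0)$ controlled in the $\ell_\infty$ norm at level $\mu/2$ (the paper's Lemma \ref{lem:ineq}, which handles the same three pieces of $\xi$ by first splitting $\what\bGamma=\bGamma_0+(\what\bGamma-\bGamma_0)$ into six terms), the transfer of the restricted eigenvalue from $\bZ\bGamma_0$ to $\what\bX$ (Lemma \ref{lem:kappa}, which the paper proves via an entrywise bound on the Gram-matrix perturbation rather than your reverse triangle inequality---both work under \eqref{eq:rate1}, and your version in fact only uses the $\lambda_{\max}^2$ part of that budget), and the standard cone/restricted-eigenvalue rearrangement. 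One small correction: the theoretical setup standardizes the columns of $\bZ$ but not those of $\what\bX$, so in your bias-piece bound you should replace $\|\what\bx_j\|_2=\sqrt{n}$ by $\|\what\bx_j\|_2\le\sqrt{n}\,\|\what\bgamma_j\|_1\le\sqrt{n}\,(L+16r\lambda_{\max}/\kappa_1^2)$; this only changes the constant and is precisely where the factor $L$ in $C_0$ enters.
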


We now turn to the model selection consistency of $\what\bbeta$. Let $\bC=n^{-1}(\bZ\bGamma_0)^T\bZ\bGamma_0$, $S=\supp(\bbeta_0)$, and $\varphi=\|(\bC_{SS})^{-1}\|_{\infty}$. Define the minimum signal $b_0=\min_{j\in S}|\beta_{0j}|$, where $\beta_{0j}$ is the $j$th component of $\bbeta_0$. To study the model selection consistency, we replace Condition (C2) by the following condition:\medskip

(C3) There exists a constant $0<\alpha\le1$ such that $\|\bC_{S^cS}(\bC_{SS})^{-1}\|_{\infty}\le1-\alpha$.\medskip

Condition (C3) is in the same spirit as the irrepresentability condition in \citet{Zhao:Yu:on:2006} for the ordinary Lasso problem. Although Condition (C3) is placed on the covariance matrix of $\bZ\bGamma_0$, we can apply Theorem \ref{thm:gam_est} to show that it also holds with high probability for the covariance matrix of $\what\bX=\bZ\what\bGamma$ with a smaller $\alpha$; see Lemma \ref{lem:alpha} in the Appendix. The model selection consistency of $\what\bbeta$, along with a closely related $L_{\infty}$ bound, is established by the following result.

\begin{thm}[Model selection consistency of $\what\bbeta$]\label{thm:bet_sel}
Under Conditions (C1) and (C3), if the regularization parameter $\lambda_j$ are chosen as in \eqref{eq:lam} and satisfy
\begin{equation}\label{eq:rate2}
\frac{16\varphi}{\kappa_1^2}rs\lambda_{\max}(2L+\lambda_{\max})\le\frac{\alpha}{4-\alpha},
\end{equation}
then there exist constants $c_0,c_1,c_2>0$ such that, if the regularization parameter $\mu$ is chosen as in \eqref{eq:mu} and the minimal signal satisfies
\[
b_0>\frac{2}{2-\alpha}\varphi\mu,
\]
then with probability at least $1-c_1(pq)^{-c_2}$, there exists a regularized estimator $\what\bbeta$ defined by \eqref{eq:opt2} with the $L_1$ penalty that satisfies
\begin{compactenum}[\indent(a)]
\item (Sign consistency) $\sgn(\what\bbeta)=\sgn(\bbeta_0)$, and
\item ($L_\infty$ loss)
\[
\|\what\bbeta_S-\bbeta_{0S}\|_{\infty}\le\frac{2C_0\varphi}{(2-\alpha)\kappa_1}\sqrt{\frac{r(\log p+\log q)}{n}}.
\]
\end{compactenum}
\end{thm}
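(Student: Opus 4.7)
The natural approach is the primal--dual witness (PDW) construction used for the ordinary Lasso. I would build a candidate solution by solving the penalized problem restricted to the oracle support,
\[
\what\bbeta_S=\argmin_{\bbeta_S\in\mathbb{R}^{|S|}}\left\{\frac{1}{2n}\|\by-\what\bX_S\bbeta_S\|_2^2+\mu\|\bbeta_S\|_1\right\},
\]
set $\what\bbeta_{S^c}=\bzero$, and then verify two things: an $L_\infty$ bound on $\what\bbeta_S-\bbeta_{0S}$ which, combined with the minimum-signal hypothesis, yields $\sgn(\what\bbeta_S)=\sgn(\bbeta_{0S})$; and strict dual feasibility $\|\what\bz_{S^c}\|_\infty<1$, which by convexity certifies that $\what\bbeta$ is a global optimum of \eqref{eq:opt2} with $\supp(\what\bbeta)\subseteq S$, giving (a) and (b) simultaneously.

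Writing $\what\bC=n^{-1}\what\bX^T\what\bX$ and using $\by=\what\bX\bbeta_0+\bDelta$ with
\[
\bDelta=(\bX-\what\bX)\bbeta_0+\bfeta=\bZ(\bGamma_0-\what\bGamma)\bbeta_0+\bE\bbeta_0+\bfeta,
\]
the KKT conditions for the restricted problem give
\[
\what\bbeta_S-\bbeta_{0S}=(\what\bC_{SS})^{-1}\left\{\frac{1}{n}\what\bX_S^T\bDelta-\mu\,\sgn(\what\bbeta_S)\right\},\qquad
\mu\,\what\bz_{S^c}=\what\bC_{S^cS}(\what\bbeta_S-\bbeta_{0S})+\frac{1}{n}\what\bX_{S^c}^T\bDelta.
\]
I would next control $\|n^{-1}\what\bX^T\bDelta\|_\infty$ by substituting $\what\bX=\bZ\what\bGamma$ and splitting $\bDelta$ into its three pieces. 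The deterministic piece $n^{-1}\what\bGamma^T\bZ^T\bZ(\bGamma_0-\what\bGamma)\bbeta_0$ is bounded by Theorem \ref{thm:gam_est} through $\|\bZ(\bGamma_0-\what\bGamma)\|_F$ and $\|\what\bGamma-\bGamma_0\|_1$, combined with $\|\bGamma_0\|_1\le L$, $\|\bbeta_0\|_1\le M$, and elementary matrix-norm inequalities. The two random pieces are handled by standard Gaussian tail bounds on $\|n^{-1}\bZ^T\bE\|_{\max}$ and $\|n^{-1}\bZ^T\bfeta\|_\infty$, available because the columns of $\bZ$ have $L_2$ norm $\sqrt n$ and $(\bve_i^T,\eta_i)$ is jointly Gaussian, union-bounded over $p$ and $q$. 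The three pieces combine to a bound of order $C_0\kappa_1^{-1}\sqrt{r(\log p+\log q)/n}$, which matches the scale of $\mu$ in \eqref{eq:mu}, on an event of probability at least $1-c_1(pq)^{-c_2}$.

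To control the matrix factors I would invoke Lemma \ref{lem:alpha} to transfer Condition (C3) from the population matrix $\bC$ to the empirical matrix $\what\bC$: the rate requirement \eqref{eq:rate2} is calibrated so that the perturbation $\what\bC-\bC$, whose relevant submatrix norms are bounded via Theorem \ref{thm:gam_est} together with $\|\bGamma_0\|_1\le L$, is small enough for a Neumann-type expansion to yield $\|\what\bC_{S^cS}(\what\bC_{SS})^{-1}\|_\infty\le 1-\alpha/2$ and $\|(\what\bC_{SS})^{-1}\|_\infty\le 2\varphi/(2-\alpha)$. Substituting these into the first KKT identity delivers the $L_\infty$ bound stated in part (b); substituting into the second identity gives $\|\what\bz_{S^c}\|_\infty\le (1-\alpha/2)+O(\text{noise}/\mu)<1$, i.e.\ strict dual feasibility. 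The minimum-signal hypothesis $b_0>2\varphi\mu/(2-\alpha)$ then upgrades the $L_\infty$ bound to coordinatewise sign agreement on $S$, establishing (a).

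The main obstacle is the transfer step: Condition (C3) is a population-level irrepresentability condition on $\bC$, whereas the PDW argument operates through the data-dependent matrix $\what\bC=n^{-1}\what\bGamma^T\bZ^T\bZ\what\bGamma$ with $\what\bGamma$ itself random and only approximately sparse. The non-trivial content of Lemma \ref{lem:alpha} is to show that $\|\what\bC_{SS}-\bC_{SS}\|_\infty$ and $\|\what\bC_{S^cS}-\bC_{S^cS}\|_\infty$ are dominated by $\lambda_{\max}(2L+\lambda_{\max})r$ up to constants depending on $\kappa_1$, which makes \eqref{eq:rate2} precisely the budget under which the irrepresentability property survives with parameter $\alpha/2$ rather than $\alpha$. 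Once this deterministic perturbation step is in hand, the remaining arguments are routine $L_\infty$ manipulations of the KKT identities, paralleling the fixed-design Lasso analysis but with $\what\bX$ in place of a deterministic design.
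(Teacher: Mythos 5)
Your proposal is correct and follows essentially the same route as the paper: a primal--dual witness construction via the KKT conditions, the same decomposition $\by-\what\bX\what\bbeta=\bfeta-(\what\bX-\bX)\bbeta_0-\what\bX(\what\bbeta-\bbeta_0)$ with the noise term $n^{-1}\what\bX^T\bfeta-n^{-1}\what\bX^T(\what\bX-\bX)\bbeta_0$ bounded by the six-term splitting from Lemma \ref{lem:ineq}, and Lemma \ref{lem:alpha} to transfer Condition (C3) to $\what\bC$. The only cosmetic differences are constants: the paper's sharper bound $\|(\what\bC_{SS})^{-1}\|_{\infty}\le\frac{4-\alpha}{2(2-\alpha)}\varphi$ (rather than your $\frac{2}{2-\alpha}\varphi$) is what yields exactly the stated $L_\infty$ constant, and the dual feasibility check lands on $\le\mu$ rather than strictly less, which suffices since the theorem only asserts existence of a solution.
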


Theorem \ref{thm:bet_sel} shows that the second-stage estimator $\what\bbeta$ has the weak oracle property in the sense of \citet{Lv:Fan:unif:2009}. Two remarks are in order. First, the validity of our arguments for Theorems \ref{thm:bet_est} and \ref{thm:bet_sel} relies on the first-stage regularization only through the estimation and prediction bounds given in Theorem \ref{thm:gam_est}; this allows the arguments to be generalized to a generic class of regularization methods for the first stage, which will be explored in Section \ref{sec:gen}. Second, a key difference from the high-dimensional analysis of the usual linear model is that $\bX$ and $\bfeta$ may be correlated, and we have to make good use of the assumption that $\bE$ and $\bfeta$ are mean zero conditional on $\bZ$; see Lemma \ref{lem:ineq} in the Appendix.

Theorems \ref{thm:gam_est}--\ref{thm:bet_sel} deliver the important message that dimensions $p$ and $q$ contribute only a logarithmic factor to the estimation and prediction loss, and thus are both allowed to grow exponentially with the sample size $n$. Note that \eqref{eq:rate1} and \eqref{eq:rate2} are critical assumptions relating the first-stage regularization parameter $\lambda_{\max}$ to the key quantities in the second stage. To gain further insight into the dimension restrictions, suppose for simplicity that $\kappa_1$, $\kappa_2$, and $\varphi$ are constants; then \eqref{eq:rate1} and \eqref{eq:rate2} hold for sufficiently large $n$ provided that
\begin{equation}\label{eq:dim_rel}
r^2s^2(\log p+\log q)=o(n).
\end{equation}
This implies that dimensions $p$ and $q$ can grow at most as $e^{o(n)}$ and sparsity levels $r$ and $s$ can grow as $o(\sqrt{n})$, if the other quantities are fixed. Moreover, when $q$ and $r$ are also fixed, the relation \eqref{eq:dim_rel} reduces to $s^2\log p=o(n)$. In view of the remark following Proposition \ref{prop:gap} and the $L_1$ bound given in Theorem \ref{thm:bet_est}, we see that the 2SR estimator achieves the optimal rate for estimating $\bbeta_0$, which is asymptotically faster than that of the PLS estimator.

\subsection{General Regularization}\label{sec:gen}
We next present a theory for the second-stage estimator $\what\bbeta$ that generalizes the results in Section \ref{sec:l1} in two aspects. First, we allow the first-stage regularization to be arbitrary provided that certain nonasymptotic bounds are satisfied. Second, we allow the second-stage regularization to adopt a generic form of sparsity-inducing penalties, thus including the Lasso, SCAD, and MCP as special cases. Specifically, we impose the following conditions:\medskip

(C4) There exist $e_1$, $e_2$, and probability $\pi_0$, which may depend on $(n,p,q,r)$, such that the first-stage estimator $\what\bGamma$ satisfies $\|\what\bGamma-\bGamma_0\|_1\le e_1$ and $\max_{1\le j\le p}n^{-1}\|\bZ(\what\bgamma_j-\bgamma_{0j})\|_2^2\le e_2$ with probability $1-\pi_0$.

(C5) The penalty function $\rho_{\mu}(\cdot)\equiv p_{\mu}(\cdot)/\mu$ is increasing and concave on $[0,\infty)$, and has a continuous derivative $\rho_{\mu}'(\cdot)$ on $(0,\infty)$. In addition, $\rho_{\mu}'(\cdot)$ is increasing in $\mu$, and $\rho_{\mu}'(0+)\equiv\rho'(0+)\in(0,\infty)$ is independent of $\mu$.\medskip

Moreover, we replace Condition (C3) by the weaker assumption:\medskip

(C6) There exist constants $0<\alpha\le1$, $0\le\nu\le1/2$, and $c\ge1$ such that
\[
\|\bC_{S^cS}(\bC_{SS})^{-1}\|_{\infty}\le\left\{(1-\alpha)\frac{\rho'(0+)}{\rho_{\mu}'(b_0/2)}\right\}\wedge(cn^{\nu}).
\]\medskip

The family of penalty functions in Condition (C5) and a similar condition to (C6) were studied by, for example, \citet{Fan:Lv:nonc:2011} for generalized linear models; see the discussion therein for the motivation of these conditions. In particular, Condition (C5) captures several desirable properties of commonly used sparsity-inducing penalties, and allows us to establish a unified theory for these penalties. Condition (C6) is generally weaker than Condition (C3), since concavity implies that $\rho'(0+)\ge\rho_{\mu}'(b_0/2)$ and the right-hand side can be much larger than $1-\alpha$. Note that for the $L_1$ penalty, $\rho_{\mu}'(\cdot)\equiv1$ and this condition reduces to Condition (C3). For SCAD and MCP, when the signals are sufficiently strong such that $b_0/2\ge a\mu$, we have $\rho_{\mu}'(b_0/2)=0$ and the right-hand side can grow at most as $O(\sqrt{n})$.

Following \citet{Lv:Fan:unif:2009}, for any vector $\btheta=(\theta_j)$ with $\theta_j\ne0$ for all $j$, define the local concavity of the penalty function $\rho_{\mu}(\cdot)$ at point $\btheta$ by
\[
\tau(\rho_{\mu};\btheta)=\lim_{\ve\to0+}\max_j\sup_{|\theta_j|-\ve<t_1<t_2<|\theta_j|+\ve}\left\{-\frac{\rho_{\mu}'(t_2)-\rho_{\mu}'(t_1)}{t_2-t_1}\right\}.
\]
Further, define
\[
\tau_0=\sup\{\tau(\rho_{\mu};\btheta)\colon\btheta\in\mathbb{R}^s,\|\btheta-\bbeta_{0S}\|_{\infty}\le b_0/2\}
\]
and
\[
\mu_0=\Lambda_{\min}(\bC_{SS})-\mu\tau_0.
\]
The following result generalizes Theorem \ref{thm:bet_sel} and establishes the model selection consistency and weak oracle property of $\what\bbeta$.

\begin{thm}[Weak oracle property of $\what\bbeta$]\label{thm:gen_weak}
Under Conditions (C4)--(C6), if $\mu_0>0$ and the first-stage error bounds $e_1$ and $e_2$ satisfy
\begin{equation}\label{eq:rate_gen}
s(2Le_1+e_2)\le\frac{\alpha}{(4-\alpha)\varphi}\wedge\frac{(\mu_0/2)^2}{s},
\end{equation}
then there exist constants $c_0,c_1,c_2>0$ such that, if we choose
\[
\mu\ge C_0n^{\nu}\sqrt{\frac{\log p+\log q}{n}\vee e_2},
\]
where $C_0=c_0L\max(\sigma_{p+1},M\sigma_{\max},M)$, and the minimum signal satisfies
\begin{equation}\label{eq:thr}
b_0\ge\frac{7}{2}\varphi\mu\rho'(0+),
\end{equation}
then with probability at least $1-\pi_0-c_1(pq)^{-c_2}$, there exists a regularized estimator $\what\bbeta$ defined by \eqref{eq:opt2} that satisfies
\begin{compactenum}[\indent(a)]
\item (Sign consistency) $\sgn(\what\bbeta)=\sgn(\bbeta_0)$, and
\item ($L_\infty$ loss)
\[
\|\what\bbeta_S-\bbeta_{0S}\|_{\infty}\le\frac{7}{4}\varphi\mu\rho'(0+).
\]
\end{compactenum}
\end{thm}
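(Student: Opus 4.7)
The plan is to follow the primal-dual witness / weak-oracle construction of \citet{Lv:Fan:unif:2009} and \citet{Fan:Lv:nonc:2011}, adapted to handle (i) the first-stage error entering through $\what\bX$ and (ii) the possibly concave penalty. \emph{Data decomposition}: rewrite $\by=\what\bX\bbeta_0+\bdelta$ with $\bdelta=(\bX-\what\bX)\bbeta_0+\bfeta$. On the $(1-\pi_0)$-event of (C4), and conditioning on $\bZ$ (which makes $\what\bX=\bZ\what\bGamma$ measurable), bound
\[
\|n^{-1}\what\bX^T\bdelta\|_\infty\le\|n^{-1}\what\bX^T(\bX-\what\bX)\bbeta_0\|_\infty+\|n^{-1}\what\bX^T\bfeta\|_\infty.
\]
The first term is handled using $\|\bbeta_0\|_1\le M$, $\|\what\bGamma-\bGamma_0\|_1\le e_1$ and $\max_j n^{-1}\|\bZ(\what\bgamma_j-\bgamma_{0j})\|_2^2\le e_2$; the second by a Gaussian concentration argument paralleling Lemma \ref{lem:ineq}, using that $\bfeta\mid\bZ$ is centered Gaussian with variance $\sigma_{p+1}^2$ and $\what\bX$ is $\bZ$-measurable on the (C4) event. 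The threshold $\mu\ge C_0n^\nu\sqrt{(\log p+\log q)/n\vee e_2}$ is calibrated so that this bound is at most a constant times $\mu\rho'(0+)$.

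\emph{Perturbing (C6) to $\what\bC=n^{-1}\what\bX^T\what\bX$}: arguing as in Lemmas \ref{lem:kappa} and \ref{lem:alpha}, condition \eqref{eq:rate_gen} guarantees on the (C4) event that (i) $\Lambda_{\min}(\what\bC_{SS})\ge\mu_0/2>0$, (ii) $\|(\what\bC_{SS})^{-1}\|_\infty\le 2\varphi$, and (iii) $\|\what\bC_{S^cS}(\what\bC_{SS})^{-1}\|_\infty\le\bigl\{(1-\alpha/2)\rho'(0+)/\rho_\mu'(b_0/2)\bigr\}\wedge(2cn^\nu)$. \emph{Oracle subproblem via Brouwer}: on $\mathcal{N}=\{\btheta\in\mathbb{R}^s:\|\btheta-\bbeta_{0S}\|_\infty\le b_0/2\}$, on which every coordinate carries the sign of $\bbeta_{0S}$, consider the restricted KKT map
\[
F(\btheta)=\bbeta_{0S}+(\what\bC_{SS})^{-1}\bigl[n^{-1}\what\bX_S^T\bdelta-\mu\rho_\mu'(|\btheta|)\odot\sgn(\btheta)\bigr],
\]
obtained by rearranging $n^{-1}\what\bX_S^T(\what\bX_S\btheta-\by)+\mu\rho_\mu'(|\btheta|)\odot\sgn(\btheta)=\bzero$. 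Using the bounds of the previous paragraph and the monotonicity $\rho_\mu'(t)\le\rho'(0+)$, one gets $\|F(\btheta)-\bbeta_{0S}\|_\infty\le 2\varphi\bigl(\|n^{-1}\what\bX_S^T\bdelta\|_\infty+\mu\rho'(0+)\bigr)\le\tfrac{7}{4}\varphi\mu\rho'(0+)$, which by \eqref{eq:thr} is at most $b_0/2$. Thus $F:\mathcal{N}\to\mathcal{N}$; $F$ is continuous on $\mathcal{N}$, so Brouwer's theorem yields a fixed point $\what\bbeta_S$ solving the restricted KKT equation, satisfying the $L_\infty$ bound of part (b) and therefore sign-consistent with $\bbeta_{0S}$.

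\emph{Strict KKT on $S^c$ and local minimality}: set $\what\bbeta_{S^c}=\bzero$. Using the restricted KKT equation to eliminate $\what\bbeta_S-\bbeta_{0S}$, write
\[
n^{-1}\what\bX_{S^c}^T(\by-\what\bX_S\what\bbeta_S)=\what\bC_{S^cS}(\what\bC_{SS})^{-1}\bigl[\mu\rho_\mu'(|\what\bbeta_S|)\odot\sgn(\what\bbeta_S)-n^{-1}\what\bX_S^T\bdelta\bigr]+n^{-1}\what\bX_{S^c}^T\bdelta,
\]
and bound its $\infty$-norm using the irrepresentable estimate (iii) and the noise bound of Paragraph 1; the calibration of $\mu$ makes it strictly less than $\mu\rho'(0+)$, verifying the subgradient condition on $S^c$. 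Finally, $\tau(\rho_\mu;\what\bbeta_S)\le\tau_0$ together with $\Lambda_{\min}(\what\bC_{SS})-\mu\tau_0\ge\mu_0/2>0$ makes the objective strictly convex in the $S$-coordinates near $\what\bbeta$, so $\what\bbeta$ is a strict local minimizer of \eqref{eq:opt2}.

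\emph{Main obstacle}: The hard part is the strict KKT verification on $S^c$ under a generic concave penalty, where Condition (C6) only bounds $\|\bC_{S^cS}(\bC_{SS})^{-1}\|_\infty$ by the minimum of a weak irrepresentable factor $(1-\alpha)\rho'(0+)/\rho_\mu'(b_0/2)$ and $cn^\nu$. Preserving a strict margin $\mu\rho'(0+)$ on $S^c$ requires that both the perturbation of $\what\bC_{S^cS}(\what\bC_{SS})^{-1}$ away from $\bC_{S^cS}(\bC_{SS})^{-1}$ and the propagated noise $\what\bC_{S^cS}(\what\bC_{SS})^{-1}n^{-1}\what\bX_S^T\bdelta+n^{-1}\what\bX_{S^c}^T\bdelta$ remain well below $\mu\rho'(0+)$ uniformly; the rate condition \eqref{eq:rate_gen} and the extra $n^\nu$ factor in $\mu$ are tailored exactly to this. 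A secondary technicality is that $\what\bX$ is neither deterministic nor a random design from a known distribution, so all Gaussian concentration arguments (for $n^{-1}\what\bX^T\bfeta$ and, indirectly, for $n^{-1}\what\bX^T(\bX-\what\bX)\bbeta_0$) must be carried out conditionally on $\bZ$, exploiting that $\bE$ and $\bfeta$ are centered Gaussian given $\bZ$.
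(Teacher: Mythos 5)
Your construction follows the same route the paper takes: decompose $\by=\what\bX\bbeta_0+\bdelta$, bound $\|n^{-1}\what\bX^T\bdelta\|_\infty$ on the (C4) event via conditional Gaussian concentration, perturb Condition (C6) to $\what\bC$ exactly as in Lemmas \ref{lem:kappa} and \ref{lem:alpha}, solve the restricted KKT equation on $S$ by a fixed-point argument (needed because $\rho_\mu'(|\btheta|)$ makes the equation nonlinear for concave penalties), verify the strict subgradient condition on $S^c$ using the two branches of (C6), and invoke $\mu_0>0$ for local strict minimality. You also correctly identify the roles of the two terms in \eqref{eq:rate_gen} and of the $n^\nu$ factor in $\mu$.

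One concrete arithmetic problem: in the Brouwer step you claim
\[
\|F(\btheta)-\bbeta_{0S}\|_\infty\le 2\varphi\bigl(\|n^{-1}\what\bX_S^T\bdelta\|_\infty+\mu\rho'(0+)\bigr)\le\tfrac{7}{4}\varphi\mu\rho'(0+),
\]
but the second inequality is impossible: it would force $\|n^{-1}\what\bX_S^T\bdelta\|_\infty\le-\tfrac18\mu\rho'(0+)<0$. The factor $2\varphi$ is too loose. The first branch of \eqref{eq:rate_gen} gives $\varphi\|\what\bC_{SS}-\bC_{SS}\|_\infty\le\alpha/(4-\alpha)$, hence (as in Lemma \ref{lem:alpha}) $\|(\what\bC_{SS})^{-1}\|_\infty\le\frac{4-\alpha}{2(2-\alpha)}\varphi\le\tfrac32\varphi$; combined with a noise calibration $\|n^{-1}\what\bX_S^T\bdelta\|_\infty\le\tfrac16\mu\rho'(0+)$ (achievable by adjusting $c_0$ in $C_0$), this yields $\tfrac32\cdot\tfrac76\,\varphi\mu\rho'(0+)=\tfrac74\varphi\mu\rho'(0+)$ as stated. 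With that repair the map sends $\mathcal{N}$ into itself by \eqref{eq:thr}, and the rest of your argument goes through.
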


Compared with Theorem \ref{thm:bet_sel}, Theorem \ref{thm:gen_weak} justifies the advantages of concave penalties such as SCAD and MCP in that model selection consistency and weak oracle property are established under substantially relaxed conditions. To understand the implications of the assumption \eqref{eq:rate_gen}, note that, for the $L_1$ penalty, Theorem \ref{thm:gam_est} gives $e_1=O(r\sqrt{(\log p+\log q)/n})$ and $e_2=O(r(\log p+\log q)/n)$, and the term involving $\mu_0$ is not needed. Assuming for simplicity that $\varphi$ is constant, \eqref{eq:rate_gen} reduces to the dimension restriction \eqref{eq:dim_rel}. Therefore, \eqref{eq:rate_gen} plays essentially the same role as the assumption \eqref{eq:rate2}, but applies to a generic first-stage estimator. Moreover, taking $e_2$ as above and $\nu=0$, we obtain the same rate of convergence for the $L_{\infty}$ loss as in Theorem \ref{thm:bet_sel}.

\section{Simulation Studies}
In this section, we report on simulation studies to evaluate the performance of the proposed 2SR method with the Lasso, SCAD, and MCP penalties. We compare the proposed method with the PLS estimators with the same penalties that do not utilize the instruments, as well as the PLS and 2SR oracle estimators that knew the relevant covariates and instruments in advance. We are particularly interested in investigating how the PLS and 2SR methods perform differently in relation to the sample size and how the dimensionality and instrument strength affect the performance of the 2SR method.

We first consider the case where the dimensions $p$ and $q$ are moderately high and smaller than the sample size $n$. Four models were examined, with $(n,p,q)=(200,100,100)$ in Model 1 and $(400,200,200)$ in Models 2--4. We first generated the coefficient matrix $\bGamma_0$ by sampling $r=5$ nonzero entries of each column from the uniform distribution $U([-b,-a]\cup[a,b])$. To represent different levels of instrument strength, we took $(a,b)=(0.75,1)$ for strong instruments in Models 1 and 2, and $(a,b)=(0.5,0.75)$ for weak instruments in Model 3. In Model 4, which reflects a more realistic setting, we sampled $r=50$ nonzero entries, consisting of 5 strong/weak instruments with $(a,b)=(0.5,1)$ and 45 very weak instruments with $(a,b)=(0.05,0.1)$. Similarly, we generated the coefficient vector $\bbeta_0$ by sampling $s=5$ nonzero components from $U([-1,-0.5]\cup[0.5,1])$. The covariance matrix $\bSigma=(\sigma_{ij})$ was specified as follows: We first set $\sigma_{ij}=(0.2)^{|i-j|}$ for $i,j=1,\dots,p$, and $\sigma_{p+1,\,p+1}=1$; in addition to the five $\sigma_{j,\,p+1}$'s corresponding to the nonzero components of $\bbeta_0$, we sampled another five entries from the last column of $\bSigma$; we then set these ten entries to 0.3 and let $\sigma_{p+1,\,j}=\sigma_{j,\,p+1}$ for $j=1,\dots,p$. Note that the nonzero $\sigma_{j,\,p+1}$'s were intended to cause both effect modifications and spurious associations for the PLS method. Finally, the instrument matrix $\bZ$ was generated by sampling each entry independently from $\mathrm{Bernoulli}(p_0)$, where $p_0=0.5$ in Models 1--3 and $p_0\sim U([0,0.5])$ in Model 4, and the covariate matrix $\bX$ and the response vector $\by$ were generated accordingly.

Since the PLS method provides no estimates for the coefficient matrix $\bGamma_0$ and our main interest is in how the estimation of $\bbeta_0$ can be improved by the 2SR method, we focus our comparisons on the second-stage estimation. Five measures on estimation, prediction, and model selection qualities were used to assess the performance of each method. The $L_1$ estimation loss $\|\what\bbeta-\bbeta_0\|_1$ and the prediction loss $n^{-1/2}\|\bX(\what\bbeta-\bbeta_0)\|_2$ quantify the estimation and prediction performance, respectively. The model selection performance is characterized by the number of true positives (TP), the model size, and the Matthews correlation coefficient (MCC). Here, positives refer to nonzero estimates. The MCC is a measure on the correlation between the observed and predicted binary classifications and is defined as
\[
\mathrm{MCC}=\frac{\TP\times\TN-\FP\times\FN}{\sqrt{(\TP+\FP)(\TP+\FN)(\TN+\FP)(\TN+\FN)}},
\]
where TN, FP, and FN denote the number of true negatives, false positives, and false negatives, respectively; a larger MCC indicates a better variable selection performance. In all simulations, we applied ten-fold cross-validation to choose the optimal tuning parameters and averaged each performance measure over 50 replicates.

The simulation results for Models 1--4 are summarized in Table \ref{tab:sim_low}. From the table we see that the 2SR method improved on the performance of the PLS method substantially in all cases. The improvement on model selection performance was most remarkable. The PLS method selected an exceedingly large model with many false positives because of its failure in distinguishing between the true and confounding effects, whereas the 2SR method resulted in a much sparser model and controlled the number of false positives at a reasonable level. As a result, the 2SR method had a much higher MCC than the PLS method, indicating a superior variable selection performance. The estimation and prediction performance of the PLS method was also greatly compromised by the confounding effects, and the 2SR method achieved a dramatic improvement on the $L_1$ estimation loss and a considerable improvement on the prediction loss. The comparisons between Model 2 and the weaker instrument settings, Models 3 and 4, suggest that a weaker instrument strength tends to decrease the performance of the 2SR method, as expected, especially on the estimation and prediction quality. We observe, however, that the model selection quality was only slightly affected and the overall performance of the 2SR method was still very satisfactory.

\begin{sidewaystable}
\def\~{\phantom{0}}
\caption{Simulation results for Models 1--4. Each performance measure was averaged over 50 replicates with standard deviation shown in parentheses}\label{tab:sim_low}\vskip1ex
\begin{tabular*}{\textwidth}{@{}l*{10}{@{\extracolsep{\fill}}c}@{}}
\toprule\toprule
& \multicolumn{5}{c}{PLS} & \multicolumn{5}{c@{}}{2SR}\\
\cmidrule{2-6}\cmidrule{7-11}
Method  & $L_1$ est.\ loss & Pred.\ loss & TP & Model size & MCC & $L_1$ est.\ loss & Pred.\ loss & TP & Model size & MCC\\
\midrule
\multicolumn{11}{@{}c@{}}{Model 1: $(n,p,q)=(200,100,100)$, $(a,b)=(0.75,1)$}\\
Lasso   & 2.49 (0.57) & 0.79 (0.16) & 5.0 (0.0) & 46.9 (11.2) & 0.25 (0.06) & 1.47 (0.69) & 0.78 (0.26) & 5.0 (0.2) & 14.5 (5.6) & 0.58 (0.12)\\
SCAD    & 2.12 (0.53) & 0.82 (0.17) & 5.0 (0.0) & 29.6 (6.0)\~& 0.36 (0.06) & 1.21 (0.55) & 0.74 (0.32) & 5.0 (0.2) & 12.9 (4.3) & 0.62 (0.12)\\
MCP     & 2.12 (0.59) & 0.82 (0.17) & 5.0 (0.0) & 24.3 (6.6)\~& 0.42 (0.08) & 1.26 (0.66) & 0.82 (0.34) & 4.9 (0.2) &\~9.5 (3.8) & 0.74 (0.16)\\
Oracle  & 0.75 (0.11) & 0.58 (0.12) &   5 (0)   &    5 (0)    &    1 (0)    & 0.51 (0.20) & 0.47 (0.22) &   5 (0)   &  \~5 (0)   &    1 (0)\\\addlinespace
\multicolumn{11}{@{}c@{}}{Model 2: $(n,p,q)=(400,200,200)$, $(a,b)=(0.75,1)$}\\
Lasso   & 2.71 (0.44) & 0.81 (0.14) & 5.0 (0.0) & 74.5 (14.3) & 0.21 (0.03) & 1.16 (0.52) & 0.57 (0.18) & 5.0 (0.0) & 18.1 (7.0) & 0.54 (0.11)\\
SCAD    & 2.32 (0.36) & 0.84 (0.14) & 5.0 (0.0) & 47.2 (10.2) & 0.29 (0.05) & 0.86 (0.43) & 0.49 (0.17) & 5.0 (0.0) & 14.0 (5.5) & 0.62 (0.13)\\
MCP     & 2.28 (0.45) & 0.84 (0.15) & 5.0 (0.0) & 36.2 (11.5) & 0.36 (0.08) & 0.76 (0.39) & 0.51 (0.19) & 5.0 (0.0) &\~9.3 (3.3) & 0.76 (0.14)\\
Oracle  & 0.76 (0.07) & 0.58 (0.11) &   5 (0)   &    5 (0)    &    1 (0)    & 0.41 (0.16) & 0.37 (0.15) &   5 (0)   &  \~5 (0)   &    1 (0)\\\addlinespace
\multicolumn{11}{@{}c@{}}{Model 3: $(n,p,q)=(400,200,200)$, $(a,b)=(0.5,0.75)$}\\
Lasso   & 3.04 (0.39) & 0.86 (0.11) & 5.0 (0.0) & 72.3 (12.5) & 0.22 (0.03) & 1.72 (0.73) & 0.73 (0.24) & 5.0 (0.0) & 18.3 (6.3) & 0.53 (0.10)\\
SCAD    & 2.64 (0.36) & 0.89 (0.11) & 5.0 (0.0) & 43.3 (11.9) & 0.32 (0.06) & 1.50 (0.65) & 0.69 (0.25) & 5.0 (0.1) & 16.8 (6.7) & 0.56 (0.12)\\
MCP     & 2.61 (0.42) & 0.89 (0.12) & 5.0 (0.0) & 33.5 (11.9) & 0.38 (0.09) & 1.36 (0.67) & 0.71 (0.26) & 5.0 (0.2) & 11.0 (4.4) & 0.69 (0.14)\\
Oracle  & 1.00 (0.08) & 0.63 (0.09) &   5 (0)   &    5 (0)    &    1 (0)    & 0.57 (0.23) & 0.43 (0.17) &   5 (0)   &  \~5 (0)   &    1 (0)\\\addlinespace
\multicolumn{11}{@{}c@{}}{Model 4: $(n,p,q)=(400,200,200)$, $(a,b)=(0.5,1)$ or $(0.05,0.1)$}\\
Lasso   & 2.88 (0.36) & 0.79 (0.08) & 5.0 (0.0) & 71.4 (13.4) & 0.22 (0.04) & 1.68 (0.72) & 0.72 (0.21) & 5.0 (0.0) & 18.8 (7.2) & 0.52 (0.10)\\
SCAD    & 2.49 (0.29) & 0.83 (0.08) & 5.0 (0.0) & 42.2 (10.2) & 0.32 (0.05) & 1.54 (0.85) & 0.68 (0.24) & 5.0 (0.0) & 17.4 (7.5) & 0.55 (0.11)\\
MCP     & 2.46 (0.40) & 0.82 (0.08) & 5.0 (0.0) & 32.6 (10.6) & 0.38 (0.08) & 1.52 (0.88) & 0.73 (0.25) & 5.0 (0.1) & 12.8 (5.7) & 0.65 (0.14)\\
Oracle  & 0.94 (0.07) & 0.58 (0.08) &   5 (0)   &    5 (0)    &    1 (0)    & 0.42 (0.17) & 0.30 (0.12) &   5 (0)   &    5 (0)   &    1 (0)\\
\bottomrule
\end{tabular*}
\end{sidewaystable}

To facilitate performance comparisons among different methods with varying sample size, Figure \ref{fig:sim_low} depicts the trends in three performance measures with the dimensions $p=q=100$ fixed and the sample size $n$ varying from 200 to 1500. It is clear from Figure \ref{fig:sim_low} that the performance of the 2SR method improves consistently as the sample size increases, whereas the PLS method does not in general see performance gain and may even deteriorate. Moreover, a closer look at the tails of the curves for the 2SR method with different penalties reveals certain advantages of SCAD and MCP over the Lasso. There seems to be a nonvanishing gap between the Lasso and oracle estimators, which agrees with the existing theory in the context of linear regression that the Lasso does not possess the oracle property \citep{Zou:adap:2006}.

\begin{figure}\centering
\includegraphics[width=.5\textwidth]{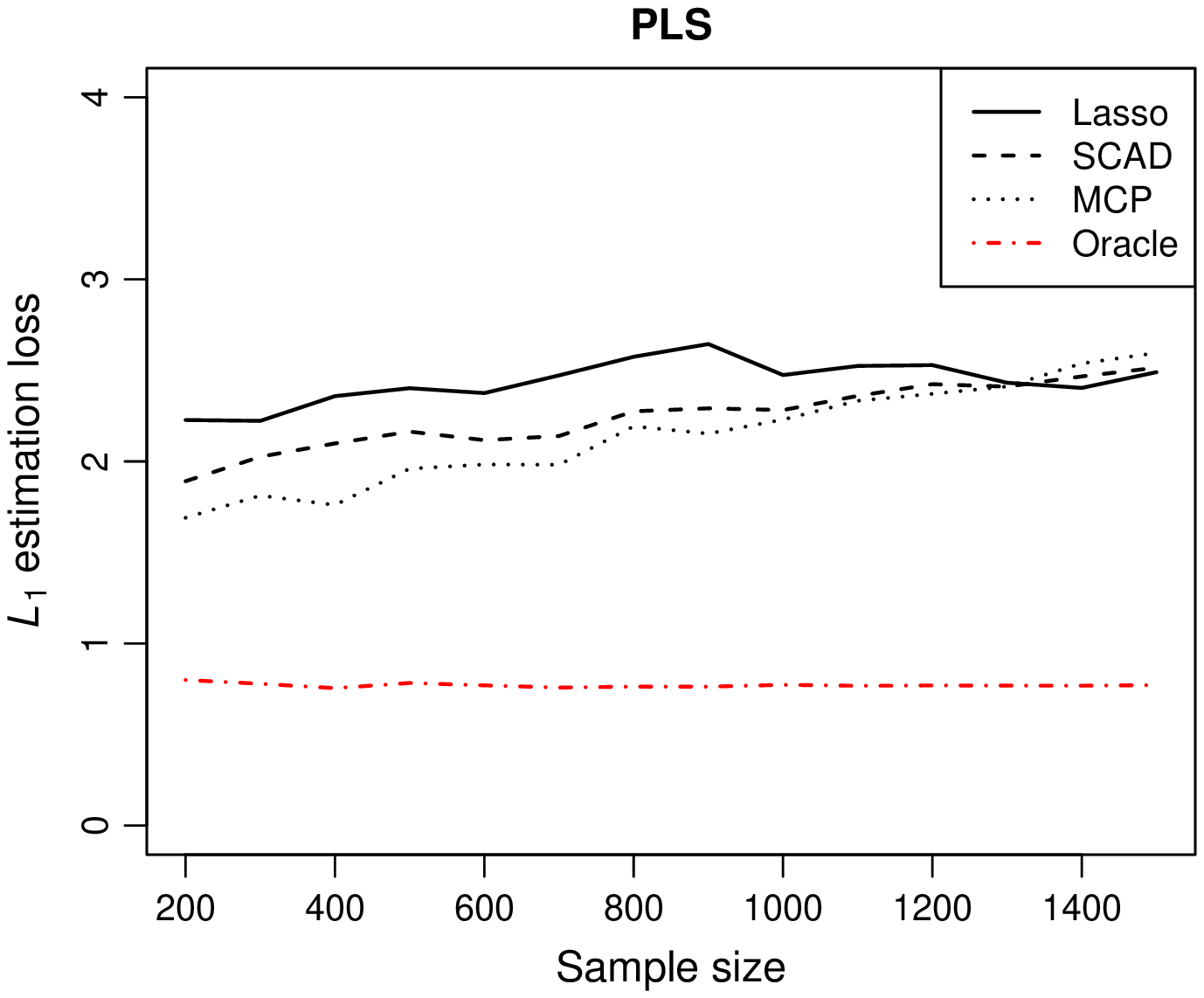}\includegraphics[width=.5\textwidth]{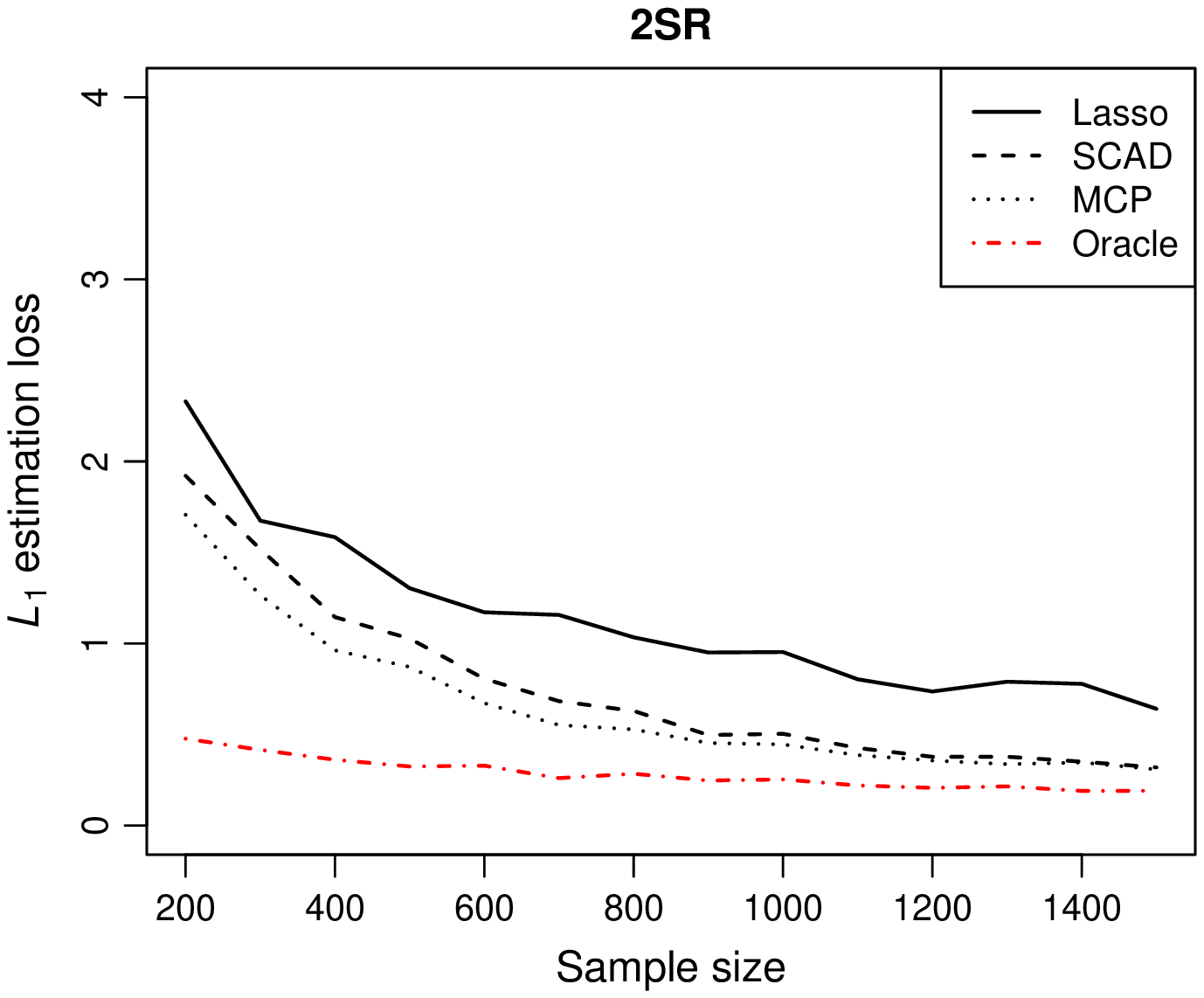}\\
\includegraphics[width=.5\textwidth]{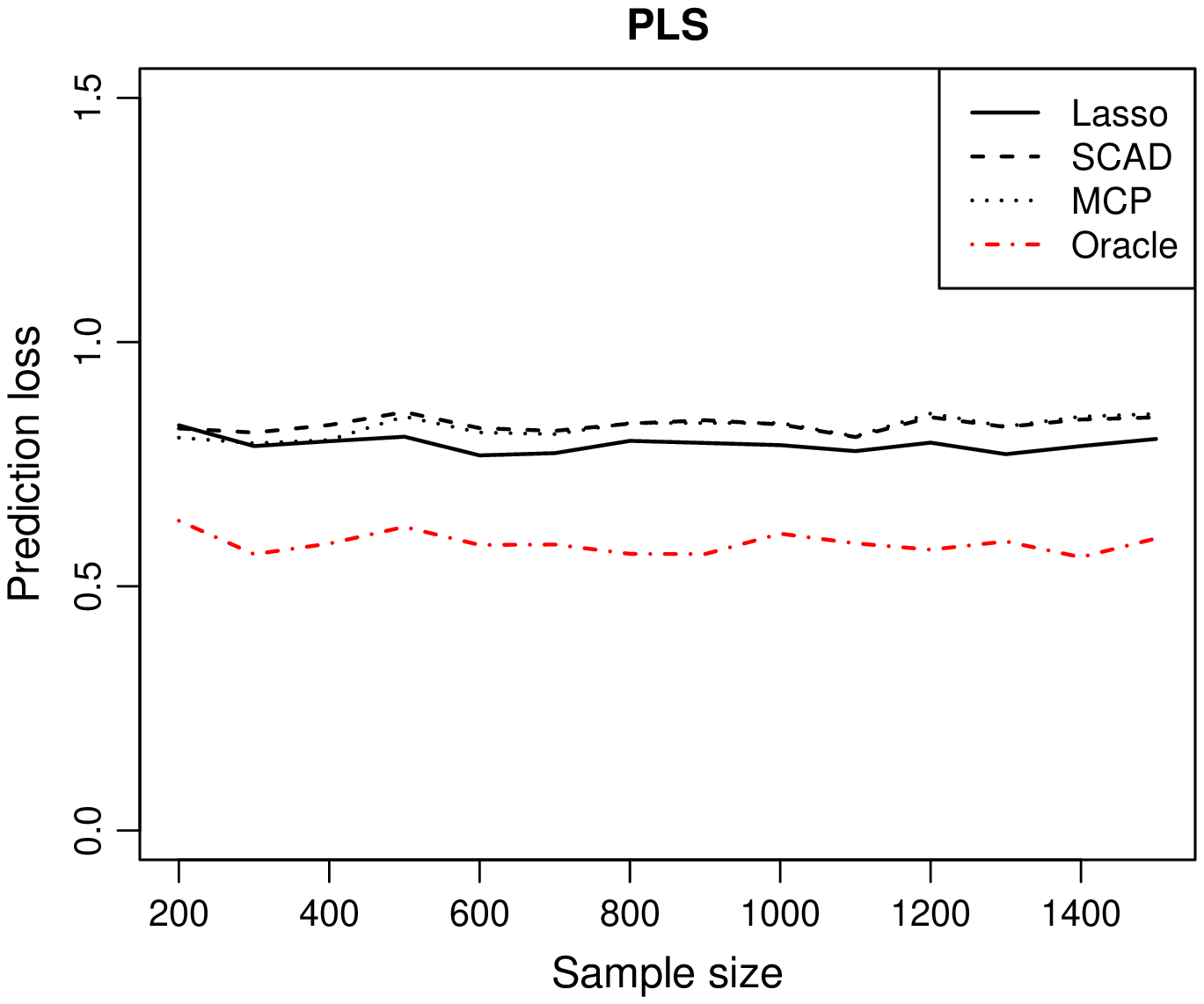}\includegraphics[width=.5\textwidth]{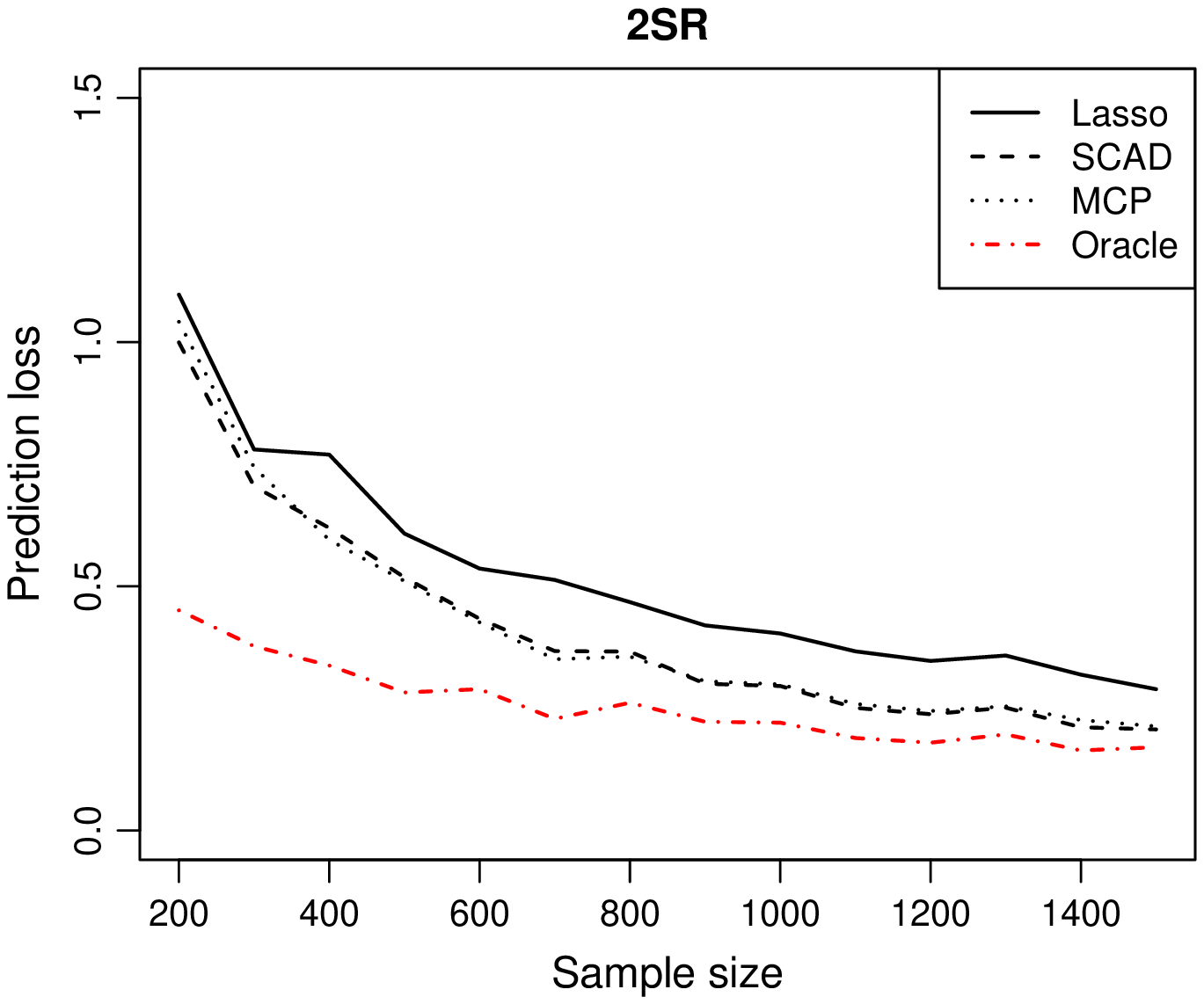}\\
\includegraphics[width=.5\textwidth]{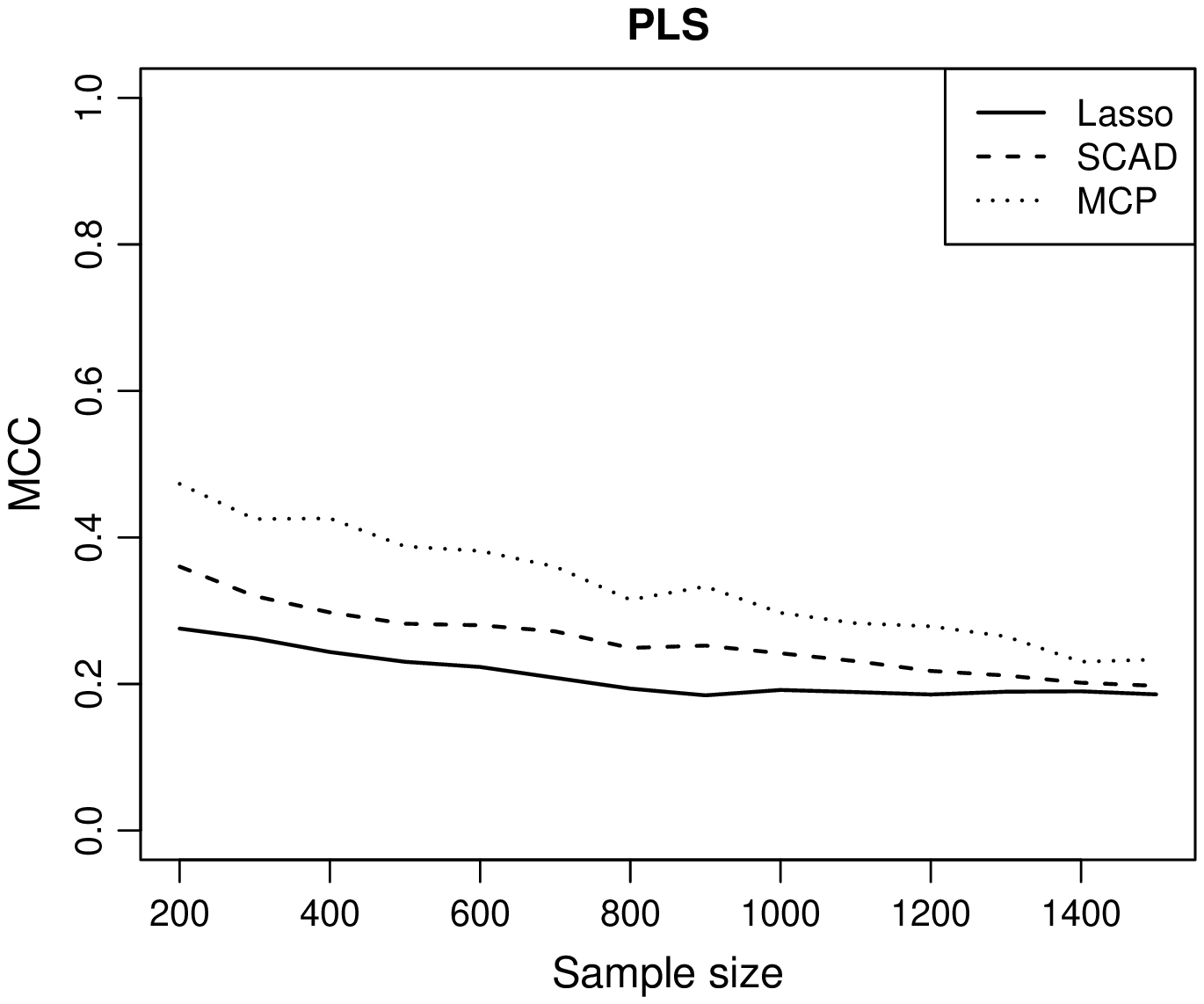}\includegraphics[width=.5\textwidth]{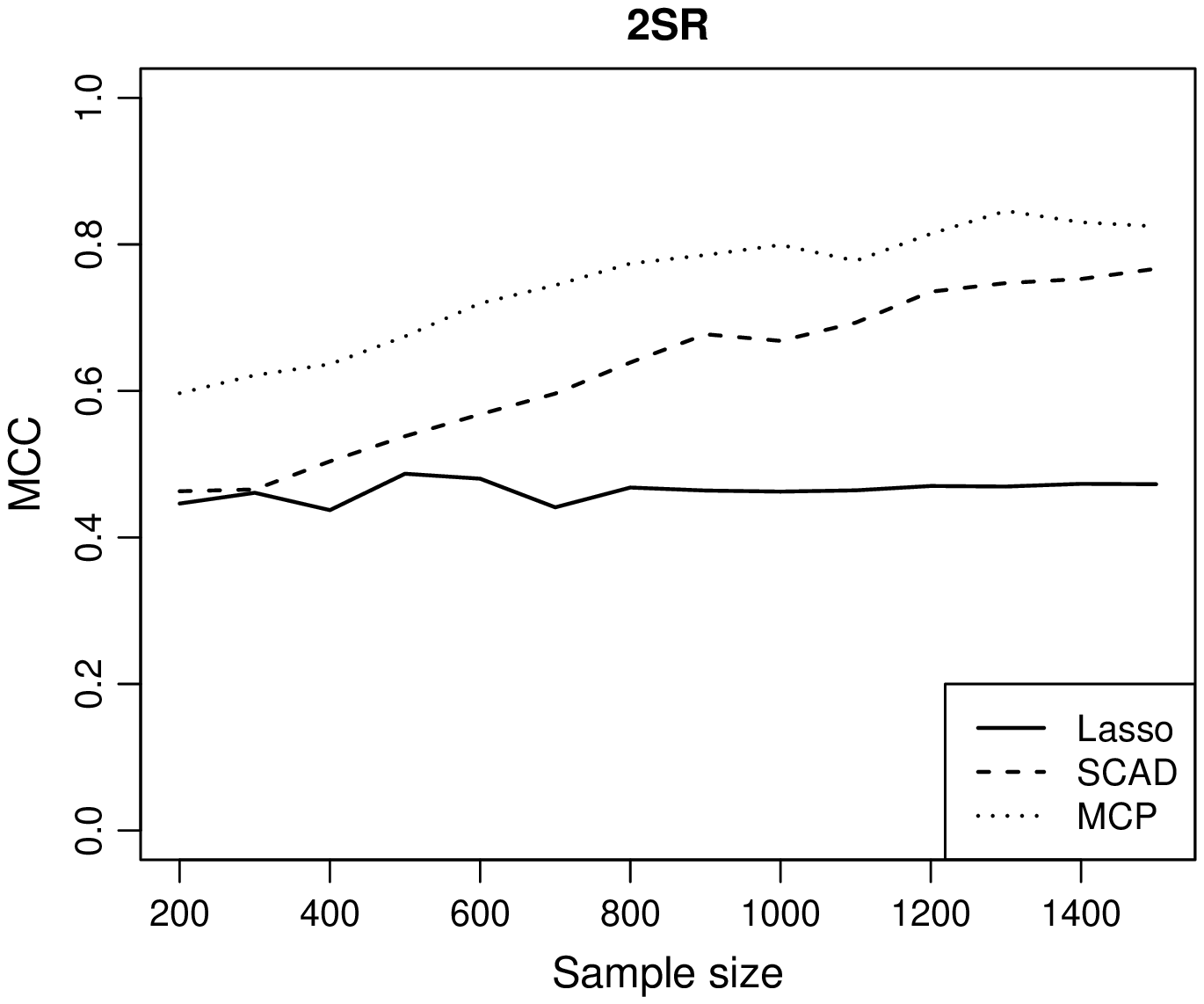}
\caption{Performance curves for different methods with the dimensions $p=q=100$ fixed and the sample size $n$ varying from 200 to 1500.}\label{fig:sim_low}
\end{figure}

We further study the case where the dimensions $p$ and $q$ are ultrahigh and larger than the sample size $n$. We considered four models with the same settings as in Models 1--4, except that $(n,p,q)=(300,600,600)$ in Model 5 and $(500,1000,1000)$ in Models 6--8. Table \ref{tab:sim_high} summarizes the simulation results for Models 5--8, and Figure \ref{fig:sim_high} shows the performance curves with $p=q=600$ fixed and $n$ varying from 200 to 1500. Trends in performance comparisons among different methods are similar to those in Table \ref{tab:sim_low} and Figure \ref{fig:sim_low}, demonstrating the advantages of the 2SR method over the PLS method. We observe that, although the ultrahigh dimensionality caused the 2SR method to select a larger model and resulted in a slightly lower MCC than in the previous settings, the performance of the 2SR method still compared favorably to the PLS method and the difference was pronounced for moderate sample sizes. These results suggest that the dimensionality has only mild impact on the performance of the 2SR method compared with the sample size, in agreement with our theoretical results in Section 4.

\begin{sidewaystable}
\def\~{\phantom{0}}
\caption{Simulation results for Models 5--8. Each performance measure was averaged over 50 replicates with standard deviation shown in parentheses}\label{tab:sim_high}\vskip1ex
\begin{tabular*}{\textwidth}{@{}l*{10}{@{\extracolsep{\fill}}c}@{}}
\toprule\toprule
& \multicolumn{5}{c}{PLS} & \multicolumn{5}{c@{}}{2SR}\\
\cmidrule{2-6}\cmidrule{7-11}
Method  & $L_1$ est.\ loss & Pred.\ loss & TP & Model size & MCC & $L_1$ est.\ loss & Pred.\ loss & TP & Model size & MCC\\
\midrule
\multicolumn{11}{@{}c@{}}{Model 5: $(n,p,q)=(300,600,600)$, $(a,b)=(0.75,1)$}\\
Lasso   & 2.22 (0.44) & 0.79 (0.15) & 5.0 (0.0) & 69.1 (20.0) & 0.26 (0.04) & 1.67 (0.81) & 0.78 (0.29) & 5.0 (0.0) & 25.5 (10.5) & 0.46 (0.10)\\
SCAD    & 2.03 (0.38) & 0.81 (0.17) & 5.0 (0.0) & 48.6 (14.1) & 0.32 (0.05) & 1.52 (0.74) & 0.70 (0.25) & 5.0 (0.0) & 26.3 (12.2) & 0.47 (0.12)\\
MCP     & 1.81 (0.34) & 0.79 (0.17) & 5.0 (0.0) & 28.7 (9.3)\~& 0.43 (0.07) & 1.26 (0.68) & 0.74 (0.30) & 5.0 (0.1) & 14.7 (7.1)\~& 0.62 (0.14)\\
Oracle  & 0.78 (0.08) & 0.57 (0.10) &   5 (0)   &    5 (0)    &    1 (0)    & 0.42 (0.15) & 0.38 (0.15) &   5 (0)   &    5 (0)    &    1 (0)\\\addlinespace
\multicolumn{11}{@{}c@{}}{Model 6: $(n,p,q)=(500,1000,1000)$, $(a,b)=(0.75,1)$}\\
Lasso   & 2.21 (0.46) & 0.80 (0.19) & 5.0 (0.0) & 87.1 (29.3) & 0.24 (0.04) & 1.28 (0.78) & 0.56 (0.23) & 5.0 (0.0) & 26.8 (13.3) & 0.46 (0.11)\\
SCAD    & 2.05 (0.36) & 0.83 (0.20) & 5.0 (0.0) & 61.2 (21.0) & 0.29 (0.06) & 0.93 (0.56) & 0.48 (0.23) & 5.0 (0.0) & 21.9 (12.3) & 0.54 (0.16)\\
MCP     & 1.82 (0.36) & 0.80 (0.20) & 5.0 (0.0) & 33.7 (15.0) & 0.41 (0.09) & 0.84 (0.63) & 0.49 (0.27) & 5.0 (0.0) & 14.1 (8.9)\~& 0.66 (0.16)\\
Oracle  & 0.76 (0.07) & 0.55 (0.10) &   5 (0)   &    5 (0)    &    1 (0)    & 0.29 (0.11) & 0.26 (0.11) &   5 (0)   &    5 (0)    &    1 (0)\\\addlinespace
\multicolumn{11}{@{}c@{}}{Model 7: $(n,p,q)=(500,1000,1000)$, $(a,b)=(0.5,0.75)$}\\
Lasso   & 2.65 (0.42) & 0.86 (0.16) & 5.0 (0.0) & 86.3 (27.1) & 0.24 (0.04) & 2.06 (1.32) & 0.77 (0.29) & 5.0 (0.0) & 27.6 (15.1) & 0.47 (0.14)\\
SCAD    & 2.39 (0.26) & 0.90 (0.17) & 5.0 (0.0) & 47.2 (19.0) & 0.34 (0.07) & 1.79 (0.80) & 0.70 (0.23) & 5.0 (0.0) & 28.9 (13.7) & 0.46 (0.15)\\
MCP     & 2.26 (0.27) & 0.89 (0.17) & 5.0 (0.0) & 28.3 (13.1) & 0.45 (0.11) & 1.52 (1.02) & 0.68 (0.31) & 5.0 (0.2) & 16.5 (10.5) & 0.61 (0.16)\\
Oracle  & 1.00 (0.07) & 0.61 (0.09) &   5 (0)   &    5 (0)    &    1 (0)    & 0.40 (0.15) & 0.30 (0.12) &   5 (0)   &    5 (0)    &    1 (0)\\\addlinespace
\multicolumn{11}{@{}c@{}}{Model 8: $(n,p,q)=(500,1000,1000)$, $(a,b)=(0.5,1)$ or $(0.05,0.1)$}\\
Lasso   & 2.68 (0.58) & 0.78 (0.09) & 5.0 (0.0) & 95.4 (37.8) & 0.23 (0.05) & 2.01 (0.80) & 0.73 (0.21) & 5.0 (0.0) & 29.5 (13.0) & 0.44 (0.09)\\
SCAD    & 2.38 (0.31) & 0.81 (0.08) & 5.0 (0.0) & 56.8 (20.2) & 0.30 (0.06) & 1.58 (0.63) & 0.65 (0.23) & 5.0 (0.1) & 25.7 (10.6) & 0.46 (0.10)\\
MCP     & 2.16 (0.31) & 0.80 (0.08) & 5.0 (0.0) & 30.3 (12.9) & 0.43 (0.10) & 1.45 (0.83) & 0.66 (0.24) & 5.0 (0.2) & 16.7 (9.1)\~& 0.59 (0.15)\\
Oracle  & 0.95 (0.07) & 0.58 (0.06) &   5 (0)   &    5 (0)    &    1 (0)    & 0.41 (0.14) & 0.29 (0.11) &   5 (0)   &    5 (0)    &    1 (0)\\
\bottomrule
\end{tabular*}
\end{sidewaystable}

\begin{figure}\centering
\includegraphics[width=.5\textwidth]{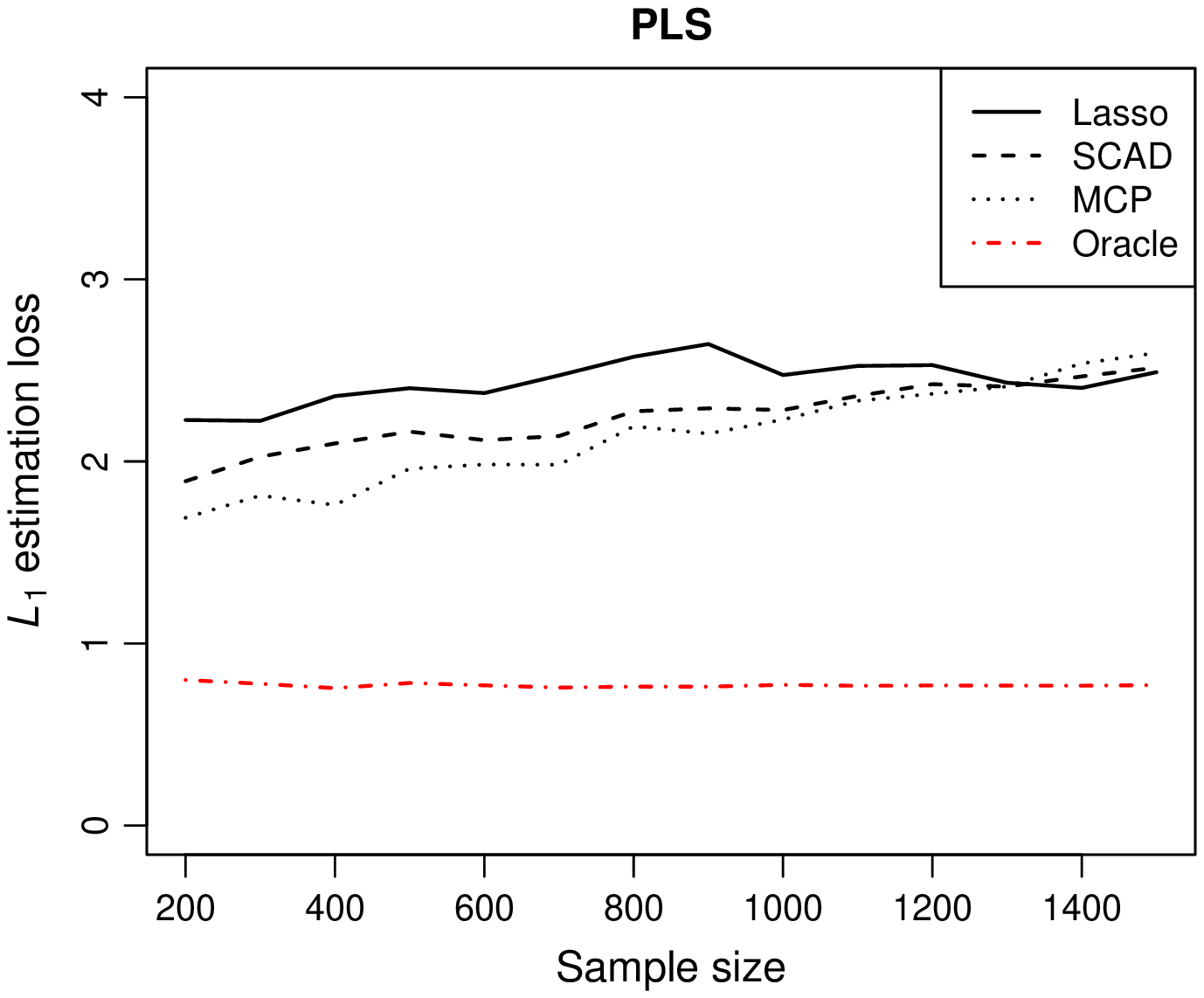}\includegraphics[width=.5\textwidth]{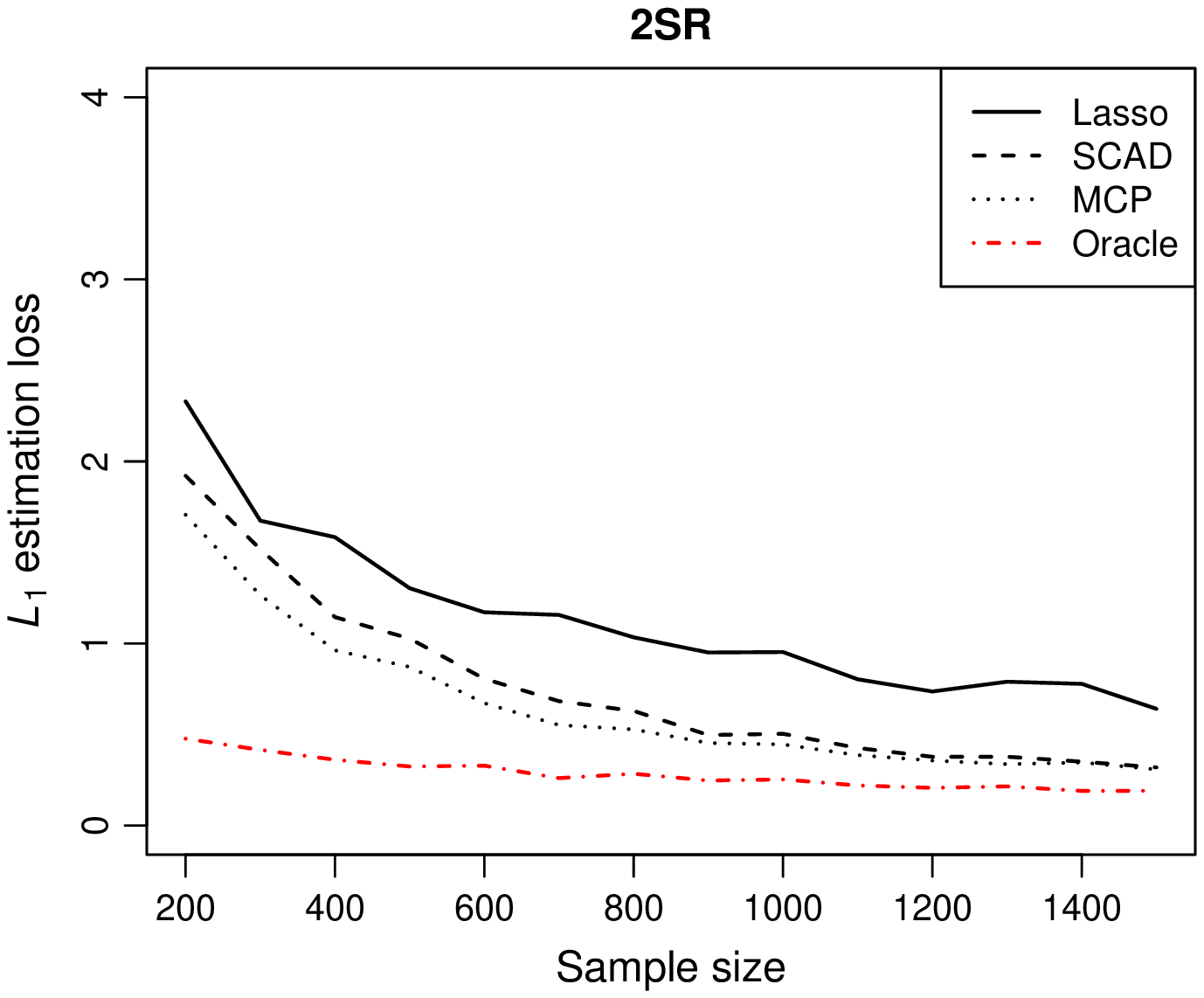}\\
\includegraphics[width=.5\textwidth]{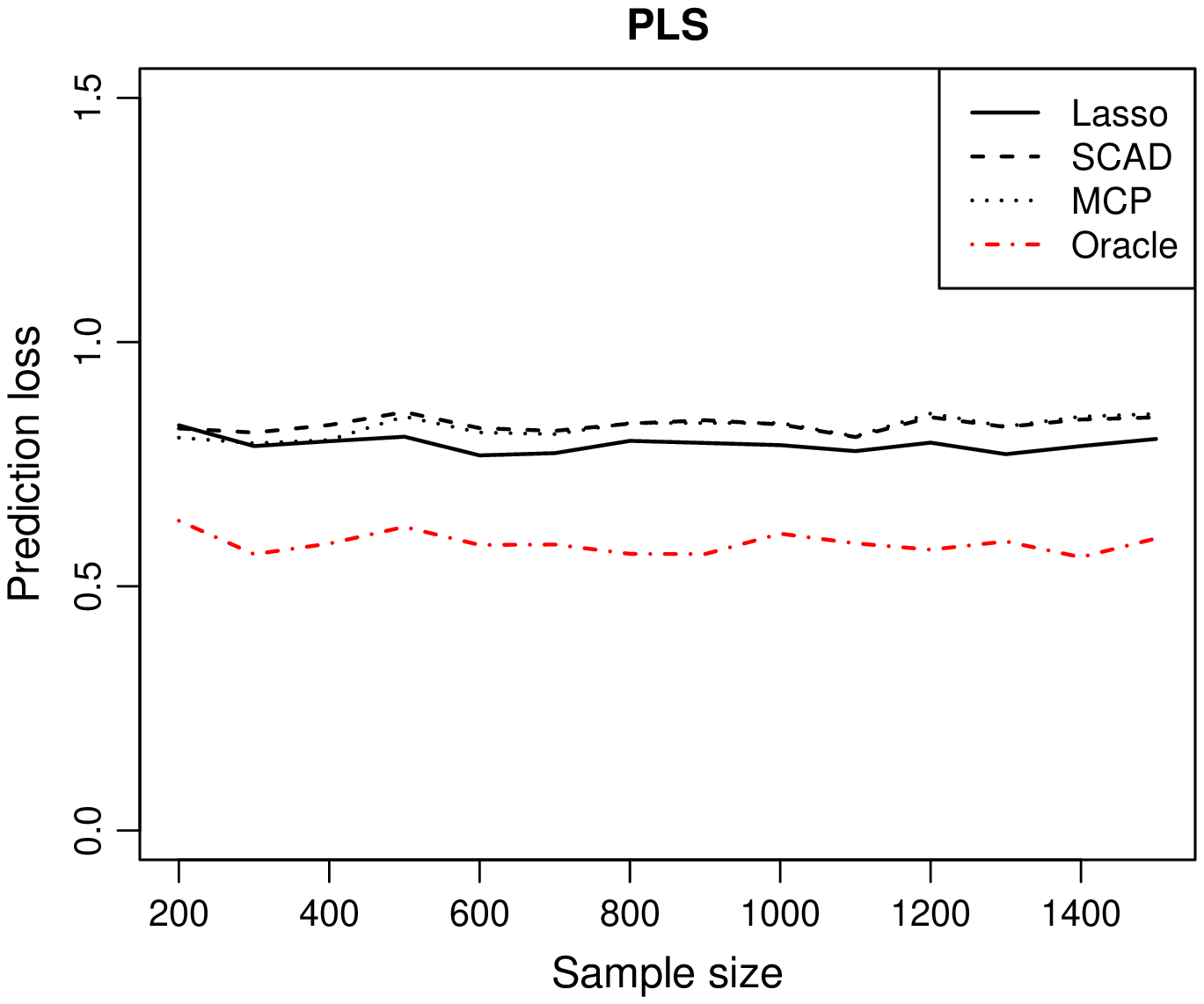}\includegraphics[width=.5\textwidth]{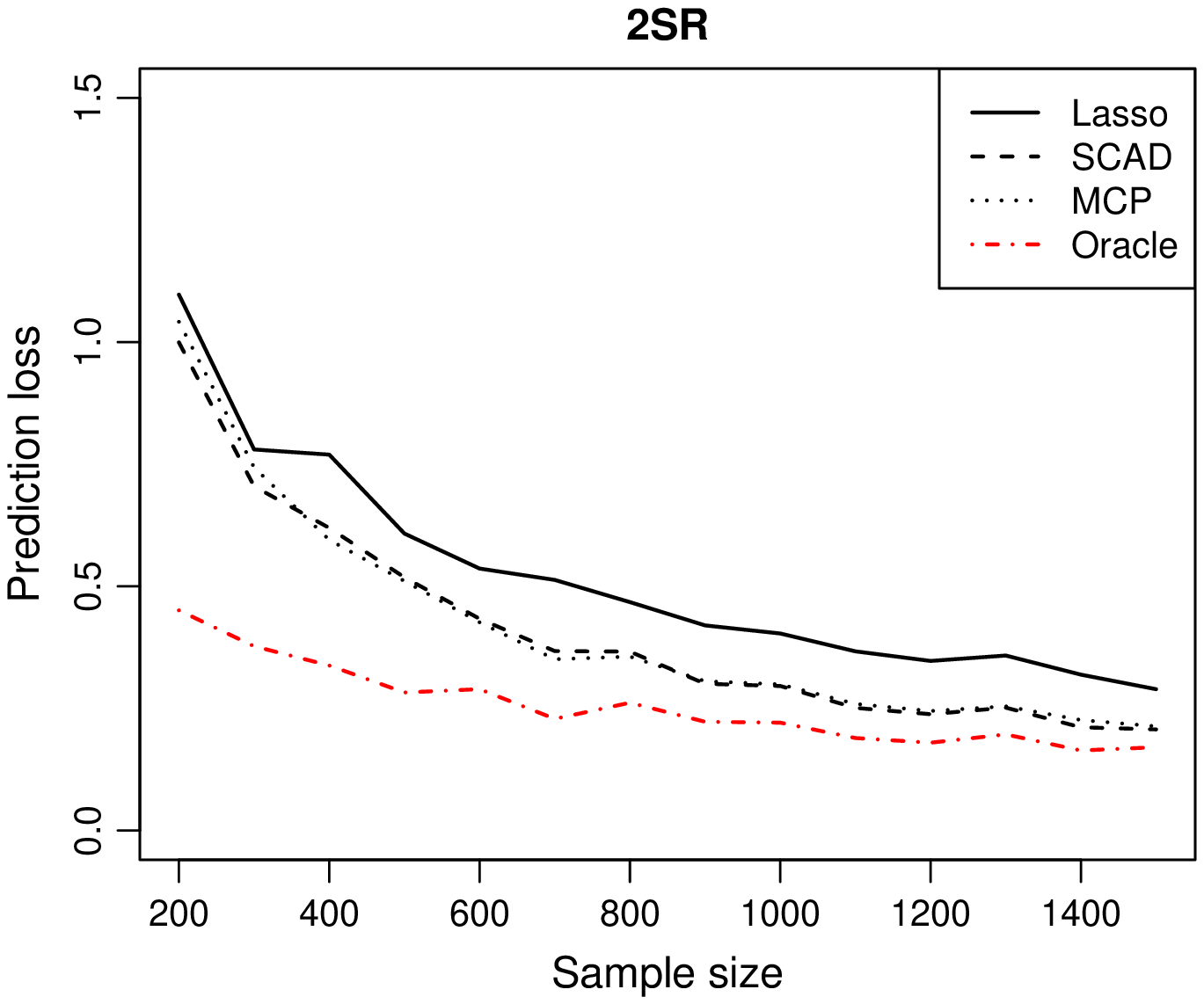}\\
\includegraphics[width=.5\textwidth]{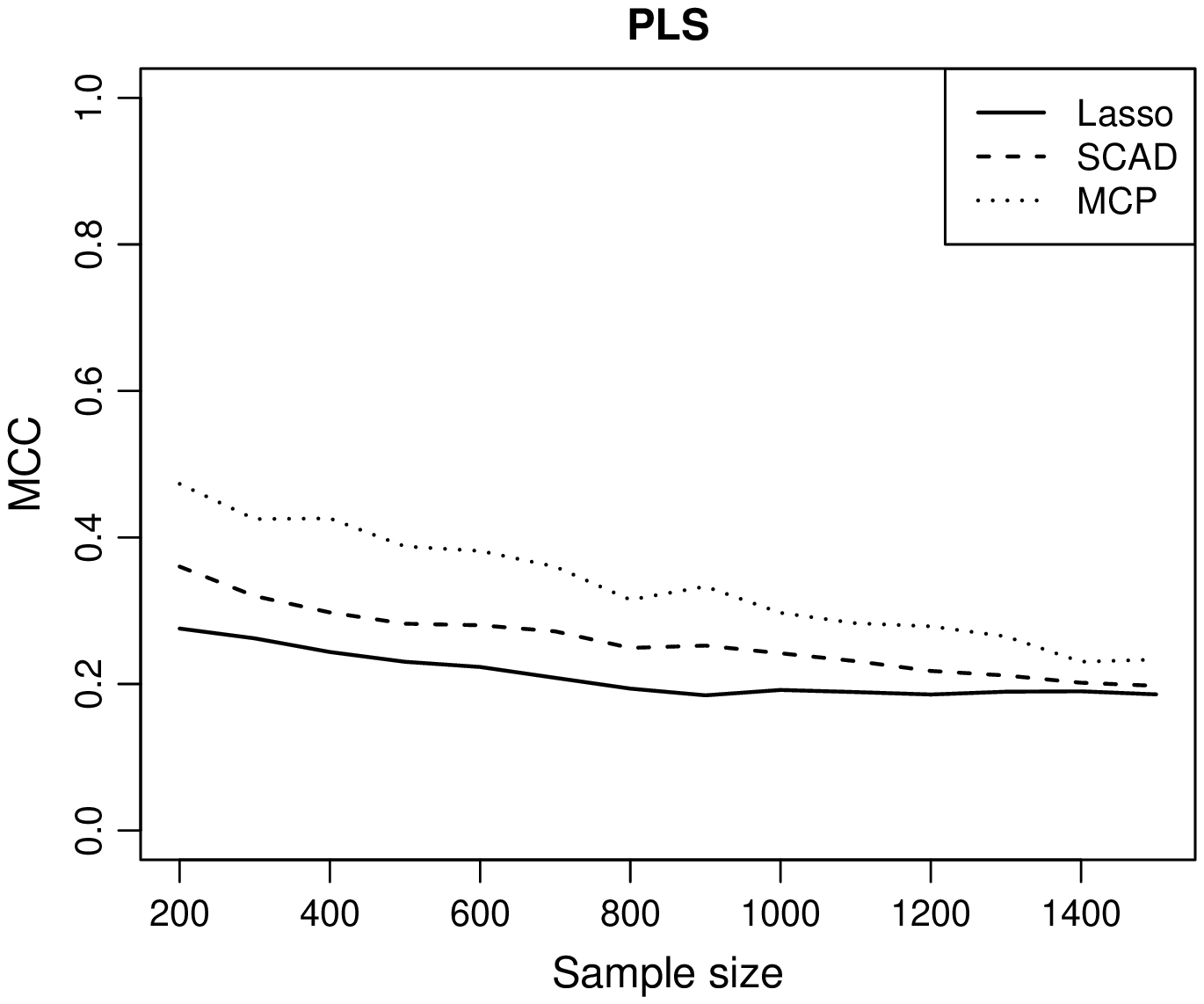}\includegraphics[width=.5\textwidth]{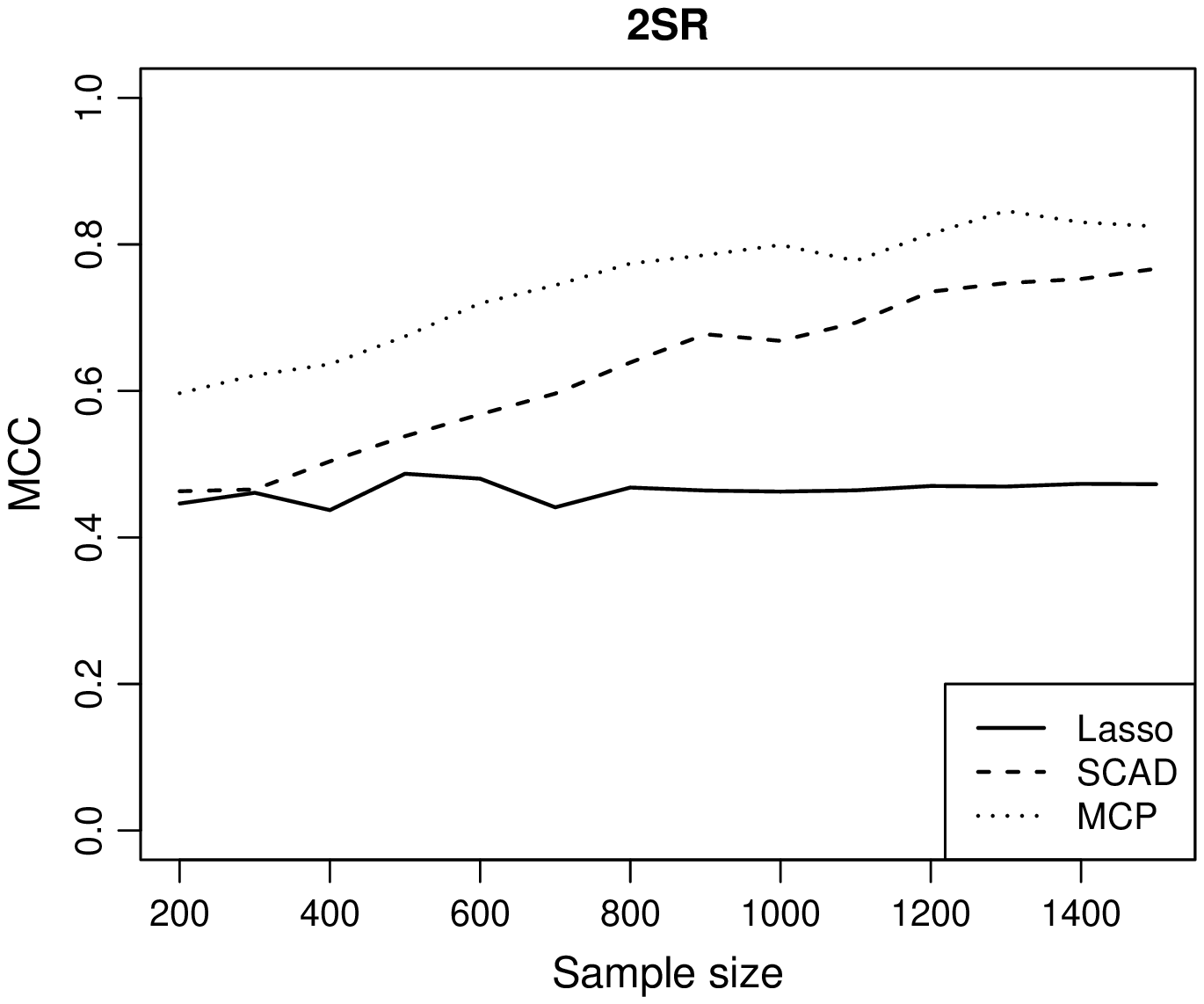}
\caption{Performance curves for different methods with the dimensions $p=q=600$ fixed and the sample size $n$ varying from 200 to 1500.}\label{fig:sim_high}
\end{figure}

\section{Analysis of Mouse Obesity Data}
To illustrate the application of the proposed method, in this section we present results from  the analysis of a mouse obesity data set described by \citet{Wang:Yehy:Scha:Wang:Drak:gene:2006}. The study includes an F2 intercross of 334 mice derived from the inbred strains C57BL/6J and C3H/HeJ on an apolipoprotein E (ApoE) null background, which were fed a high-fat Western diet from 8 to 24 weeks of age. The mice were genotyped using 1327 SNPs at an average density of 1.5 cM across the whole genome, and the gene expressions of the liver tissues of these mice were profiled on microarrays that include probes for 23,388 genes. Data on several obesity-related clinical traits were also collected on the animals. The genotype, gene expression, and clinical data are available for download, respectively, at \url{http://www.genetics.org/cgi/content/full/genetics.110.116087/DC1}, \url{ftp://ftp.ncbi.nlm.nih.gov/pub/geo/DATA/SeriesMatrix/GSE2814/}, and \url{http://labs.genetics.ucla.edu/horvath/CoexpressionNetwork/MouseWeight/}. Since the mice came from the same genetic cross, population stratification is unlikely an issue. Also, a study using a superset of these data demonstrated that most \emph{cis}-eQTLs were highly replicable across mouse crosses, tissues, and sexes \citep{Nas:Ingr:Sins:Wang:Scha:expr:2010}. Therefore, the three assumptions for valid IVs seem to be plausible.

After the individuals, SNPs, and genes with a missing rate greater than 0.1 were removed, the remaining missing genotype and gene expression data were imputed using the Beagle approach \citep{Brow:Brow:rapi:2007} and nearest neighbor averaging \citep{Troy:Cant:Sher:Brow:Hast:miss:2001}, respectively. Merging the genotype, gene expression, and clinical data yielded a complete data set with $q=1250$ SNPs and 23,184 genes on $n=287$ (144 female and 143 male) mice. To enhance the interpretability and stability of the results, we focus on the $p=2825$ genes that can be mapped to the Mouse Genome Database (MGD) \citep{Eppi:Blak:Bult:Kadi:Rich:mous:2012} and have standard deviation of gene expression levels greater than 0.1. The latter criterion is reasonable because gene expressions of too small variation are typically not of biological interest and suggest that the genetic perturbations may not be sufficiently strong for the genetic variants to be used as instruments.

Our goal is to jointly analyze the genotype, gene expression, and clinical data to identify important genes related to body weight. In order to utilize data from both sexes, we first adjusted the body weight for sex by fitting a marginal linear regression model with sex as the covariate and subtracting the estimated sex effect from the body weight. We then applied the proposed 2SR method with the Lasso, SCAD, and MCP penalties to the data set with adjusted body weight as the response. For comparison, we also applied the PLS method to the same data set, and used ten-fold cross-validation to choose the optimal tuning parameters for both methods. The models selected by cross-validation include 110 (Lasso), 49 (SCAD), and 16 (MCP) genes for the PLS method, and include 37 (Lasso), 15 (SCAD), and 9 (MCP) genes for the 2SR method. The selected models resulted in an adjusted $R^2$ of 0.894 (Lasso), 0.833 (SCAD), and 0.820 (MCP) for the PLS method, and 0.594 (Lasso), 0.581 (SCAD), and 0.579 (MCP) for the 2SR method, which is consistent with our remark following Proposition \ref{prop:gap}. Since we have no knowledge of the causal component $\bX\bbeta_0$ for the real data, a direct comparison between the PLS and 2SR methods in assessing the model fit is not possible. Nevertheless, we observe that the 2SR method produced a much sparser model with reasonably high proportion of variance explained.

To gain insight into the stability of the selection results, we followed the idea of stability selection \citep{Mein:Buhl:stab:2010} to compute the selection probability of each gene over 100 subsamples of size $\lfloor n/2\rfloor$ for each fixed value of the regularization parameter $\mu$. The resulting stability paths for different methods are displayed in Figure \ref{fig:path}. It is interesting to observe that, among the genes with maximum selection probability at least 0.4, only 5 (Lasso), 3 (SCAD), and 0 (MCP) genes are common to both the PLS and 2SR methods. As can be seen from Figure \ref{fig:path}, these few genes, which are reasonably conjectured to be among the truly relevant ones, stand out more clearly in the stability paths of the 2SR method. Moreover, the overall stability paths of the 2SR method seem less noisy and hence can be more useful for distinguishing the most important genes from the irrelevant ones.

\begin{figure}\centering
\includegraphics[width=.5\textwidth]{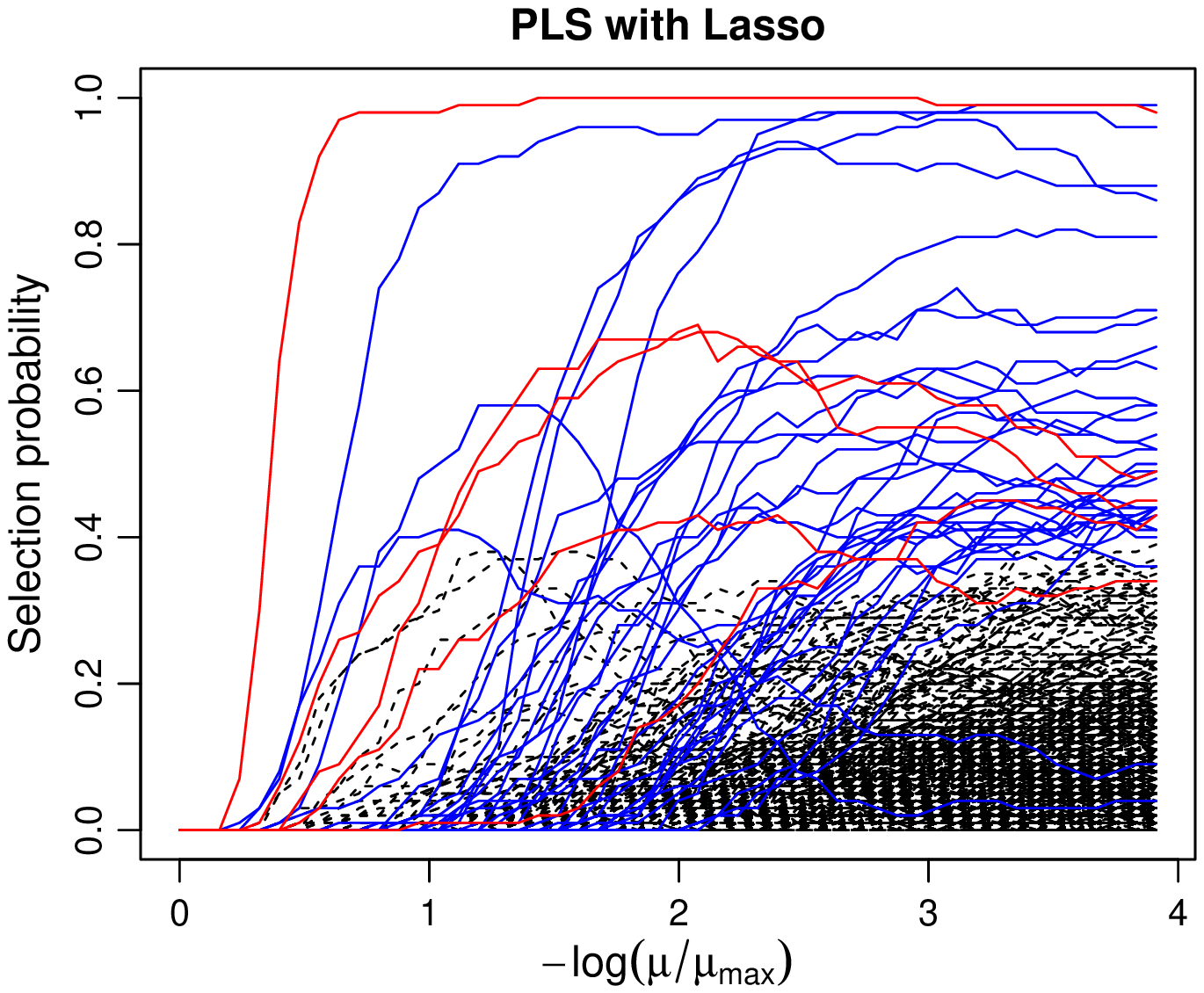}\includegraphics[width=.5\textwidth]{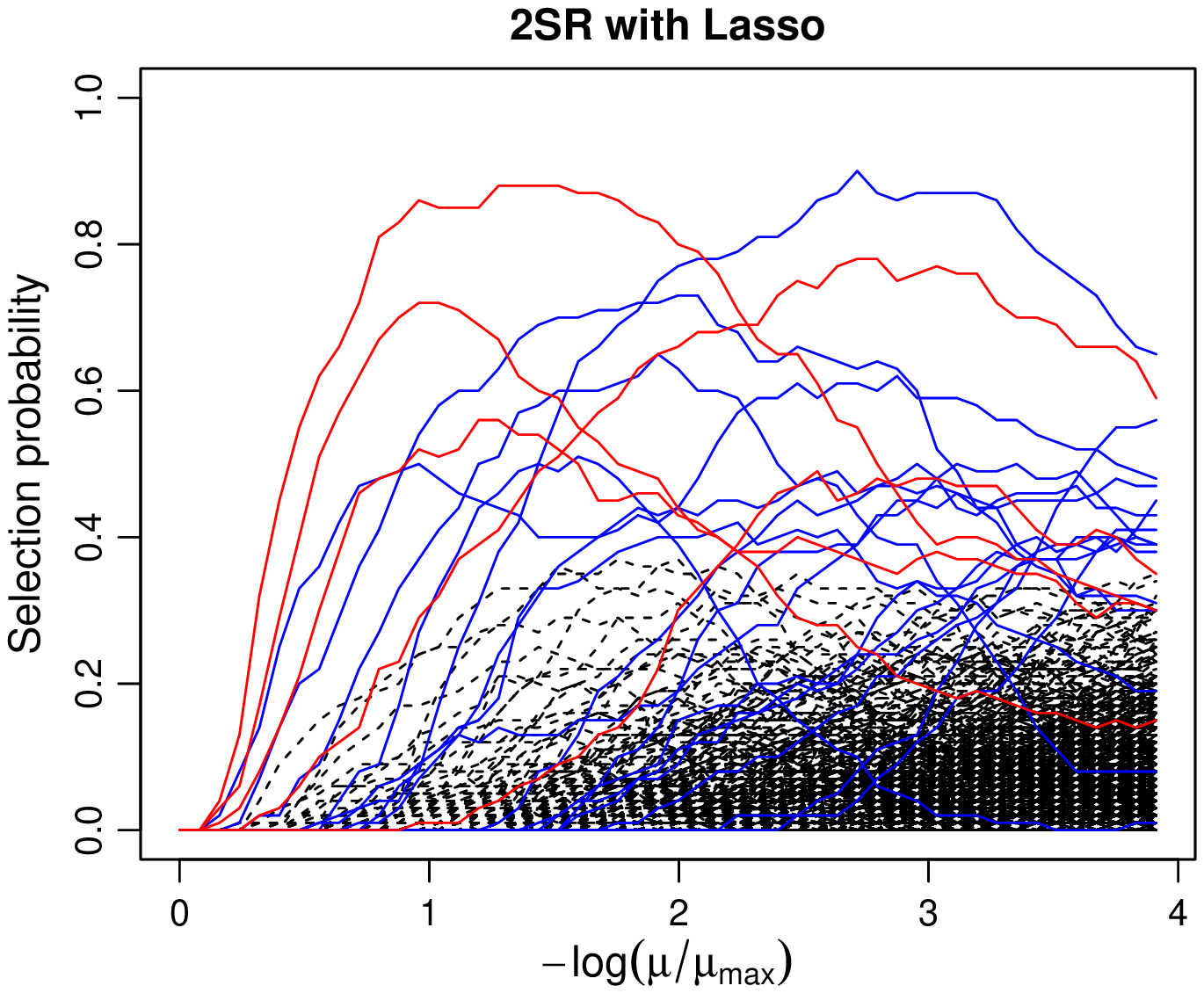}\\
\includegraphics[width=.5\textwidth]{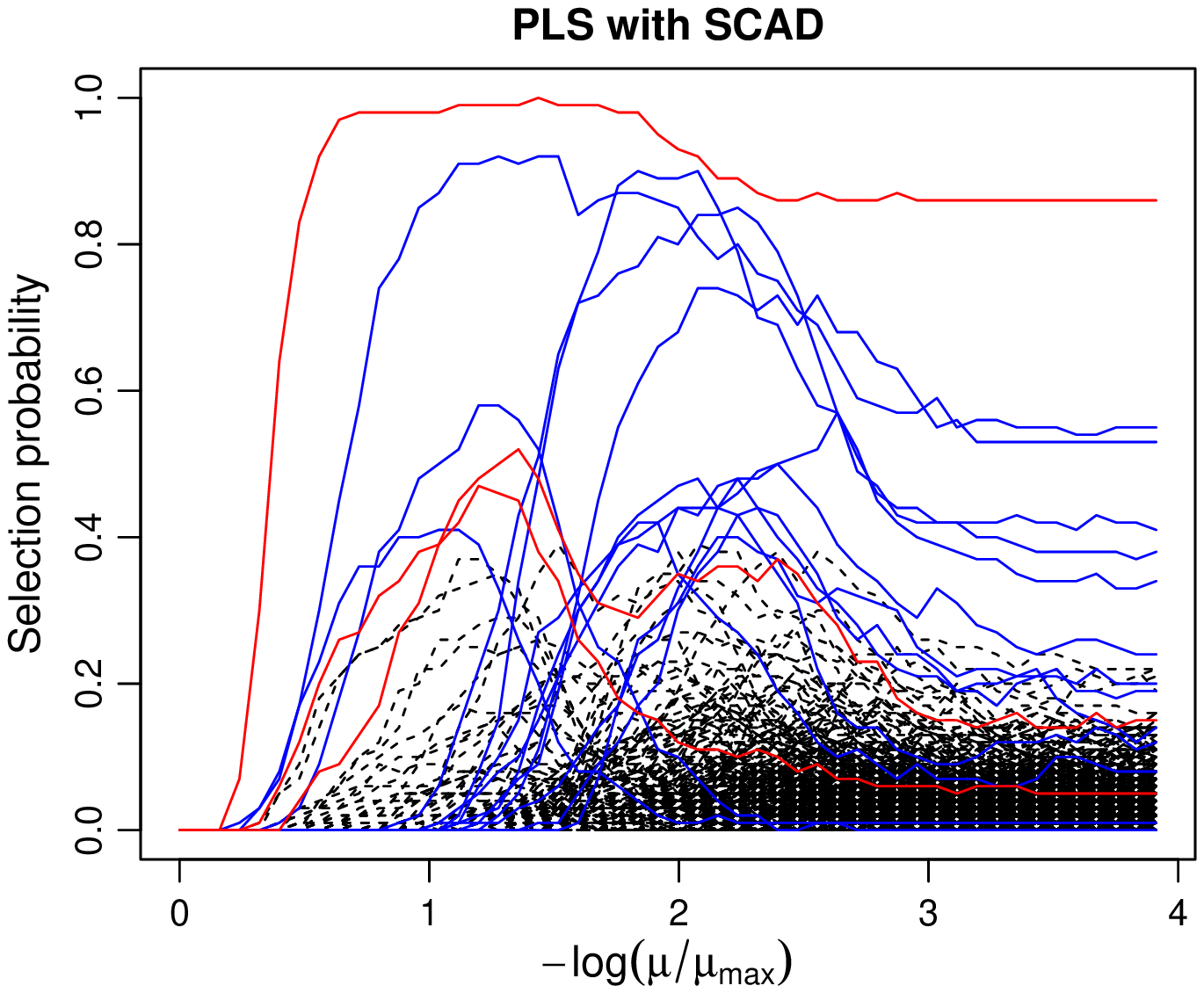}\includegraphics[width=.5\textwidth]{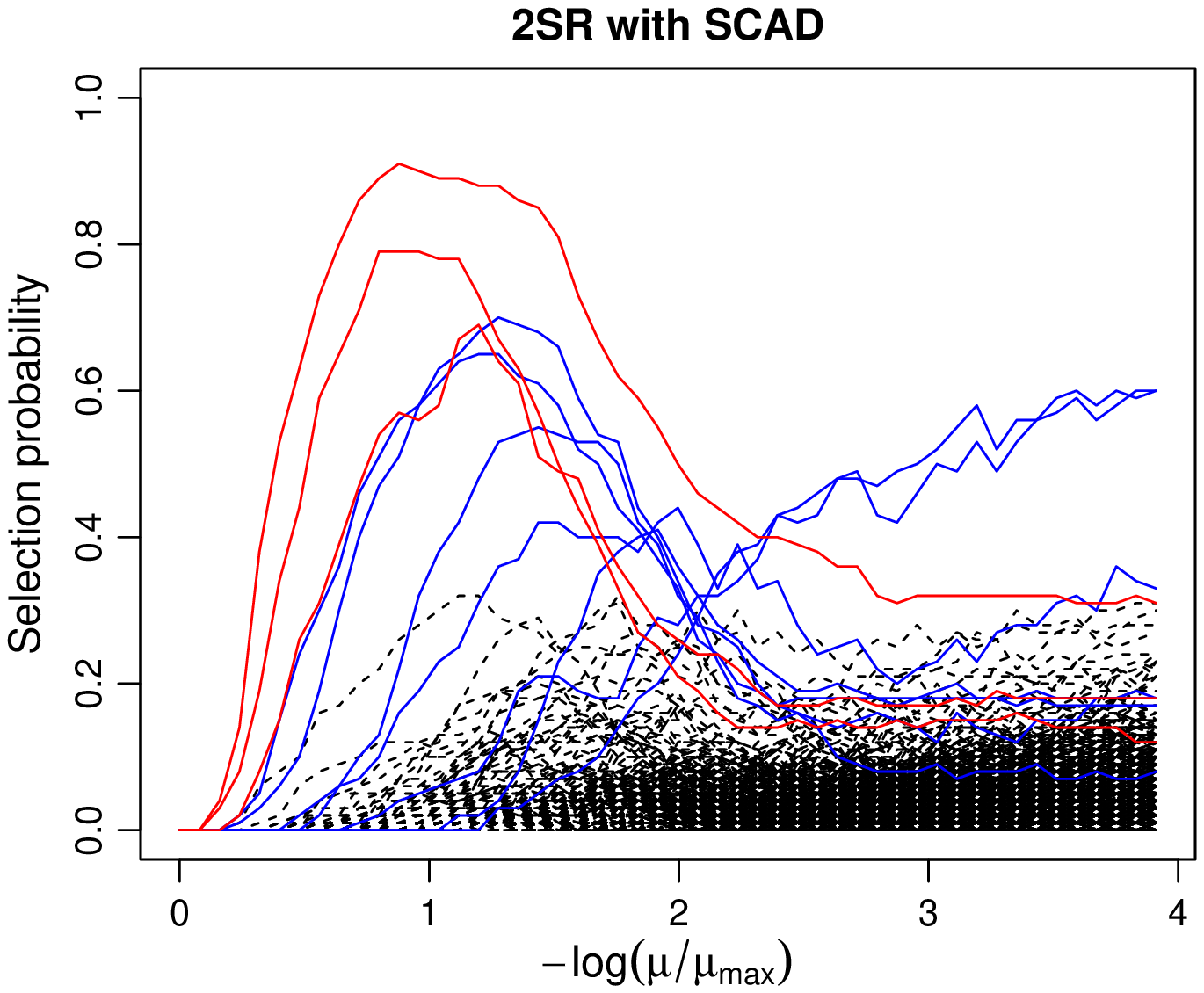}\\
\includegraphics[width=.5\textwidth]{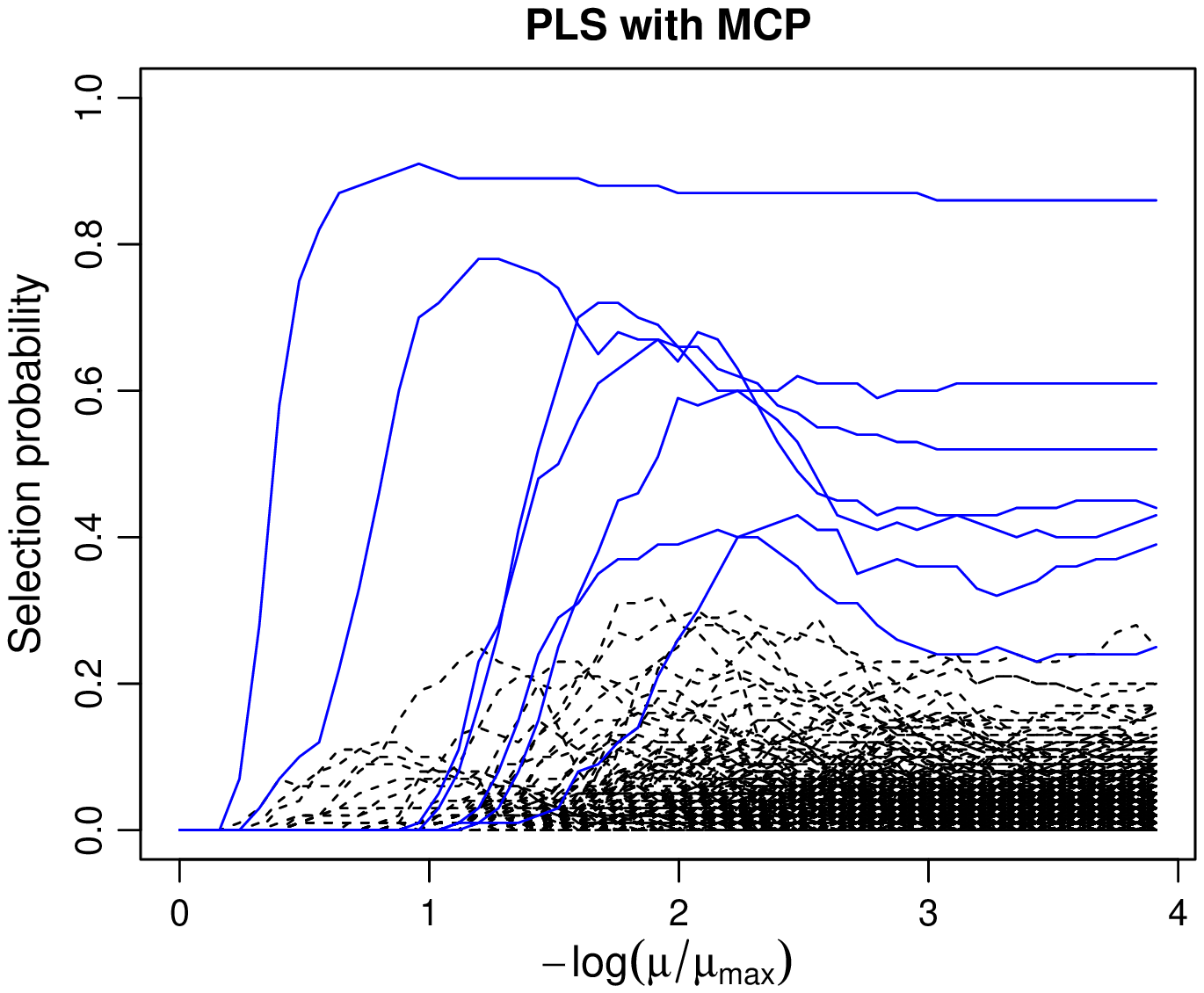}\includegraphics[width=.5\textwidth]{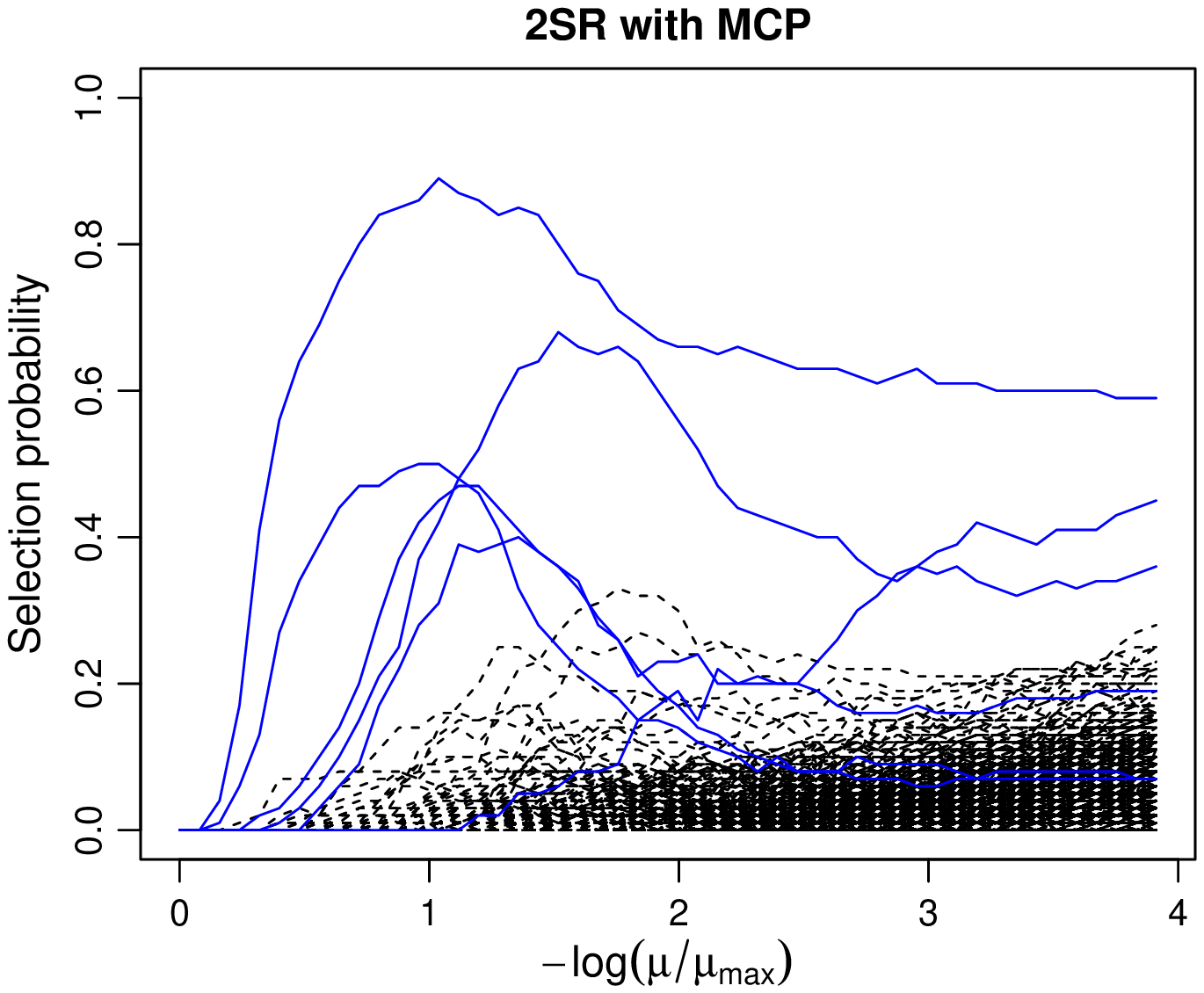}
\caption{Stability paths for different methods applied to the mouse obesity data based on 100 subsamples. Genes with maximum selection probability at least 0.4 are displayed in solid lines, among which genes common to both the PLS and 2SR methods are shown in red and the distinct ones in blue, and the remaining genes are displayed in dashed lines.}\label{fig:path}
\end{figure}

Table \ref{tab:mice_gene_stab} lists the genes that were chosen by stability selection with maximum selection probability at least 0.5 using the 2SR method with three different penalties. Among these 17 genes, only three were also selected by the PLS method. This includes insulin-like growth factor binding protein 2 (Igfbp2), which has been shown to protect against the development of obesity \citep{Whea:Kear:Shah:Ezza:Miel:igf-:2007}. Among the genes identified only by the 2SR method, apolipoprotein A-IV (Apoa4) plays an important role in lipoprotein metabolism and has been implicated in the control of food intake in rodents \citep{Tso:Sun:Liu:gast:2004}; it inhibits gastric emptying and serves as a satiety factor in response to ingestion of dietary fat. Apoa4 also acts as an enterogastrone, a humoral inhibitor of gastric acid secretion and motility \citep{Okum:Fuka:Tso:Tayl:Papp:intr:1994}, and is regulated by leptin, a major component of energy homeostasis \citep{Doi:Liu:Seel:Wood:Tso:effe:2001}. These previous findings suggest a potential role of Apoa4 in the regulation of food intake and, consequently, body weight. Suppressor of cytokine signaling 2 (Socs2) is a negative regulator in the growth hormone/insulin-like growth factor (IGF)-I signaling pathway \citep{Metc:Gree:Vine:Will:Star:giga:2000}, which is directly related to obesity. Slc22a3 is a downstream gene of the IGF signaling pathway. Recent studies have showed that the Gpld1 gene is associated with serum glycosylphosphatidylinositol-specific phospholipase D (GPI-PLD) levels, which predict changes in insulin sensitivity in response to a low-fat diet in obese women \citep{Gray:OBri:DAle:Breh:Deeg:plas:2008}. The IGF-binding protein also induces laminin gamma 1 (Lamc1) transcription \citep{Abra:Hans:insu:2010}. These identified genes clearly point out the importance of the IGF signaling pathway in the development of obesity in mice.

Table \ref{tab:mice_gene_stab} also presents the \emph{cis}-SNPs, which are defined to be the SNPs within a 10 cM distance of each gene, that are associated with the selected genes. These \emph{cis}-SNPs are likely to play a critical role in the regulation of the target genes and serve as strong instruments in statistical analysis. Not all selected genes have \emph{cis}-SNPs identified, partly due to the nonuniform, relatively sparse distribution of genotyped SNPs. If the criterion is relaxed to within 25 cM of each gene, we find that 13 of the 17 genes in the table have at least one \emph{cis}-SNP identified. Many of these \emph{cis}-SNPs coincide with QTLs detected for body weight traits in previous studies; for example, rs3663003 (Chr 1, 46.1 cM), rs4136518 (Chr 3, 54.6 cM), rs3694833 (Chr 10, 47.7 cM), rs4231406 (Chr 17, 12.0 cM), and rs3661189 (Chr 18, 27.5 cM) fall in previously detected QTL regions \citep{Roch:Eise:Vlec:Pomp:larg:2004}. Moreover, rs4231406 was previously identified as a QTL for atherosclerosis, which is strongly associated with body weight and adiposity \citep{Wang:Scha:Wang:Wang:Ingr:inde:2007}. These results demonstrate that our method can provide a more integrative, comprehensive understanding of the genetic architecture of complex traits than classical QTL analysis and gene expression studies, and would be useful for prioritizing candidate genes for complex diseases.

\begin{table}
\caption{Genes chosen by stability selection with maximum selection probability (values shown) at least 0.5 and \emph{cis}-SNPs (SNPs within 10 cM of each gene) identified by applying the 2SR method with different penalties to the mouse obesity data. Asterisks indicate genes that overlap those selected by the PLS method.}\label{tab:mice_gene_stab}\vskip1ex
\begin{tabular*}{\textwidth}{@{}l*{3}{@{\extracolsep{\fill}}c}@{\extracolsep{\fill}}l@{}}
\toprule\toprule
Gene          & Lasso & SCAD & MCP & \emph{cis}-SNPs\\
\midrule
Igfbp2$^*$    & 0.56 & 0.69 &      & rs3663003\\
Lamc1         &      & 0.70 &      & \\
Sirpa         & 0.51 & 0.55 &      & \\
Gstm2$^*$     & 0.88 & 0.91 & 0.89 & rs4136518\\
Ccnl2         & 0.50 &      &      & rs3720634\\
Glcci1        & 0.56 &      &      & \\
Vwf$^*$       & 0.72 & 0.79 & 0.50 & \\
Irx3          & 0.62 &      &      & \\
Apoa4         & 0.65 &      &      & \\
Socs2         &      &      & 0.68 & rs3694833\\
Avpr1a        & 0.78 &      &      & \\
Abca8a        & 0.50 &      &      & \\
Gpld1         & 0.50 &      &      & \\
Fam105a       &      & 0.60 &      & \\
Dscam         &      & 0.60 &      & \\
Slc22a3       & 0.90 &      &      & rs4137196, rs3722983, rs4231406\\
2010002N04Rik & 0.73 & 0.65 &      & rs3661189, rs3655324\\
\bottomrule
\end{tabular*}
\end{table}

\section{Discussion}
We have proposed a 2SR method for variable selection and estimation in sparse IV models where the dimensionality of covariates and instruments can both be much larger than the sample size. We have developed a high-dimensional theory that supports the theoretical advantages of our method and sheds light on the impact of dimensionality in the resulting procedure. We have applied our method to genetical genomics studies for jointly analyzing gene expression data and genetic variants to explore their associations with complex traits. The proposed method provides a powerful approach to effectively integrating and utilizing genotype, gene expression, and clinical data, which is of great importance for large-scale genomic inference. We have demonstrated on simulated and real data that our method is less affected by confounding and can lead to more reliable and biologically more interpretable results. Although we are primarily motivated by genetical genomics applications, the methodology is in fact very general and likely to find a wide range of applications in epidemiology, econometrics, and many other fields.

In our analysis of genetical genomics data, only genetic variants are used as instruments, and gene expressions that fail to be associated with any genetic variants in the first stage of the 2SR method have to be excluded at the second stage, which may comprise the inferences for genes with weak genetic effects. Epigenetic processes, such as DNA methylation, histone modification, and various RNA-mediated processes, are also known to play an essential role in the regulation of gene expression, and their influences on the gene expression levels may be profound \citep{Jaen:Bird:epig:2003}. Thus, when epigenetic data are also collected on the same subjects, they can be similarly treated as potential instruments in the sparse IV model. The joint consideration of genetic and epigenetic variants as instruments is likely to yield stronger instruments than using the genetic variants alone, which may lead to more reliable genomic inference.

Several extensions and improvements of the methodology are worthwhile to pursue. We have applied regularization methods to exploit the sparsity of individual coefficients, allowing the first stage to be decomposed into $p$ regression problems. While our general theory in Section \ref{sec:gen} applies to a generic first-stage estimator, the first-stage estimation and prediction could be improved by taking into account the correlations among the covariates and borrowing information across the $p$ subproblems. Two possibilities would be to exploit the structural sparsity of the coefficient matrix through certain matrix decompositions \citep[e.g.,][]{Chen:Chan:Sten:redu:2012,Chen:Huan:spar:2012}, and to jointly estimate the coefficient matrix and the covariance structure \citep[e.g.,][]{Roth:Levi:Zhu:spar:2010,Cai:Li:Liu:Xie:cova:2013}. Moreover, since the 2SR method is a high-dimensional extension of the classical 2SLS method, it would be natural to ask whether other IV estimators such as the LIML and GMM estimators can also be extended to our high-dimensional setting. Although asymptotic efficiency would not be a primary concern in high dimensions, certain advantages of these estimators in low dimensions may carry over and lead to performance improvement.

\appendix
\numberwithin{equation}{section}
\numberwithin{lem}{section}

\section*{Appendix: Proofs}
\addtocounter{section}{1}
\setcounter{equation}{0}

\subsection{Proof of Proposition \ref{prop:gap}}
We prove the result by contradiction. From \eqref{eq:bet_star} we have $\bX^T\bfeta_0=\bX^T\bX(\bbeta^*-\bbeta_0)$. If $\|\bbeta^*-\bbeta_0\|_1=o_P(1)$, then
\[
\left\|\frac{1}{n}\bX^T\bfeta_0\right\|_{\infty}\le\max_{1\le i,j\le p}\frac{1}{n}\bx_i^T\bx_j\|\bbeta^*-\bbeta_0\|_1\le\max_{1\le i,j\le p}\frac{1}{n}\|\bx_i\|_2\|\bx_j\|_2\|\bbeta^*-\bbeta_0\|_1=o_P(1).
\]
This yields a contradiction and completes the proof.

\subsection{Proof of Theorem \ref{thm:gam_est}}
Since the optimization problem \eqref{eq:opt1} can be decomposed into $p$ penalized least squares problems, the result is a straightforward extension of Theorem 7.2 of \citet{Bick:Rito:Tsyb:simu:2009} to the multivariate regression case. From Condition (C1) and the aforementioned result it follows that, with probability at least $1-q\exp(-n\lambda_j^2/(8\sigma_j^2))$, the regularized estimator $\what\bgamma_j$ defined by \eqref{eq:opt1'} with the $L_1$ penalty satisfies
\begin{equation}\label{eq:gam_j_est}
\|\what\bgamma_j-\bgamma_{0j}\|_1\le16r\lambda_j/\kappa_1^2
\end{equation}
and
\begin{equation}\label{eq:gam_j_pred}
\|\bZ(\what\bgamma_j-\bgamma_{0j})\|_2^2\le16nr\lambda_j^2/\kappa_1^2.
\end{equation}
Using the union bound, we have, with probability at least $1-\sum_{j=1}^pq\exp(-n\lambda_j^2/(8\sigma_j^2))$, the regularized estimator $\what\bGamma=(\what\bgamma_1,\dots,\what\bgamma_p)$ satisfies $\|\what\bGamma-\bGamma_0\|_1\le16r\lambda_{\max}/\kappa_1^2$ and $\|\bZ(\what\bGamma-\bGamma_0)\|_F^2\le16npr\lambda_{\max}^2/\kappa_1^2$. Now, if we choose $\lambda_j=C\sigma_j\sqrt{(\log p+\log q)/n}$ with a constant $C\ge2\sqrt{2}$, then with probability at least $1-(pq)^{1-C^2/8}$, the desired inequalities hold.

\subsection{Proof of Theorem \ref{thm:bet_est}}
The proof of Theorem \ref{thm:bet_est} relies on two key lemmas. Lemma \ref{lem:kappa} shows that Condition (C2), which is imposed on the matrix $\bZ\bGamma_0$, also holds with high probability for the matrix $\what\bX=\bZ\what\bGamma$. Lemma \ref{lem:ineq} establishes a fundamental inequality that is essential to the proof. To avoid repeatedly stating the probability bounds for certain inequalities to hold, we will occasionally condition on the events that these inequalities hold, and incorporate the probability bounds into the result by the union bound.

\begin{lem}\label{lem:kappa}
Under Conditions (C1) and (C2), if the regularization parameters $\lambda_j$ are chosen to satisfy
\begin{equation}\label{eq:rate1'}
\lambda_{\max}(2L+\lambda_{\max})\le\frac{\kappa_1^2\kappa_2^2}{32^2rs},
\end{equation}
then with probability at least $1-\sum_{j=1}^pq\exp(-n\lambda_j^2/(8\sigma_j^2))$, the matrix $\what\bX=\bZ\what\bGamma$, where $\what\bGamma$ is defined by \eqref{eq:opt1} with the $L_1$ penalty, satisfies
\[
\kappa(\what\bX,s)\ge\frac{\kappa_2}{2}.
\]
\end{lem}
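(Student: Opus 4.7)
The plan is to transfer the restricted eigenvalue lower bound from $\bZ\bGamma_0$, guaranteed by Condition (C2), to $\what\bX = \bZ\what\bGamma$ by showing that the first-stage perturbation $\bZ\bDelta := \bZ(\what\bGamma - \bGamma_0)$ costs at most a factor of two on the high-probability event governing Theorem \ref{thm:gam_est}. Fixing an arbitrary index set $J$ with $|J|\le s$ and a nonzero $\bdelta\in\mathbb{R}^p$ in the restricted cone $\|\bdelta_{J^c}\|_1 \le 3\|\bdelta_J\|_1$, the triangle inequality gives
\[
\|\bZ\what\bGamma\bdelta\|_2 \ge \|\bZ\bGamma_0\bdelta\|_2 - \|\bZ\bDelta\bdelta\|_2 \ge \sqrt{n}\,\kappa_2\|\bdelta_J\|_2 - \|\bZ\bDelta\bdelta\|_2,
\]
where Condition (C2) is used in the last step. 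It therefore suffices to prove $\|\bZ\bDelta\bdelta\|_2 \le (\sqrt{n}\,\kappa_2/2)\|\bdelta_J\|_2$ on the good event.

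For the perturbation, I would expand column-wise as $\bZ\bDelta\bdelta = \sum_{j=1}^{p}\delta_j\bZ(\what\bgamma_j-\bgamma_{0j})$ and combine the vector triangle inequality with the standard cone bound $\|\bdelta\|_1 \le 4\|\bdelta_J\|_1 \le 4\sqrt{s}\|\bdelta_J\|_2$ to get
\[
\|\bZ\bDelta\bdelta\|_2 \le \|\bdelta\|_1 \max_{1\le j\le p}\|\bZ(\what\bgamma_j - \bgamma_{0j})\|_2 \le 4\sqrt{s}\,\|\bdelta_J\|_2 \max_{1\le j\le p}\|\bZ(\what\bgamma_j - \bgamma_{0j})\|_2.
\]
The per-column prediction bound $\|\bZ(\what\bgamma_j - \bgamma_{0j})\|_2^2 \le 16nr\lambda_j^2/\kappa_1^2$ from Theorem \ref{thm:gam_est} holds simultaneously for $j=1,\dots,p$ on the event of probability $1-\sum_{j=1}^p q\exp(-n\lambda_j^2/(8\sigma_j^2))$ claimed by the lemma (after a union bound over $j$). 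Therefore $\max_j\|\bZ(\what\bgamma_j - \bgamma_{0j})\|_2 \le 4\sqrt{nr}\,\lambda_{\max}/\kappa_1$, and altogether $\|\bZ\bDelta\bdelta\|_2 \le (16\sqrt{nrs}/\kappa_1)\,\lambda_{\max}\|\bdelta_J\|_2$.

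It remains to invoke the rate assumption \eqref{eq:rate1'}. Since $2L + \lambda_{\max} \ge \lambda_{\max}$, \eqref{eq:rate1'} forces $\lambda_{\max}^2 \le \kappa_1^2\kappa_2^2/(32^2 rs)$, i.e.\ $\lambda_{\max} \le \kappa_1\kappa_2/(32\sqrt{rs})$. Substituting this back yields $\|\bZ\bDelta\bdelta\|_2 \le (\sqrt{n}\,\kappa_2/2)\|\bdelta_J\|_2$, hence $\|\bZ\what\bGamma\bdelta\|_2 \ge (\sqrt{n}\,\kappa_2/2)\|\bdelta_J\|_2$, and taking the infimum over admissible $J$ and $\bdelta$ delivers $\kappa(\what\bX,s) \ge \kappa_2/2$. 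The one step where a factor is easily lost is the perturbation estimate: lifting the $\ell_1$ estimation bound $\|\what\bGamma - \bGamma_0\|_1 \le 16r\lambda_{\max}/\kappa_1^2$ through the crude inequality $\|\bZ\bu\|_2 \le \sqrt{n}\|\bu\|_1$ would give a bound of order $\sqrt{ns}\,r\lambda_{\max}/\kappa_1^2$, a factor $\sqrt{r}/\kappa_1$ too weak to match the $rs$ scaling demanded by \eqref{eq:rate1'}. So the essential point is to use the sharper column-wise prediction-loss bound from Theorem \ref{thm:gam_est} directly, as done above.
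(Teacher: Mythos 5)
Your proof is correct, and it takes a genuinely different route from the paper's. The paper does not use the triangle inequality on $\|\what\bX\bdelta\|_2$; instead it bounds the Gram-matrix perturbation entrywise, writing $\bdelta^T(\what\bX^T\what\bX-(\bZ\bGamma_0)^T\bZ\bGamma_0)\bdelta\le\|\bdelta\|_1^2\max_{i,j}|\what\bx_i^T\what\bx_j-(\bZ\bgamma_{0i})^T\bZ\bgamma_{0j}|$ and splitting each entry into a quadratic term and two cross terms $(\what\bgamma_i-\bgamma_{0i})^T\bZ^T\bZ\bgamma_{0j}$; controlling those cross terms requires both the $L_1$ estimation bound $\|\what\bgamma_j-\bgamma_{0j}\|_1\le16r\lambda_j/\kappa_1^2$ and the constraint $\|\bGamma_0\|_1\le L$, which is exactly where the factor $2L$ in \eqref{eq:rate1'} comes from. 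Your column-wise decomposition $\|\bZ(\what\bGamma-\bGamma_0)\bdelta\|_2\le\|\bdelta\|_1\max_j\|\bZ(\what\bgamma_j-\bgamma_{0j})\|_2$ avoids the cross terms entirely, uses only the prediction bound \eqref{eq:gam_j_pred}, and, as you note, needs only $\lambda_{\max}^2\le\kappa_1^2\kappa_2^2/(32^2rs)$ rather than the full condition \eqref{eq:rate1'} --- so your argument is both shorter and establishes the lemma under a strictly weaker hypothesis (the constants work out exactly: $16\sqrt{rs}\,\lambda_{\max}/\kappa_1\le\kappa_2/2$). What the paper's heavier setup buys is the entrywise bound $\max_{i,j}n^{-1}|\what\bx_i^T\what\bx_j-(\bZ\bgamma_{0i})^T\bZ\bgamma_{0j}|\le16r\lambda_{\max}(2L+\lambda_{\max})/\kappa_1^2$, which is reused verbatim in Lemma \ref{lem:alpha} to perturb $\bC_{SS}$ and $\bC_{S^cS}$ in the $\|\cdot\|_\infty$ matrix norm; your $L_2$-operator-style bound would not directly serve that later purpose. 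Your closing remark about the lossy shortcut $\|\bZ\bu\|_2\le\sqrt{n}\|\bu\|_1$ is also accurate: routing through the $L_1$ estimation bound would cost an extra factor of order $\sqrt{r}/\kappa_1$.
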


\begin{proof}
For any subset $J\subset\{1,\dots,p\}$ with $|J|\le s$ and any $\bdelta\in\mathbb{R}^p$ with $\bdelta\ne\bzero$ and $\|\bdelta_{J^c}\|_1\le3\|\bdelta_J\|_1$, we have
\begin{align*}
\frac{\bdelta^T(\what\bX^T\what\bX-(\bZ\bGamma_0)^T\bZ\bGamma_0)\bdelta}{n\|\bdelta_J\|_2^2}
\le\frac{\|\bdelta\|_1^2\max_{1\le i,j\le p}|\what\bx_i^T\what\bx_j-(\bZ\bgamma_{0i})^T\bZ\bgamma_{0j}|}{n\|\bdelta_J\|_2^2}.
\end{align*}
Since $\|\bdelta_{J^c}\|_1\le3\|\bdelta_J\|_1$, we have $\|\bdelta\|_1^2=(\|\bdelta_J\|_1+\|\bdelta_{J^c}\|_1)^2\le16\|\bdelta_J\|_1^2\le16s\|\bdelta_J\|_2^2$. To bound the entrywise maximum, we write
\begin{align*}
&\what\bx_i^T\what\bx_j-(\bZ\bgamma_{0i})^T\bZ\bgamma_{0j}\\
&\quad=(\what\bx_i-\bZ\bgamma_{0i})^T(\what\bx_j-\bZ\bgamma_{0j}) +(\what\bx_i-\bZ\bgamma_{0i})^T\bZ\bgamma_{0j}+(\bZ\bgamma_{0i})^T(\what\bx_j-\bZ\bgamma_{0j})\\
&\quad=(\what\bgamma_i-\bgamma_{0i})^T\bZ^T\bZ(\what\bgamma_j-\bgamma_{0j})+(\what\bgamma_i-\bgamma_{0i})^T\bZ^T\bZ\bgamma_{0j} +(\bZ\bgamma_{0i})^T\bZ(\what\bgamma_j-\bgamma_{0j})\\
&\quad\equiv T_1+T_2+T_3.
\end{align*}
We now condition on the event that \eqref{eq:gam_j_est} and \eqref{eq:gam_j_pred} in the proof of Theorem 1 hold for $j=1,\dots,p$, which occurs with probability at least $1-\sum_{j=1}^pq\exp(-n\lambda_j^2/(8\sigma_j^2))$. Then, by the Cauchy--Schwarz inequality,
\[
|T_1|\le\|\bZ(\what\bgamma_i-\bgamma_{0i})\|_2\|\bZ(\what\bgamma_j-\bgamma_{0j})\|_2\le16nr\lambda_{\max}^2/\kappa_1^2.
\]
Also, noting that $\|\bz_j\|_2=\sqrt{n}$ by standardization and $\|\bGamma_0\|_1\le L$, we have
\begin{align*}
|T_2|&\le\|\what\bgamma_i-\bgamma_{0i}\|_1\max_{1\le k,l\le q}|\bz_k^T\bz_l|\|\bgamma_{0j}\|_1\\
&\le\|\what\bgamma_i-\bgamma_{0i}\|_1\max_{1\le k,l\le q}\|\bz_k\|_2\|\bz_l\|_2\|\bgamma_{0j}\|_1\le16Lnr\lambda_{\max}/\kappa_1^2.
\end{align*}
Similarly, $|T_3|\le16Lnr\lambda_{\max}/\kappa_1^2$. Combining these bounds and using the assumption \eqref{eq:rate1'}, we obtain
\[
\frac{\bdelta^T(\what\bX^T\what\bX-(\bZ\bGamma_0)^T\bZ\bGamma_0)\bdelta}{n\|\bdelta_J\|_2^2}\le\frac{16^2rs}{\kappa_1^2} \lambda_{\max}(2L+\lambda_{\max})\le\frac{16^2rs}{\kappa_1^2}\cdot\frac{\kappa_1^2\kappa_2^2}{32^2rs}=\left(\frac{\kappa_2}{2}\right)^2.
\]
This, together with Condition (C2), proves the lemma.
\end{proof}

\begin{lem}\label{lem:ineq}
Under Conditions (C1) and (C2), if $\kappa_1^{-2}r(\log p+\log q)=O(n)$ and the regularization parameters $\lambda_j$ are chosen as in \eqref{eq:lam}, then there exist constants $c_0,c_1,c_2>0$ such that, if we choose
\[
\mu=\frac{C_0}{\kappa_1}\sqrt{\frac{r(\log p+\log q)}{n}},
\]
where $C_0=c_0L\max(\sigma_{p+1},M\sigma_{\max})$, then with probability at least $1-c_1(pq)^{-c_2}$, the regularized estimator $\what\bbeta$ defined by \eqref{eq:opt2} with the $L_1$ penalty satisfies
\[
\frac{1}{2n}\|\what\bX(\what\bbeta-\bbeta_0)\|_2^2+\frac{\mu}{2}\|\what\bbeta-\bbeta_0\|_1\le2\mu\|\what\bbeta_S-\bbeta_{0S}\|.
\]
\end{lem}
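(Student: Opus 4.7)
The plan is to mimic the standard Lasso basic-inequality argument, adapted to the fact that the ``design matrix'' $\what\bX$ is itself random and the response-equation noise is correlated with the covariates. Starting from the optimality of $\what\bbeta$, i.e., $\frac{1}{2n}\|\by - \what\bX\what\bbeta\|_2^2 + \mu\|\what\bbeta\|_1 \le \frac{1}{2n}\|\by - \what\bX\bbeta_0\|_2^2 + \mu\|\bbeta_0\|_1$, rearranging and applying H\"older's inequality to the cross term yield $\frac{1}{2n}\|\what\bX(\what\bbeta - \bbeta_0)\|_2^2 \le T\|\what\bbeta - \bbeta_0\|_1 + \mu(\|\bbeta_0\|_1 - \|\what\bbeta\|_1)$, where $T = \|\what\bX^T(\by - \what\bX\bbeta_0)/n\|_\infty$. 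The key step is to show $T \le \mu/2$ with probability $1 - c_1(pq)^{-c_2}$; once this is done, splitting $\what\bbeta$ along $S = \supp(\bbeta_0)$ and $S^c$ via $\|\bbeta_0\|_1 - \|\what\bbeta\|_1 \le \|\what\bbeta_S - \bbeta_{0S}\|_1 - \|\what\bbeta_{S^c}\|_1$ and $\|\what\bbeta - \bbeta_0\|_1 = \|\what\bbeta_S - \bbeta_{0S}\|_1 + \|\what\bbeta_{S^c}\|_1$, then adding $\frac{\mu}{2}\|\what\bbeta - \bbeta_0\|_1$ to both sides of the basic inequality, delivers the conclusion after cancellation.

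To bound $T$, substitute $\by = \bX\bbeta_0 + \bfeta$, $\bX - \what\bX = \bZ(\bGamma_0 - \what\bGamma) + \bE$, and $\what\bX = \bZ\what\bGamma$; this decomposes $\what\bX^T(\by - \what\bX\bbeta_0)/n$ into (a) $\what\bGamma^T(\bZ^T\bZ/n)(\bGamma_0 - \what\bGamma)\bbeta_0$, which reflects the first-stage error; (b) $\what\bGamma^T\bZ^T\bE\bbeta_0/n$, a noise term from the covariate equation; and (c) $\what\bGamma^T\bZ^T\bfeta/n$, a noise term from the response equation. For (a), coordinate-wise H\"older gives the $j$-th component $\le \|\what\bgamma_j\|_1 \cdot \|(\bZ^T\bZ/n)\bh\|_\infty$ with $\bh = (\bGamma_0 - \what\bGamma)\bbeta_0$; Cauchy--Schwarz combined with $\|\bz_i\|_2 = \sqrt{n}$ bounds the latter factor by $\|\bZ\bh\|_2/\sqrt{n}$, and the triangle inequality together with the first-stage prediction bound \eqref{eq:gam_j_pred} yields $\|\bZ\bh\|_2 \le M \cdot 4\sqrt{nr}\lambda_{\max}/\kappa_1$. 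The estimation bound \eqref{eq:gam_j_est} gives $\|\what\bgamma_j\|_1 \le L + 16r\lambda_{\max}/\kappa_1^2 = O(L)$ under the sample-size condition, so term (a) is $O(LM\sigma_{\max}\sqrt{r(\log p + \log q)/n}/\kappa_1)$, matching the order of $\mu$.

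For (b) and (c), factor out $\|\what\bGamma\|_1 = O(L)$ by H\"older and use Gaussian concentration. This is where one must exploit that the rows $(\bve_i^T, \eta_i)$ are mean-zero multivariate normal conditional on $\bZ$: although $\bE$ and $\bfeta$ are mutually correlated, each of $\bE\bbeta_0$ and $\bfeta$ is, conditional on $\bZ$, a Gaussian vector \emph{independent} of $\bZ$, with per-coordinate variance at most $M^2\sigma_{\max}^2$ (via $\bbeta_0^T\bSigma_{\bve\bve}\bbeta_0 \le \|\bbeta_0\|_1^2\sigma_{\max}^2$) and $\sigma_{p+1}^2$ respectively. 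Standard Gaussian tail bounds with a union bound over $q$ coordinates yield $\|\bZ^T\bE\bbeta_0/n\|_\infty = O_P(M\sigma_{\max}\sqrt{(\log q)/n})$ and $\|\bZ^T\bfeta/n\|_\infty = O_P(\sigma_{p+1}\sqrt{(\log q)/n})$, so (b) and (c) are of strictly smaller order than (a). Choosing $c_0$ sufficiently large in $C_0 = c_0 L \max(\sigma_{p+1}, M\sigma_{\max})$ makes the sum at most $\mu/2$ on the desired event.

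The principal obstacle is term (a), where the interaction between $\what\bGamma$ and $\bGamma_0 - \what\bGamma$ through the Gram matrix $\bZ^T\bZ/n$ produces an extra $\sqrt{r}$ factor that dictates the rate of $\mu$ in \eqref{eq:mu}; this is the price paid for plugging in the first-stage estimator, and its bound hinges on reusing both \eqref{eq:gam_j_est} and \eqref{eq:gam_j_pred}. A second subtle point is the treatment of term (c): although $\bfeta$ and $\what\bX$ are correlated through $\bE$, conditioning on $\bZ$ decouples $\bZ$ from the Gaussian vector $\bfeta$, restoring the independence structure required for standard concentration. This is exactly the ``good use'' of the assumption that $\bE$ and $\bfeta$ are mean zero conditional on $\bZ$ flagged in the remark following Theorem \ref{thm:bet_sel}.
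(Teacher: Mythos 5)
Your proposal is correct in its overall architecture and follows essentially the same route as the paper: the Lasso basic inequality, a high-probability bound $\|n^{-1}\what\bX^T(\by-\what\bX\bbeta_0)\|_\infty\le\mu/2$ obtained by substituting $\what\bX=\bZ\what\bGamma$ and $\bX=\bZ\bGamma_0+\bE$ and reusing the first-stage bounds \eqref{eq:gam_j_est}--\eqref{eq:gam_j_pred} together with Gaussian tails conditional on $\bZ$, and the final add-$\frac{\mu}{2}\|\what\bbeta-\bbeta_0\|_1$-and-cancel step. Your identification of the bilinear term (a) as the rate-determining one, and of conditioning on $\bZ$ as the device that decouples $\bfeta$ from the instruments, matches the paper's reasoning exactly.

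The one place where your argument needs repair is the claim that $\|\what\bgamma_j\|_1\le L+16r\lambda_{\max}/\kappa_1^2=O(L)$ ``under the sample-size condition.'' The hypothesis $\kappa_1^{-2}r(\log p+\log q)=O(n)$ only gives $r\lambda_{\max}/\kappa_1^2=O(\sigma_{\max}\sqrt{r}/\kappa_1)$, which need not be $O(L)$; the stronger relation \eqref{eq:rate1} would be needed to control it. The paper sidesteps this entirely by never bounding $\|\what\bGamma\|_1$: it splits $\what\bGamma=(\what\bGamma-\bGamma_0)+\bGamma_0$, producing six terms, and for each term involving $\what\bGamma-\bGamma_0$ it pairs the factor $16r\lambda_{\max}/\kappa_1^2$ directly against the relevant noise or prediction-error bound, so only $\|\bGamma_0\|_1\le L$ and the first-stage error bounds are ever used. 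Your coarser three-term grouping is fine provided you either perform this further splitting on each of (a), (b), (c), or explicitly invoke \eqref{eq:rate1} (which is assumed wherever the lemma is applied) to justify $\|\what\bGamma\|_1=O(L)$. With that adjustment the constants close and the proof goes through as in the paper.
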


\begin{proof}
By the optimality of $\what\bbeta$, we have
\[
\frac{1}{2n}\|\by-\what\bX\what\bbeta\|_2^2+\mu\|\what\bbeta\|_1\le\frac{1}{2n}\|\by-\what\bX\bbeta_0\|_2^2+\mu\|\bbeta_0\|_1.
\]
Substituting $\by=\bX\bbeta_0+\bfeta$, we write
\begin{align*}
\|\by-\what\bX\what\bbeta\|_2^2&=\|\bfeta-(\what\bX\what\bbeta-\bX\bbeta_0)\|_2^2\\
&=\|\bfeta\|_2^2+\|\what\bX\what\bbeta-\bX\bbeta_0\|_2^2-2\bfeta^T(\what\bX\what\bbeta-\bX\bbeta_0)\\
&=\|\bfeta\|_2^2+\|\what\bX(\what\bbeta-\bbeta_0)+(\what\bX-\bX)\bbeta_0\|_2^2-2\bfeta^T(\what\bX\what\bbeta-\bX\bbeta_0)\\
&=\|\bfeta\|_2^2+\|\what\bX(\what\bbeta-\bbeta_0)\|_2^2+\|(\what\bX-\bX)\bbeta_0\|_2^2-2\bfeta^T(\what\bX\what\bbeta-\bX\bbeta_0)\\ &\pheq{}+2\bbeta_0^T(\what\bX-\bX)^T\what\bX(\what\bbeta-\bbeta_0)
\end{align*}
and
\[
\|\by-\what\bX\bbeta_0\|_2^2=\|\bfeta-(\what\bX-\bX)\bbeta_0\|_2^2=\|\bfeta\|_2^2+\|(\what\bX-\bX)\bbeta_0\|_2^2-2\bfeta^T(\what\bX-\bX)\bbeta_0.
\]
Combining the last three displays yields
\begin{align}\label{eq:lem_ineq}
\begin{split}
\frac{1}{2n}\|\what\bX(\what\bbeta-\bbeta_0)\|_2^2 &\le\mu\|\bbeta_0\|_1-\mu\|\what\bbeta\|_1+\frac{1}{n}\bfeta^T\what\bX(\what\bbeta-\bbeta_0)-\frac{1}{n}\bbeta_0^T(\what\bX-\bX)^T\what\bX (\what\bbeta-\bbeta_0)\\
&\le\mu\|\bbeta_0\|_1-\mu\|\what\bbeta\|_1+\left\|\frac{1}{n}\what\bX^T\bfeta-\frac{1}{n}\what\bX^T(\what\bX-\bX)\bbeta_0\right\|_{\infty} \|\what\bbeta-\bbeta_0\|_1.
\end{split}
\end{align}
Next, we condition on the event that \eqref{eq:gam_j_est} and \eqref{eq:gam_j_pred} in the proof of Theorem 1 hold for $j=1,\dots,p$, which occurs with probability at least $1-(pq)^{1-C^2/8}$, and find a probability bound for the event that
\begin{equation}\label{eq:mu_inf}
\left\|\frac{1}{n}\what\bX^T\bfeta-\frac{1}{n}\what\bX^T(\what\bX-\bX)\bbeta_0\right\|_{\infty}\le\frac{\mu}{2}.
\end{equation}
Substituting $\what\bX=\bZ\what\bGamma$ and $\bX=\bZ\bGamma_0+\bE$, we write
\begin{align*}
&\frac{1}{n}\what\bX^T\bfeta-\frac{1}{n}\what\bX^T(\what\bX-\bX)\bbeta_0\\
&\quad=\frac{1}{n}\smash{\what\bGamma}^T\bZ^T\bfeta-\frac{1}{n}\smash{\what\bGamma}^T\bZ^T(\bZ\what\bGamma-\bZ\bGamma_0-\bE)\bbeta_0\\
&\quad=\frac{1}{n}(\what\bGamma-\bGamma_0)^T\bZ^T\bfeta+\frac{1}{n}\bGamma_0^T\bZ^T\bfeta+\frac{1}{n}(\what\bGamma-\bGamma_0)^T\bZ^T\bE\bbeta_0 +\frac{1}{n}\bGamma_0^T\bZ^T\bE\bbeta_0\\
&\quad\pheq{}-\frac{1}{n}(\what\bGamma-\bGamma_0)^T\bZ^T\bZ(\what\bGamma-\bGamma_0)\bbeta_0-\frac{1}{n}\bGamma_0^T\bZ^T\bZ(\what\bGamma-\bGamma_0)\bbeta_0\\
&\quad\equiv T_1+T_2+T_3+T_4+T_5+T_6.
\end{align*}
To bound term $T_1$, it follows from \eqref{eq:gam_j_est}, the union bound, and the classical Gaussian tail bound that
\[
P\left(\|T_1\|_{\infty}\ge\frac{\mu}{12}\right)
\le P\left(\left\|\frac{1}{n}\bZ^T\bfeta\right\|_{\infty}\ge\frac{\kappa_1^2}{16r\lambda_{\max}}\cdot\frac{\mu}{12}\right)
\le q\exp\left\{-\frac{n}{2\sigma_{p+1}^2}\left(\frac{\kappa_1^2}{16r\lambda_{\max}}\cdot\frac{\mu}{12}\right)^2\right\}.
\]
Noting that $\|\bGamma_0\|_1\le L$, we have
\[
P\left(\|T_2\|_{\infty}\ge\frac{\mu}{12}\right)\le P\left(\left\|\frac{1}{n}\bZ^T\bfeta\right\|_{\infty}\ge\frac{\mu}{12L}\right)\le q\exp\left\{-\frac{n} {2\sigma_{p+1}^2}\left(\frac{\mu}{12L}\right)^2\right\}.
\]
To bound term $T_3$, using \eqref{eq:gam_j_est} and $\|\bbeta_0\|_1\le M$, we obtain
\begin{align*}
P\left(\|T_3\|_{\infty}\ge\frac{\mu}{12}\right)&\le P\left(\max_{1\le i\le q,\,1\le j\le p}\left|\frac{1}{n}\bz_i^T\bve_j\right|\ge\frac{\kappa_1^2}{16r\lambda_{\max}}\cdot \frac{\mu}{12M}\right)\\
&\le pq\exp\left\{-\frac{n}{2\sigma_{\max}^2}\left(\frac{\kappa_1^2}{16r\lambda_{\max}}\cdot\frac{\mu}{12M}\right)^2\right\},
\end{align*}
where $\bve_j$ is the $j$th column of the matrix $\bE$. Similarly,
\[
P\left(\|T_4\|_{\infty}\ge\frac{\mu}{12}\right)\le P\left\{\max_{1\le i\le q,\,1\le j\le p}\left|\frac{1}{n}\bz_i^T\bve_j\right|\ge\frac{\mu}{12LM}\right\}
\le pq\exp\left\{-\frac{n}{2\sigma_{\max}^2}\left(\frac{\mu}{12LM}\right)^2\right\}.
\]
To bound term $T_5$, it follows from \eqref{eq:gam_j_pred}, $\|\bbeta_0\|_1\le M$, and the Cauchy--Schwarz inequality that
\begin{align*}
\|T_5\|_{\infty}&\le M\max_{1\le i,j\le p}\left|\frac{1}{n}(\what\bgamma_i-\bgamma_{0i})^T\bZ^T\bZ(\what\bgamma_j-\bgamma_{0j})\right|\\
&\le M\max_{1\le i,j\le p}\frac{1}{n}\|\bZ(\what\bgamma_i-\bgamma_{0i})\|_2\|\bZ(\what\bgamma_j-\bgamma_{0j})\|_2\le M\frac{16r\lambda_{\max}^2}{\kappa_1^2}.
\end{align*}
Noting that $\|\bz_j\|_2=\sqrt{n}$ by standardization, we have
\begin{align*}
\|T_6\|_{\infty}&\le LM\max_{1\le i\le q,\,1\le j\le p}\left|\frac{1}{n}\bz_i^T\bZ(\what\bgamma_j-\bgamma_{0j})\right|\\
&\le LM\max_{1\le j\le p}\frac{1}{\sqrt{n}}\|\bZ(\what\bgamma_j-\bgamma_{0j})\|_2\le LM\frac{4\sqrt{r}\lambda_{\max}}{\kappa_1}.
\end{align*}
Combining these bounds and in view of the assumption $\kappa_1^{-2}r(\log p+\log q)=O(n)$, there exist constants $c_0,c_1,c_2>0$ such that, if we choose
\[
\mu=\frac{C_0}{\kappa_1}\sqrt{\frac{r(\log p+\log q)}{n}},
\]
where $C_0=c_0L\max(\sigma_{p+1},M\sigma_{\max})$, then \eqref{eq:mu_inf} holds with probability at least $1-c_1(pq)^{-c_2}$. This, together with \eqref{eq:lem_ineq}, implies
\[
\frac{1}{2n}\|\what\bX(\what\bbeta-\bbeta_0)\|_2^2\le\mu\|\bbeta_0\|_1-\mu\|\what\bbeta\|_1+\frac{\mu}{2}\|\what\bbeta-\bbeta_0\|_1.
\]
Adding $\mu\|\what\bbeta-\bbeta_0\|_1/2$ to both sides yields
\begin{align*}
&\frac{1}{2n}\|\what\bX(\what\bbeta-\bbeta_0)\|_2^2+\frac{\mu}{2}\|\what\bbeta-\bbeta_0\|_1 \le\mu(\|\bbeta_0\|_1-\|\what\bbeta\|_1+\|\what\bbeta-\bbeta_0\|_1)\\
&\quad=\mu(\|\bbeta_{0S}\|_1-\|\what\bbeta_S\|_1+\|\what\bbeta_S-\bbeta_{0S}\|_1)\le2\mu\|\what\bbeta_S-\bbeta_{0S}\|_1.
\end{align*}
This completes the proof of the lemma.
\end{proof}

\begin{proof}[Proof of Theorem \ref{thm:bet_est}]
We first note that \eqref{eq:lam} and \eqref{eq:rate1} imply that the condition $\kappa_1^{-2}r(\log p+\log q)=O(n)$ is satisfied. Then it follows from Lemma \ref{lem:ineq} that, with probability at least $1-c_1(pq)^{-c_2}$, we have
\begin{equation}\label{eq:thm2_ineq1}
\frac{1}{2n}\|\what\bX(\what\bbeta-\bbeta_0)\|_2^2\le2\mu\|\what\bbeta_S-\bbeta_{0S}\|_1\le2\mu\sqrt{s}\|\what\bbeta_S-\bbeta_{0S}\|_2
\end{equation}
and
\[
\frac{\mu}{2}\|\what\bbeta-\bbeta_0\|_1\le2\mu\|\what\bbeta_S-\bbeta_{0S}\|_1,\quad\text{or }
\|\what\bbeta_{S^c}-\bbeta_{S^c}^0\|_1\le3\|\what\bbeta_S-\bbeta_{0S}\|_1.
\]
The last inequality, the definition of $\kappa(\what\bX,s)$, and Lemma 1 together imply
\begin{equation}\label{eq:thm2_ineq2}
\|\what\bbeta_S-\bbeta_{0S}\|_2\le\frac{\|\what\bX(\what\bbeta-\bbeta_0)\|_2}{\sqrt{n}\kappa(\what\bX,s)} \le\frac{2\|\what\bX(\what\bbeta-\bbeta_0)\|_2}{\sqrt{n}\kappa_2}.
\end{equation}
Combining \eqref{eq:thm2_ineq1} and \eqref{eq:thm2_ineq2}, we obtain
\[
\|\what\bX(\what\bbeta-\bbeta_0)\|_2^2\le\frac{64}{\kappa_2^2}ns\mu^2
\]
and
\[
\|\what\bbeta-\bbeta_0\|_1\le4\|\what\bbeta_S-\bbeta_{0S}\|_1\le4\sqrt{s}\|\what\bbeta_S-\bbeta_{0S}\|_2\le\frac{64}{\kappa_2^2}s\mu.
\]
Substituting \eqref{eq:mu} for $\mu$ concludes the proof.
\end{proof}

\subsection{Proof of Theorem \ref{thm:bet_sel}}
Central to the proof of Theorem 3 is the following lemma, which shows that Condition (C3) also holds with high probability for the matrix $\what\bX=\bZ\what\bGamma$ and gives a useful bound for the inverse matrix norm $\|(\what\bC_{SS})^{-1}\|_{\infty}$.

\begin{lem}\label{lem:alpha}
Under Condition (C3), if the regularization parameters $\lambda_j$ are chosen to satisfy
\begin{equation}\label{eq:rate2'}
\frac{16\varphi}{\kappa_1^2}rs\lambda_{\max}(2L+\lambda_{\max})\le\frac{\alpha}{4-\alpha},
\end{equation}
then with probability at least $1-\sum_{j=1}^pq\exp(-n\lambda_j^2/(8\sigma_j^2))$, the matrix $\what\bC=n^{-1}\what\bX^T\what\bX =n^{-1}(\bZ\what\bGamma)^T\bZ\what\bGamma$, where $\what\bGamma$ is defined by \eqref{eq:opt1} with the $L_1$ penalty, satisfies
\begin{equation}\label{eq:lem3ineq1}
\|(\what\bC_{SS})^{-1}\|_{\infty}\le\frac{4-\alpha}{2(2-\alpha)}\varphi
\end{equation}
and
\begin{equation}\label{eq:lem3ineq2}
\|\what\bC_{S^cS}(\what\bC_{SS})^{-1}\|_{\infty}\le1-\frac{\alpha}{2}.
\end{equation}
\end{lem}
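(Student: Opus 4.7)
The plan is to transfer Condition (C3) from the noiseless matrix $\bC=n^{-1}(\bZ\bGamma_0)^T\bZ\bGamma_0$ to its plug-in counterpart $\what\bC$ by controlling the perturbation $\bDelta:=\what\bC-\bC$ in the entrywise max norm and then lifting to the matrix $\infty$-norm on the $S$- and $S^c$-submatrices. First I would expand each entry $(\what\bC-\bC)_{ij}$ via the same three-term decomposition used in the proof of Lemma \ref{lem:kappa}, condition on the event on which Theorem \ref{thm:gam_est} delivers \eqref{eq:gam_j_est}--\eqref{eq:gam_j_pred} for every $j$, and apply Cauchy--Schwarz together with the standardization $\|\bz_k\|_2=\sqrt{n}$ and $\|\bGamma_0\|_1\le L$. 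Exactly as in Lemma \ref{lem:kappa}, this produces
\[
\|\bDelta\|_{\max}\le\frac{16r}{\kappa_1^2}\lambda_{\max}(2L+\lambda_{\max}).
\]
Because $|S|\le s$, the elementary bound $\|\bA\|_{\infty}\le s\|\bA\|_{\max}$ (for any matrix with $s$ columns), combined with the hypothesis \eqref{eq:rate2'}, yields $\varphi\|\bDelta_{SS}\|_{\infty}\le\alpha/(4-\alpha)$ and $\varphi\|\bDelta_{S^cS}\|_{\infty}\le\alpha/(4-\alpha)$.

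The first inequality \eqref{eq:lem3ineq1} then follows from a standard Neumann-series argument. Writing $\what\bC_{SS}=\bC_{SS}(\bI+\bC_{SS}^{-1}\bDelta_{SS})$, submultiplicativity gives $\|\bC_{SS}^{-1}\bDelta_{SS}\|_{\infty}\le\varphi\|\bDelta_{SS}\|_{\infty}\le\alpha/(4-\alpha)<1$, so $\what\bC_{SS}^{-1}=(\bI+\bC_{SS}^{-1}\bDelta_{SS})^{-1}\bC_{SS}^{-1}$ is well defined and geometric summation yields
\[
\|(\bI+\bC_{SS}^{-1}\bDelta_{SS})^{-1}\|_{\infty}\le\frac{1}{1-\alpha/(4-\alpha)}=\frac{4-\alpha}{2(2-\alpha)}.
\]
Multiplying by $\|\bC_{SS}^{-1}\|_{\infty}=\varphi$ delivers \eqref{eq:lem3ineq1}.

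For the second inequality \eqref{eq:lem3ineq2} I would decompose
\[
\what\bC_{S^cS}\what\bC_{SS}^{-1}-\bC_{S^cS}\bC_{SS}^{-1}=\bDelta_{S^cS}\what\bC_{SS}^{-1}+\bC_{S^cS}\bigl(\what\bC_{SS}^{-1}-\bC_{SS}^{-1}\bigr),
\]
and substitute the identity $\what\bC_{SS}^{-1}-\bC_{SS}^{-1}=-\bC_{SS}^{-1}\bDelta_{SS}\what\bC_{SS}^{-1}$ into the second term. The first piece is bounded by $\|\bDelta_{S^cS}\|_{\infty}\|\what\bC_{SS}^{-1}\|_{\infty}$, while by Condition (C3) the second is bounded by $(1-\alpha)\|\bDelta_{SS}\|_{\infty}\|\what\bC_{SS}^{-1}\|_{\infty}$. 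Adding $\|\bC_{S^cS}\bC_{SS}^{-1}\|_{\infty}\le1-\alpha$ from the triangle inequality and inserting \eqref{eq:lem3ineq1}, the total is at most $(1-\alpha)+(2-\alpha)\varphi s\|\bDelta\|_{\max}\cdot\{(4-\alpha)/[2(2-\alpha)]\}$, and the factor $(2-\alpha)$ cancels so that $\varphi s\|\bDelta\|_{\max}\le\alpha/(4-\alpha)$ forces the correction to be exactly $\alpha/2$, giving $1-\alpha/2$.

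The argument is conceptually routine once the entrywise control on $\bDelta$ is in place; the only delicate point is the constant bookkeeping, and in particular verifying the algebraic identity $(1-\alpha)+\{(4-\alpha)/[2(2-\alpha)]\}(2-\alpha)\cdot\{\alpha/(4-\alpha)\}=1-\alpha/2$, which is what pins down the precise coefficient $\alpha/(4-\alpha)$ on the right-hand side of \eqref{eq:rate2'} and the target bounds $(4-\alpha)\varphi/[2(2-\alpha)]$ and $1-\alpha/2$.
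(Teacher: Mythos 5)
Your proposal is correct and follows essentially the same route as the paper: the same entrywise perturbation bound inherited from the proof of Lemma \ref{lem:kappa}, the same lift to the $\infty$-norm via the factor $s$, and the identical decomposition $\bDelta_{S^cS}(\what\bC_{SS})^{-1}-\bC_{S^cS}(\bC_{SS})^{-1}\bDelta_{SS}(\what\bC_{SS})^{-1}$ for \eqref{eq:lem3ineq2}. The only cosmetic difference is that you derive \eqref{eq:lem3ineq1} directly from the Neumann-series factorization $(\what\bC_{SS})^{-1}=(\bI+(\bC_{SS})^{-1}\bDelta_{SS})^{-1}(\bC_{SS})^{-1}$, whereas the paper cites the equivalent matrix-inversion perturbation bound of Horn and Johnson and then applies the triangle inequality; both give the same constant.
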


\begin{proof}
We condition on the event that \eqref{eq:gam_j_est} and \eqref{eq:gam_j_pred} in the proof of Theorem 1 hold for $j=1,\dots,p$, which occurs with probability at least $1-\sum_{j=1}^pq\exp(-n\lambda_j^2/(8\sigma_j^2))$. From the proof of Lemma \ref{lem:kappa}, we have
\[
\max_{1\le i,j\le p}\frac{1}{n}|\what\bx_i^T\what\bx_j-(\bZ\bgamma_{0i})^T\bZ\bgamma_{0j}|\le\frac{16}{\kappa_1^2}r\lambda_{\max}(2L+\lambda_{\max}).
\]
This, along with the assumption \eqref{eq:rate2'}, gives
\begin{equation}\label{eq:CSS}
\varphi\|\what\bC_{SS}-\bC_{SS}\|_{\infty}\le\frac{16\varphi}{\kappa_1^2}rs\lambda_{\max}(2L+\lambda_{\max})\le\frac{\alpha}{4-\alpha},
\end{equation}
and similarly,
\begin{equation}\label{eq:CScS}
\varphi\|\what\bC_{S^cS}-\bC_{S^cS}\|_{\infty}\le\frac{\alpha}{4-\alpha}.
\end{equation}
Then, by an error bound for matrix inversion \citep[p.\,336]{Horn:John:matr:1985}, we have
\[
\|(\what\bC_{SS})^{-1}-(\bC_{SS})^{-1}\|_{\infty}\le\frac{\varphi\|\what\bC_{SS}-\bC_{SS}\|_{\infty}}{1-\varphi\|\what\bC_{SS}-\bC_{SS}\|_{\infty}} \varphi\le\frac{\alpha}{2(2-\alpha)}\varphi.
\]
The triangle inequality implies
\[
\|(\what\bC_{SS})^{-1}\|_{\infty}\le\|(\bC_{SS})^{-1}\|_{\infty}+\|(\what\bC_{SS})^{-1}-(\bC_{SS})^{-1}\|_{\infty}\le\varphi+\frac{\alpha}{2(2-\alpha)}\varphi =\frac{4-\alpha}{2(2-\alpha)}\varphi,
\]
which proves \eqref{eq:lem3ineq1}.

To show inequality \eqref{eq:lem3ineq2}, we write
\[
\what\bC_{S^cS}(\what\bC_{SS})^{-1}-\bC_{S^cS}(\bC_{SS})^{-1}=(\what\bC_{S^cS}-\bC_{S^cS})(\what\bC_{SS})^{-1}-\bC_{S^cS}(\bC_{SS})^{-1} (\what\bC_{SS}-\bC_{SS})(\what\bC_{SS})^{-1}.
\]
Then it follows from \eqref{eq:lem3ineq1}, \eqref{eq:CSS}, \eqref{eq:CScS}, and Condition (C3) that
\begin{align*}
&\|\what\bC_{S^cS}(\what\bC_{SS})^{-1}-\bC_{S^cS}(\bC_{SS})^{-1}\|_{\infty}\\
&\quad\le\|\what\bC_{S^cS}-\bC_{S^cS}\|_{\infty}\|(\what\bC_{SS})^{-1}\|_{\infty}+\|\bC_{S^cS}(\bC_{SS})^{-1}\|_{\infty} \|\what\bC_{SS}-\bC_{SS}\|_{\infty}\|(\what\bC_{SS})^{-1}\|_{\infty}\\
&\quad\le\frac{\alpha}{(4-\alpha)\varphi}\frac{4-\alpha}{2(2-\alpha)}\varphi+(1-\alpha)\frac{\alpha}{(4-\alpha)\varphi}\frac{4-\alpha}{2(2-\alpha)}\varphi =\frac{\alpha}{2},
\end{align*}
which, along with Condition (C3), proves \eqref{eq:lem3ineq2}. This completes the proof of the lemma.
\end{proof}

\begin{proof}[Proof of Theorem \ref{thm:bet_sel}]
For an index set $J$, let $\bX_J$ and $\what\bX_J$ denote the submatrices formed by the $j$th columns of $\bX$ and $\what\bX$ with $j\in J$, respectively. The optimality conditions for $\what\bbeta\in\mathbb{R}^p$ to be a solution to problem \eqref{eq:opt2} with the $L_1$ penalty can be written as
\begin{equation}\label{eq:kkt1}
\frac{1}{n}\what\bX_{\what{S}}^T(\by-\what\bX\what\bbeta)=\mu\sgn(\what\bbeta_{\what{S}})
\end{equation}
and
\begin{equation}\label{eq:kkt2}
\left\|\frac{1}{n}\what\bX_{{\what{S}}^c}^T(\by-\what\bX\what\bbeta)\right\|_{\infty}\le\mu.
\end{equation}
It suffices to find a $\what\bbeta\in\mathbb{R}^p$ with the desired properties such that \eqref{eq:kkt1} and \eqref{eq:kkt2} hold. Let $\what\bbeta_{S^c}=\bzero$. The idea of the proof is to first determine $\what\bbeta_S$ from \eqref{eq:kkt1}, and then show that thus obtained $\what\bbeta$ also satisfies \eqref{eq:kkt2}.

Using similar arguments to those in the proof of Lemma \ref{lem:ineq}, we can show that, there exist constants $c_0,c_1,c_2>0$ such that, if we can choose $\mu$ as before, then with probability at least $1-c_1(pq)^{-c_2}$, it holds that
\begin{equation}\label{eq:mu_inf_alp}
\left\|\frac{1}{n}\what\bX^T\bfeta-\frac{1}{n}\what\bX^T(\what\bX-\bX)\bbeta_0\right\|_{\infty}\le\frac{\alpha}{4-\alpha}\mu.
\end{equation}
From now on, we condition on the event that \eqref{eq:mu_inf_alp} holds and analyze conditions \eqref{eq:kkt1} and \eqref{eq:kkt2}.

We first determine $\what\bbeta_S$ from \eqref{eq:kkt1}. By substituting
\begin{equation}\label{eq:y_sub}
\by-\what\bX\what\bbeta=\bX_S\bbeta_{0S}+\bfeta-\what\bX_S\what\bbeta_S=\bfeta-(\what\bX_S-\bX_S)\bbeta_{0S}-\what\bX_S (\what\bbeta_S-\bbeta_{0S}),
\end{equation}
we write \eqref{eq:kkt1} with $\what{S}$ replaced by $S$ in the form
\[
\frac{1}{n}\what\bX_S^T\bfeta-\frac{1}{n}\what\bX_S^T(\what\bX_S-\bX_S)\bbeta_{0S}-\what\bC_{SS}(\what\bbeta_S-\bbeta_{0S}) =\mu\sgn(\what\bbeta_S),
\]
or
\begin{equation}\label{eq:bet_def}
\what\bbeta_S-\bbeta_{0S}=(\what\bC_{SS})^{-1}\left\{\frac{1}{n}\what\bX_S^T\bfeta-\frac{1}{n}\what\bX_S^T(\what\bX_S-\bX_S)\bbeta_{0S} -\mu\sgn(\what\bbeta_S)\right\}.
\end{equation}
This, along with \eqref{eq:lem3ineq1} and \eqref{eq:mu_inf_alp}, leads to
\begin{align*}
\|\what\bbeta_S-\bbeta_{0S}\|_{\infty}&\le\|(\what\bC_{SS})^{-1}\|_{\infty}\left\{\left\|\frac{1}{n}\what\bX_S^T\bfeta-\frac{1}{n}\what\bX_S^T (\what\bX_S-\bX_S)\bbeta_{0S}\right\|_{\infty}+\mu\right\}\\
&\le\frac{4-\alpha}{2(2-\alpha)}\varphi\left(\frac{\alpha}{4-\alpha}\mu+\mu\right)=\frac{2}{2-\alpha}\varphi\mu<b_0
\end{align*}
by assumption, which entails that $\sgn(\what\bbeta_S)=\sgn(\bbeta_{0S})$. Since $\what\bbeta_{S^c}=\bbeta_{S^c}^0=\bzero$ by definition, we have $\what{S}=S$. Let $\what\bbeta_S$ be defined by \eqref{eq:bet_def} with $\sgn(\what\bbeta_S)$ replaced by $\sgn(\bbeta_{0S})$. Clearly, thus defined $\what\bbeta$ satisfies the desired properties and \eqref{eq:kkt1}.

It remains to show that $\what\bbeta$ also satisfies \eqref{eq:kkt2}. By substituting \eqref{eq:y_sub} and \eqref{eq:bet_def}, we write
\begin{align*}
\frac{1}{n}\what\bX_{S^c}^T(\by-\what\bX\what\bbeta)&=\frac{1}{n}\what\bX_{S^c}^T\bfeta-\frac{1}{n}\what\bX_{S^c}^T(\what\bX_S-\bX_S)\bbeta_{0S}\\ &\pheq{}-\what\bC_{S^cS}(\what\bC_{SS})^{-1}\left\{\frac{1}{n}\what\bX_S^T\bfeta-\frac{1}{n}\what\bX_S^T(\what\bX_S-\bX_S)\bbeta_{0S} -\mu\sgn(\what\bbeta_S)\right\}.
\end{align*}
Then it follows from \eqref{eq:lem3ineq2} and \eqref{eq:mu_inf_alp} that
\begin{align*}
\left\|\frac{1}{n}\what\bX_{S^c}^T(\by-\what\bX\what\bbeta)\right\|_{\infty}
&\le\left\|\frac{1}{n}\what\bX_{S^c}^T\bfeta-\frac{1}{n}\what\bX_{S^c}^T(\what\bX_S-\bX_S)\bbeta_{0S}\right\|_{\infty}\\
&\pheq{}+\|\what\bC_{S^cS}(\what\bC_{SS})^{-1}\|_{\infty}\left\{\left\|\frac{1}{n}\what\bX_S^T\bfeta-\frac{1}{n}\what\bX_S^T(\what\bX_S-\bX_S) \bbeta_{0S}\right\|_{\infty}+\mu\right\}\\
&\le\frac{\alpha}{4-\alpha}\mu+\left(1-\frac{\alpha}{2}\right)\left(\frac{\alpha}{4-\alpha}\mu+\mu\right)=\mu.
\end{align*}
Since $\what{S}=S$, we see that $\what\bbeta$ also satisfies \eqref{eq:kkt2}, which concludes the proof.
\end{proof}

\section*{Supplementary Materials}
The supplementary material contains the proof of Theorem 4.

\bibliographystyle{jasa}
\bibliography{iv}

\end{document}


\title{Supplementary Material for ``Regularization Methods for High-Dimensional Instrumental Variables Regression With an Application to Genetical Genomics''}
\author{Wei Lin, Rui Feng, and Hongzhe Li}
\date{}
\maketitle

\appendix
\addtocounter{section}{19}
\setcounter{equation}{0}
\numberwithin{equation}{section}
\numberwithin{lem}{section}

\subsection*{Proof of Theorem 4}
We first present two lemmas that are essential to the proof of Theorem 4, which concern the concentration of the empirical covariance matrix $\what\bC$ around its population version $\bC$ and the score vector
\[
\frac{1}{n}\what\bX^T(\by-\what\bX\bbeta_0)=\frac{1}{n}\what\bX^T\bfeta-\frac{1}{n}\what\bX^T(\what\bX-\bX)\bbeta_0
\]
around zero. These lemmas can be viewed as generalizations of Lemma A.3 and inequality (A.15), respectively. For ease of presentation, we condition on the event of probability $1-\pi_0$ that the two error bounds in Condition (C4) hold, and incorporate the probability $\pi_0$ into the result by the union bound.

\begin{lem}\label{lem:conc_mat}
Under Conditions (C4)--(C6), if $\mu_0>0$ and the first-stage error bounds $e_1$ and $e_2$ satisfy
\begin{equation}\label{eq:rate_gen'}
s(2Le_1+e_2)\le\frac{\alpha}{(4-\alpha)\varphi}\wedge\frac{(\mu_0/2)^2}{s},
\end{equation}
then with probability at least $1-\pi_0$, the following inequalities holds:
\begin{align}
\|(\what\bC_{SS})^{-1}\|_{\infty}&\le\frac{4-\alpha}{2(2-\alpha)}\varphi,\label{eq:conc_mat1}\\
\|\what\bC_{S^cS}(\what\bC_{SS})^{-1}\|_{\infty}&\le\left\{\left(1-\frac{\alpha}{2}\right)\frac{\rho'(0+)}{\rho'_{\mu}(b_0/2)}\right\}\wedge(2cn^{\nu}), \label{eq:conc_mat2}
\end{align}
and
\begin{equation}\label{eq:conc_mat3}
\Lambda_{\min}(\what\bC_{SS})>\mu\tau_0.
\end{equation}
\end{lem}

\begin{proof}
It follows from the arguments in the proof of Lemma A.1 and Condition (C4) that
\[
\max_{1\le i,j\le p}\frac{1}{n}|\what\bx_i^T\what\bx_j-(\bZ\bgamma_{0i})^T\bZ\bgamma_{0j}|\le2Le_1+e_2.
\]
Consequently, by the assumption \eqref{eq:rate_gen'},
\begin{equation}\label{eq:CSSgen}
\varphi\|\what\bC_{SS}-\bC_{SS}\|_{\infty}\le\varphi s(2Le_1+e_2)\le\frac{\alpha}{4-\alpha}
\end{equation}
and
\begin{equation}\label{eq:CScSgen}
\varphi\|\what\bC_{S^cS}-\bC_{S^cS}\|_{\infty}\le\frac{\alpha}{4-\alpha}.
\end{equation}
Then inequality \eqref{eq:conc_mat1} follows as in the proof of Lemma A.3.

To show inequality \eqref{eq:conc_mat2}, by \eqref{eq:conc_mat1}, \eqref{eq:CSSgen}, \eqref{eq:CScSgen}, and Condition (C6), we have
\begin{align*}
&\|\what\bC_{S^cS}(\what\bC_{SS})^{-1}-\bC_{S^cS}(\bC_{SS})^{-1}\|_{\infty}\\
&\quad\le\|\what\bC_{S^cS}-\bC_{S^cS}\|_{\infty}\|(\what\bC_{SS})^{-1}\|_{\infty}+\|\bC_{S^cS}(\bC_{SS})^{-1}\|_{\infty} \|\what\bC_{SS}-\bC_{SS}\|_{\infty}\|(\what\bC_{SS})^{-1}\|_{\infty}\\
&\quad\le\frac{\alpha}{(4-\alpha)\varphi}\frac{4-\alpha}{2(2-\alpha)}\varphi +\left[\left\{(1-\alpha)\frac{\rho'(0+)}{\rho_{\mu}'(b_0/2)}\right\}\wedge(cn^{\nu})\right]\frac{\alpha}{(4-\alpha)\varphi}\frac{4-\alpha}{2(2-\alpha)}\varphi\\
&\quad\le\frac{\alpha}{2(2-\alpha)}+\left\{\frac{\alpha(1-\alpha)}{2(2-\alpha)}\frac{\rho'(0+)}{\rho_{\mu}'(b_0/2)}\right\}\wedge\left(\frac{c}{2}n^{\nu} \right)\\
&\quad\le\left\{\frac{\alpha}{2}\frac{\rho'(0+)}{\rho_{\mu}'(b_0/2)}\right\}\wedge(cn^{\nu}),
\end{align*}
where we have used the inequalities $\rho'(0+)/\rho_{\mu}'(b_0/2)\ge1$ and $\alpha/\{2(2-\alpha)\}\le1/2\le cn^{\nu}/2$. This, along with Condition (C6), implies \eqref{eq:conc_mat2}.

Finally, it follows from the Hoffman--Wielandt inequality (Horn and Johnson 1985) and the assumption \eqref{eq:rate_gen'} that
\[
|\Lambda_{\min}(\what\bC_{SS})-\Lambda_{\min}(\bC_{SS})|^2\le\|\what\bC_{SS}-\bC_{SS}\|_F^2\le s^2(2Le_1+e_2)\le\left(\frac{\mu_0}{2}\right)^2.
\]
In view of the definition of $\mu_0$, inequality \eqref{eq:conc_mat3} follows. This completes the proof of the lemma.
\end{proof}

\begin{lem}\label{lem:conc_vec}
Under Conditions (C4)--(C6), if the first-stage error bounds satisfy $e_1=O(1)$ and $e_2=O(1)$, then there exist constants $c_0,c_1,c_2>0$ such that, if we choose
\[
\mu\ge C_0n^{\nu}\sqrt{\frac{\log p+\log q}{n}\vee e_2},
\]
where $C_0=c_0L\max(\sigma_{p+1},M\sigma_{\max},M)$, then with probability at least $1-\pi_0-c_1(pq)^{-c_2}$, it holds that
\begin{equation}\label{eq:mu_inf_gen}
\left\|\frac{1}{n}\what\bX^T\bfeta-\frac{1}{n}\what\bX^T(\what\bX-\bX)\bbeta_0\right\|_{\infty}<\frac{\alpha}{6cn^{\nu}}\mu\rho'(0+).
\end{equation}
\end{lem}

\begin{proof}
As in the proof of Lemma A.2, we write $n^{-1}\what\bX^T\bfeta-n^{-1}\what\bX^T(\what\bX-\bX)\bbeta_0=T_1+\cdots+T_6$. Letting $t_0=\alpha\mu\rho'(0+)/(6cn^{\nu})$, we bound the six terms similarly as follows:
\begin{gather*}
P\left(\|T_1\|_{\infty}\ge\frac{t_0}{6}\right)\le P\left(\left\|\frac{1}{n}\bZ^T\bfeta\right\|_{\infty}\ge\frac{t_0}{6e_1}\right)\le q\exp\left\{-\frac{n}{2\sigma_{p+1}^2}\left(\frac{t_0}{6e_1}\right)^2\right\},\\
P\left(\|T_2\|_{\infty}\ge\frac{t_0}{6}\right)\le P\left(\left\|\frac{1}{n}\bZ^T\bfeta\right\|_{\infty}\ge\frac{t_0}{6L}\right)\le q\exp\left\{-\frac{n}{2\sigma_{p+1}^2}\left(\frac{t_0}{6L}\right)^2\right\},\\
P\left(\|T_3\|_{\infty}\ge\frac{t_0}{6}\right)\le P\left(\max_{1\le i\le q,\,1\le j\le p}\left|\frac{1}{n}\bz_i^T\bve_j\right|_{\infty} \ge\frac{t_0}{6Me_1}\right)\le pq\exp\left\{-\frac{n}{2\sigma_{\max}^2}\left(\frac{t_0}{6Me_1}\right)^2\right\},\\
P\left(\|T_4\|_{\infty}\ge\frac{t_0}{6}\right)\le P\left(\max_{1\le i\le q,\,1\le j\le p}\left|\frac{1}{n}\bz_i^T\bve_j\right|_{\infty} \ge\frac{t_0}{6LM}\right)\le pq\exp\left\{-\frac{n}{2\sigma_{\max}^2}\left(\frac{t_0}{6LM}\right)^2\right\},\\
\|T_5\|_{\infty}\le M\max_{1\le i,j\le p}\frac{1}{n}\|\bZ(\what\bgamma_i-\bgamma_{0i})\|_2\|\bZ(\what\bgamma_j-\bgamma_{0j})\|_2\le Me_2,
\end{gather*}
and
\[
\|T_6\|_{\infty}\le LM\max_{1\le j\le p}\frac{1}{\sqrt{n}}\|\bZ(\what\bgamma_j-\bgamma_{0j})\|_2\le LM\sqrt{e_2}.
\]
Combining these bounds and in view of the assumptions $e_1=O(1)$ and $e_2=O(1)$, there exist constants $c_0,c_1,c_2>0$ such that, if we choose
\[
\mu\ge C_0n^{\nu}\sqrt{\frac{\log p+\log q}{n}\vee e_2},
\]
where $C_0=c_0L\max(\sigma_{p+1},M\sigma_{\max},M)$, then with probability at least $1-\pi_0-c_1(pq)^{-c_2}$, the desired inequality holds. The completes the proof of the lemma.
\end{proof}

\begin{proof}[Proof of Theorem 4]
One can easily show that $\what\bbeta\in\mathbb{R}^p$ is a strict local minimizer of problem (4) if the following conditions hold:
\begin{gather}
\frac{1}{n}\what\bX_{\what{S}}^T(\by-\what\bX\what\bbeta)=\mu\rho'_{\mu}(|\what\bbeta_{\what{S}}|)\circ\sgn(\what\bbeta_{\what{S}}),\label{eq:opt_cond1}\\
\left\|\frac{1}{n}\what\bX_{\what{S}^c}^T(\by-\what\bX\what\bbeta)\right\|_{\infty}<\mu\rho'(0+),\label{eq:opt_cond2}
\end{gather}
and
\begin{equation}\label{eq:opt_cond3}
\Lambda_{\min}(\what\bC_{\what{S}\what{S}})>\mu\tau(\rho_{\mu};\what\bbeta_{\what{S}}),
\end{equation}
where $\circ$ denotes the Hadamard (entrywise) product, and $|\cdot|$, $\rho'_{\mu}(\cdot)$, and $\sgn(\cdot)$ are applied componentwise. It suffices to find a $\what\bbeta\in\mathbb{R}^p$ with the desired properties such that conditions \eqref{eq:opt_cond1}--\eqref{eq:opt_cond3} hold. Let $\what\bbeta_{S^c}=\bzero$. The idea of the proof is to first determine $\what\bbeta_S$ from \eqref{eq:opt_cond1}, and then show that thus obtained $\what\bbeta$ also satisfies \eqref{eq:opt_cond2} and \eqref{eq:opt_cond3}.

From now on, we condition on the event of probability at least $1-\pi_0-c_1(pq)^{-c_2}$ that the inequalities in Lemmas \ref{lem:conc_mat} and \ref{lem:conc_vec} hold. Using similar arguments to those in the proof of Theorem 3, \eqref{eq:opt_cond1} with $\what{S}$ replaced by $S$ can be written in the form
\begin{equation}\label{eq:fix_pt}
\what\bbeta_S-\bbeta_{0S}=(\what\bC_{SS})^{-1}\left\{\frac{1}{n}\what\bX_S^T\bfeta-\frac{1}{n}\what\bX_S^T(\what\bX_S-\bX_S)\bbeta_{0S} -\mu\rho'_{\mu}(|\what\bbeta_S|)\circ\sgn(\what\bbeta_S)\right\}.
\end{equation}
Define the function $f\colon\mathbb{R}^s\to\mathbb{R}^s$ by $f(\btheta)=\bbeta_{0S}+(\what\bC_{SS})^{-1}\{n^{-1}\what\bX_S^T\bfeta-n^{-1}\what\bX_S^T(\what\bX_S-\bX_S)\bbeta_{0S} -\mu\rho'_{\mu}(|\btheta|)\circ\sgn(\btheta)\}$, and let $\mathcal{K}$ denote the hypercube $\{\btheta\in\mathbb{R}^s\colon\|\btheta-\bbeta_{0S}\|_{\infty}\le 7\varphi\mu\rho'(0+)/4\}$. It follows from \eqref{eq:conc_mat1}, \eqref{eq:mu_inf_gen}, and Condition (C4) that, for $\btheta\in\mathcal{K}$,
\begin{align*}
\|f(\btheta)-\bbeta_{0S}\|_{\infty}&\le\|(\what\bC_{SS})^{-1}\|_{\infty}\left\{\left\|\frac{1}{n}\what\bX_S^T\bfeta-\frac{1}{n}\what\bX_S^T(\what\bX_S-\bX_S) \bbeta_{0S}\right\|_{\infty}+\mu\rho'(0+)\right\}\\
&\le\frac{4-\alpha}{2(2-\alpha)}\varphi\left\{\frac{\alpha}{6cn^{\nu}}\mu\rho'(0+)+\mu\rho'(0+)\right\}\\
&\le\frac{3}{2}\varphi\left\{\frac{1}{6}\mu\rho'(0+)+\mu\rho'(0+)\right\}=\frac{7}{4}\varphi\mu\rho'(0+),
\end{align*}
that is, $f(\mathcal{K})\subset\mathcal{K}$. Also, the last inequality and the assumption (14) imply that for $\btheta\in\mathcal{K}$, $\|\btheta-\bbeta_{0S}\|_{\infty}\le b_0/2$, and hence $\sgn(\btheta)=\sgn(\bbeta_{0S})$. Thus, in view of Condition (C4), $f$ is a continuous function on the convex, compact hypercube $\mathcal{K}$. An application of Brouwer's fixed point theorem yields that equation \eqref{eq:fix_pt} has a solution $\what\bbeta_S$ in $\mathcal{K}$. Moreover, $\sgn(\what\bbeta_S)=\sgn(\bbeta_{0S})$, so that $\what{S}=S$. Therefore, we have found a $\what\bbeta$ that satisfies the desired properties and \eqref{eq:opt_cond1}.

To verify that $\what\bbeta$ satisfies \eqref{eq:opt_cond2}, by substituting \eqref{eq:fix_pt}, we write
\begin{align*}
\frac{1}{n}\what\bX_{S^c}^T(\by-\what\bX\what\bbeta)&=\frac{1}{n}\what\bX_{S^c}^T\bfeta-\frac{1}{n}\what\bX_{S^c}^T(\what\bX_S-\bX_S)\bbeta_{0S}\\ &\pheq{}-\what\bC_{S^cS}(\what\bC_{SS})^{-1}\left\{\frac{1}{n}\what\bX_S^T\bfeta-\frac{1}{n}\what\bX_S^T(\what\bX_S-\bX_S)\bbeta_{0S} -\mu\rho'_{\mu}(|\what\bbeta_S|)\circ\sgn(\what\bbeta_S)\right\}.
\end{align*}
Also, we have $\|\what\bbeta_S\|_{\infty}=\|\what\bbeta_{0S}+(\what\bbeta_S-\bbeta_{0S})\|_{\infty}\ge\|\what\bbeta_{0S}\|_{\infty}-\|\what\bbeta_S- \bbeta_{0S}\|_{\infty}\ge b_0-b_0/2=b_0/2$. This, together with \eqref{eq:conc_mat2}, \eqref{eq:mu_inf_gen}, and Condition (C4), leads to
\begin{align*}
\left\|\frac{1}{n}\what\bX_{S^c}^T(\by-\what\bX\bbeta)\right\|_{\infty} &\le\left\|\frac{1}{n}\what\bX_{S^c}^T\bfeta-\frac{1}{n}\what\bX_{S^c}^T(\what\bX_S-\bX_S)\bbeta_{0S}\right\|_{\infty} +\|\what\bC_{S^cS}(\what\bC_{SS})^{-1}\|_{\infty}\\
&\pheq{}\times\left\{\left\|\frac{1}{n}\what\bX_S^T\bfeta-\frac{1}{n}\what\bX_S^T(\what\bX_S-\bX_S)\bbeta_{0S} \right\|_{\infty}+\mu\rho'_{\mu}(b_0/2)\right\}\\
&<\frac{\alpha}{6cn^{\nu}}\mu\rho'(0+)+2cn^{\nu}\cdot\frac{\alpha}{6cn^{\nu}}\mu\rho'(0+) +\left(1-\frac{\alpha}{2}\right)\frac{\rho'(0+)}{\rho'_{\mu}(b_0/2)}\cdot\mu\rho'_{\mu}(b_0/2)\\
&\le\frac{\alpha}{6}\mu\rho'(0+)+\frac{\alpha}{3}\mu\rho'(0+)+\left(1-\frac{\alpha}{2}\right)\mu\rho'(0+)=\mu\rho'(0+).
\end{align*}

Finally, it follows from \eqref{eq:conc_mat3} and the definition of $\tau_0$ that $\Lambda_{\min}(\what\bC_{SS})>\mu\tau_0\ge\mu\tau(\rho_{\mu};\what\bbeta_S)$, which verifies \eqref{eq:opt_cond3} and completes the proof.
\end{proof}